\definecolor{red}{rgb}{1,0,0}
\definecolor{green}{rgb}{0,1,0}
\definecolor{SeaGreen}{RGB}{46,139,87}
\definecolor{Maroon}{RGB}{128,0,0}
\newcommand{\N}{\mathbb{N}}
\newcommand{\C}{{\mathbb{C}}}
\newcommand{\R}{{\mathbb{R}}}
\newcommand{\A}{\mathcal A}
\newcommand{\B}{\mathcal B}
\def\Dg {{\mathcal D}}
\newcommand{\LL}{\mathcal L}
\newcommand{\OO}{\mathcal O}
\def\Sg {{\mathcal S}}
\def\Vg {{\mathcal V}}
\newcommand{\sca}[2]{\langle#1,#2\rangle}
\def\sign{\text{\rm sign\,}}
\newcommand{\supp}{\rm Supp\,}
\renewcommand {\Re}{{\rm Re\,}}
\renewcommand{\Im}{{\rm Im\,}}
\newcommand {\pa}{\partial}
\def\0{\mathbf  0}
\newcommand{\eq}{\begin{equation}}
\newcommand{\eeq}{\end{equation}}
\def\XXint#1#2#3{{\setbox0=\hbox{$#1{#2#3}{\int}$ }
\vcenter{\hbox{$#2#3$ }}\kern-.6\wd0}}
\newcommand{\Union}{\mathop{\bigcup}\limits}
\numberwithin{equation}{section}
\theoremstyle{plain}
\newtheorem{theorem}{Theorem}[section]
\newtheorem{lemma}[theorem]{Lemma}
\newtheorem{proposition}[theorem]{Proposition}
\newtheorem{definition}[theorem]{Definition}
\newtheorem{remark}[theorem]{Remark}
\newtheorem{corollary}[theorem]{Corollary}
\title[Spectral analysis of a complex Schr\"odinger operator]{Spectral
  analysis of a complex Schr\"odinger operator in the semiclassical
  limit} 
\author{Yaniv Almog}
\address{Department of Mathematics\\
  Louisiana State University\\
  Baton Rouge, Louisiana 70803}
\email[Y.~Almog]{almog@math.lsu.edu}
\author{Rapha\"{e}l Henry}
\address{Laboratoire de Math\'ematiques d'Orsay\\
Univ. Paris-Sud, CNRS, Universit\'e Paris-Saclay\\
91405 Orsay, France.}
\email[R.~Henry]{raphael.henry@math.u-psud.fr}
\date{}
\begin{document}
\bibliographystyle{siam}

\maketitle

\begin{abstract}
  We consider the Dirichlet realization of the operator $-h^2\Delta+iV$ in
  the semi-classical limit $h\to0$, where $V$ is a smooth real
  potential with no critical points. For a one dimensional setting, we
  obtain the complete asymptotic expansion, in powers of $h$, of each
  eigenvalue. In two dimensions we obtain the left margin of the
  spectrum, under some additional assumptions.
\end{abstract}
\section{Introduction}
\label{sec:1}
We consider the operator
\begin{subequations}
  \label{eq:1}
  \begin{equation}\label{eq:1a}
\A_h = -h^2\Delta + iV \,,
\end{equation}
defined on
\begin{equation}
  D(\A_h)=H_0^1(\Omega,\C)\cap H^2(\Omega,\C)\,,
\end{equation}
\end{subequations}
where $\Omega$ is a bounded domain in $\mathbb{R}^2$.\\
We seek an approximation for $\inf \Re\sigma(\A_h)$ in the limit $h\to0$.
The domain $\Omega$ is smooth, i.e., $\partial\Omega\subset C^3$ and the potential
$V$ is at least in $C^3(\bar\Omega,\R)$. Let $\partial\Omega_\perp$ denote a subset of
$\partial\Omega$ where $\nabla V\perp\partial\Omega$. Note that in view of the continuity of $V$
on $\partial\Omega$, we must have $\partial\Omega_\perp\neq\emptyset$. Let $x_0\in\partial\Omega_\perp$ satisfy
\begin{displaymath}
  c_m=|\nabla V(x_0)| = \min_{x\in\partial\Omega_\perp}|\nabla V(x)| \,. 
\end{displaymath}

Denote by $\Sg$ the set 
\begin{equation}\label{eq:2}
  \Sg=\{x\in\partial\Omega_\perp\,: \,|\nabla V(x)| = |\nabla V(x_0)|  \,, \;V(x)=V(x_0)\}\,.
\end{equation}
(Note that in the case where $x_0$ is not unique, $\Sg$ depends on the
choice of $x_0$.) 
For every $x\in\Sg$ set
\begin{displaymath}
  c(x) = \nabla V(x)\cdot\nu(x) = \pm c_m \,,
\end{displaymath}
and 
\begin{displaymath}
  \alpha(x) = t\cdot D^2V(x)t- \kappa(x)\frac{\partial V}{\partial\nu}(x)  \quad       t\cdot\nu(x)=0\,, \;
      |t|=1 \,,
\end{displaymath}
where $\nu$ is the outward normal and $\kappa$ denotes the local curvature.
(Note that $\alpha=\partial^2V/\partial s^2$ where $s$ is the arclength on $\partial\Omega$.) We
now assume that
\begin{equation}
  \label{eq:3}
\alpha(x)c(x) > 0 \quad \forall x\in\Sg \,.
\end{equation}
Without any loss of generality we may then assume $\alpha(x)>0$ in $\Sg$,
otherwise we may consider $\bar{\A}_h$ instead of $\A_h$. 

The spectral analysis of \eqref{eq:1} has several applications in
Mathematical Physics, among them are the Orr-Sommerfeld equations in
fluid dynamics \cite{sh03}, the Ginzburg-Landau equation in the
presence of electric current (when magnetic field effects are
neglected), and the null controllability of Kolomogorov type equations
\cite{bhhr14}.  In \cite{al08,hen15} it has been established that
\begin{equation}
  \label{eq:4}
\liminf_{h\to0}h^{-2/3}\inf \Re\sigma(\A_h) \geq \frac{|\mu_1|}{2}c_m^{2/3} \,,
\end{equation}
where $\mu_1$ is the rightmost zero of Airy's function
\cite{abst72}. 

We note that \eqref{eq:4} has been obtained without the need to
assume \eqref{eq:3}. In the present contribution we seek an upper
bound for $\inf \Re\sigma(\A_h)$. It is to this end that we make
that additional assumption. Our main result is the following
\begin{theorem}
  \label{thm:1}
  Let $\A_h$ denote the Dirichlet realization of a Schr\"odinger
  operator with a purely imaginary potential $V\in C^3(\Omega,\R)$,
  satisfying $\nabla V\neq0$ in $\bar{\Omega}$, given by
  \eqref{eq:1}. Suppose that $V$ satisfies \eqref{eq:3}.
  Then, there
  exists $\lambda(h) \in \sigma(\A_h)$ satisfying
  \begin{equation}
    \label{eq:5}
\Big|\lambda-  iV(x_0)- e^{i\pi/3}|\mu_1|(c_mh)^{2/3}-\sqrt{2\alpha}e^{i\frac{\pi}{4}}
h\Big|\sim o(h) \quad \text{as }h\to0\,,
  \end{equation}
  where $\alpha = \alpha(x_0)\,$.
\end{theorem}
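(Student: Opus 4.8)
The plan is to build, via a two-scale semiclassical construction localized near $x_0$, an approximate eigenpair $(\lambda_{\mathrm{app}},u_h)$ with $\lambda_{\mathrm{app}}$ equal to the right-hand side of \eqref{eq:5} up to $o(h)$, and then to promote it to a genuine eigenvalue through a Grushin (Lyapunov--Schmidt) reduction. The two scales are forced by the behaviour of $V$ near $x_0$: in the inward normal variable $\rho=\dist(\cdot,\partial\Omega)$ the imaginary potential $iV$ is, to leading order, affine with slope $\pm c_m$, which creates a boundary layer of width $\rho\sim h^{2/3}$ governed by the complex Airy operator on a half-line; since $x_0$ is a critical point of $V|_{\partial\Omega}$ with $\partial_s^2V=\alpha>0$, the tangential scale is $s\sim h^{1/2}$ and the relevant model there is a complex (rotated) harmonic oscillator.

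I would pass to boundary normal coordinates $(s,\rho)$ centred at $x_0$, rewrite $-h^2\Delta$ and Taylor-expand $V$ there, and then rescale $\rho=h^{2/3}\sigma$, $s=h^{1/2}t$ and divide by $h^{2/3}$: $h^{-2/3}(\A_h-iV(x_0))$ becomes $\mathcal{L}_0+h^{1/3}\mathcal{L}_1+O(h^{2/3})$, where $\mathcal{L}_0=-\partial_\sigma^2+ic_m\sigma$ on $\{\sigma>0\}$ with Dirichlet condition at $\sigma=0$, and $\mathcal{L}_1$ is a complex harmonic oscillator in $t$ whose coefficients are read off from the Hessian of $V$ and the curvature of $\partial\Omega$ at $x_0$. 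A complex rotation of $\sigma$ reduces $\mathcal{L}_0$ to a shifted real Airy operator, so it has simple eigenvalues $e^{i\pi/3}|\mu_n|c_m^{2/3}$ with eigenfunctions built from Airy's function; its ground state $\phi_0$ and the ground state $\phi_0^{*}$ of the adjoint decay like complex Gaussians (complex WKB / Agmon estimates), and the bilinear pairing $\int_0^{\infty}\phi_0\phi_0^{*}\,d\sigma\neq0$ since the eigenvalue is simple. A Born--Oppenheimer reduction onto $\phi_0$ then gives, at order $h^{1/3}$, the scalar tangential problem $\mathcal{L}_1$, whose ground-state eigenvalue — after undoing the rescalings — supplies the $h$-order term $\sqrt{2\alpha}\,e^{i\pi/4}h$ of \eqref{eq:5}. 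Multiplying $\phi_0$ by the oscillator ground state, inserting cut-offs on scales $\gg h^{2/3}$ in $\rho$ and $\gg h^{1/2}$ in $s$, and carrying the expansion as far as $V\in C^3$ permits, yields $u_h\in D(\A_h)$ with $\|(\A_h-\lambda_{\mathrm{app}})u_h\|=o(h)\,\|u_h\|$; the exponential decay of the model ground states controls the cut-off errors and also makes the remaining points of $\Sg$ and the rest of $\partial\Omega_\perp$ irrelevant.

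The delicate step is passing from this quasimode to a genuine point of $\sigma(\A_h)$: $\A_h$ is strongly non-normal, so a small residual alone does not place spectrum near $\lambda_{\mathrm{app}}$ (the pseudospectrum elsewhere is enormous). I would therefore construct, by the same procedure, a quasimode $v_h$ for $\A_h^{*}=-h^2\Delta-iV$ near $\overline{\lambda_{\mathrm{app}}}$, verify that $\langle u_h,v_h\rangle$ stays bounded away from $0$, and solve the corresponding $2\times2$ Grushin problem with the rank-one data $R_+u=\langle u,v_h\rangle$ and $R_-z=z\,u_h$. Its well-posedness for $\lambda$ in a disk of radius $o(h)$ about $\lambda_{\mathrm{app}}$ rests on a uniform bound for the resolvent of $\A_h$ restricted to a complement of $u_h$ (along $v_h$); this is where one really uses the models, since the second eigenvalue of $\mathcal{L}_0$ is separated from the first by $\gtrsim h^{2/3}$ and the spectrum of $\mathcal{L}_1$ by $\gtrsim h$, while away from $x_0$ one invokes \eqref{eq:4} together with Agmon localization. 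Granting this, the effective Hamiltonian $E_{-+}(\lambda)$ is a scalar holomorphic function with $E_{-+}(\lambda_{\mathrm{app}})=o(h)$ and $|\partial_\lambda E_{-+}|$ bounded below, hence has a zero $\lambda(h)$ with $|\lambda(h)-\lambda_{\mathrm{app}}|=o(h)$, and every such zero is an eigenvalue of $\A_h$; this is \eqref{eq:5}.

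I expect the uniform resolvent estimate off the quasimode — equivalently, the well-posedness of the Grushin problem on a disk of radius $o(h)$ — to be the main obstacle, since one must at once rule out spurious spectrum coming from the boundary layer elsewhere along $\partial\Omega$, control the non-self-adjoint model operators $\mathcal{L}_0$ and $\mathcal{L}_1$ uniformly near their ground eigenvalues, and keep every error estimate uniform as $h\to0$; carefully tracking the constant $\sqrt{2\alpha}$ through the Born--Oppenheimer reduction is a further point requiring care.
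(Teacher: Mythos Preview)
Your quasimode construction is essentially the paper's (Section~\ref{sec:2}): the same two scales $\rho\sim h^{2/3}$, $s\sim h^{1/2}$, the same separation into the half-line complex Airy operator $\LL_\tau$ and the rotated harmonic oscillator $\LL_\xi$, and the same projection onto the Airy ground state (your ``Born--Oppenheimer reduction'') to extract the tangential problem. The resulting $\lambda_{\mathrm{app}}$ and the residual estimate match \eqref{eq:52}.

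Where you diverge is in promoting the quasimode to a genuine eigenvalue. Instead of a Grushin reduction, the paper proves a resolvent bound
\[
\|(\A_h-\lambda^*)^{-1}f\|_2\le \frac{C}{\varepsilon^{3/2}r}\|f\|_2
\]
(Proposition~\ref{lem:general}) valid \emph{only for right-hand sides $f$ obeying the localization/decay condition \eqref{eq:90}} --- precisely the class to which the quasimode residual $(\A_h-\Lambda_0)(\eta U)$ belongs. This is then fed into a Riesz-projection argument: integrating $\langle\eta U,(\A_h-\lambda)^{-1}\eta U\rangle$ over a small circle about $\Lambda_0$ gives a nonzero value, forcing spectrum inside. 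The virtue of restricting to \eqref{eq:90} is that one never needs to control $(\A_h-\lambda)^{-1}$ on functions living away from $x_0$; your Grushin scheme, by contrast, asks for invertibility on an entire complement of $u_h$. Your appeal to \eqref{eq:4} plus Agmon localization for the region away from $x_0$ is the weak point: \eqref{eq:4} locates the spectrum but gives no resolvent bound, and for operators this non-normal the two are very different. The paper handles that region through a patchwork of local model operators $\A_{j,h}$ (Lemma~\ref{lem:ResModels}), each of which either has empty spectrum or has its bottom strictly above $\lambda_0$, so that explicit local resolvent bounds are available and can be glued. A Grushin argument can certainly be completed with the same ingredients, but the paper's shortcut of working only with localized $f$ buys a genuine economy that your outline does not yet exploit.
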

An immediate corollary follows
\begin{corollary}
  Under the above assumptions we have that
  \begin{equation}
\label{eq:53}
    \lim_{h\to0}h^{-2/3}\inf \Re\sigma(\A_h) = \frac{|\mu_1|}{2}c_m^{2/3} \,,
  \end{equation}
\end{corollary}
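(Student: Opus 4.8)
The plan is to prove \eqref{eq:53} by sandwiching $h^{-2/3}\inf\Re\sigma(\A_h)$ between two matching bounds: the lower bound is the already established estimate \eqref{eq:4}, and the upper bound will be read off from the eigenvalue constructed in Theorem \ref{thm:1}.

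For the lower bound I would simply invoke \eqref{eq:4}, proved in \cite{al08,hen15} without recourse to \eqref{eq:3}: it gives directly
\[
\liminf_{h\to0}h^{-2/3}\inf\Re\sigma(\A_h)\ \ge\ \frac{|\mu_1|}{2}\,c_m^{2/3}\,.
\]
For the upper bound I would take the eigenvalue $\lambda(h)\in\sigma(\A_h)$ furnished by Theorem \ref{thm:1}, namely
\[
\lambda(h)=iV(x_0)+e^{i\pi/3}|\mu_1|(c_mh)^{2/3}+\sqrt{2\alpha}\,e^{i\pi/4}h+o(h)\,,
\]
and pass to real parts. Since $V(x_0)\in\R$ we have $\Re\big(iV(x_0)\big)=0$, and using $\Re(e^{i\pi/3})=\tfrac12$ and $\Re(e^{i\pi/4})=\tfrac1{\sqrt2}$ this yields
\[
\Re\lambda(h)=\frac{|\mu_1|}{2}\,c_m^{2/3}h^{2/3}+\sqrt{\alpha}\,h+o(h)=\frac{|\mu_1|}{2}\,c_m^{2/3}h^{2/3}\,\big(1+o(1)\big)\,,
\]
the point being that the $O(h)$ correction and the $o(h)$ remainder are both $o(h^{2/3})$ and therefore do not disturb the leading order. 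Because $\lambda(h)$ lies in $\sigma(\A_h)$ one has $\inf\Re\sigma(\A_h)\le\Re\lambda(h)$, whence
\[
\limsup_{h\to0}h^{-2/3}\inf\Re\sigma(\A_h)\ \le\ \frac{|\mu_1|}{2}\,c_m^{2/3}\,.
\]
Combining the two displays with the trivial inequality $\liminf\le\limsup$ forces equality throughout, so the limit exists and equals $\tfrac{|\mu_1|}{2}c_m^{2/3}$, which is \eqref{eq:53}.

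I do not expect any genuine obstacle in this particular deduction: it is pure bookkeeping of real parts and orders of magnitude, and the whole substance sits in the two inputs — the a priori bound \eqref{eq:4} on one side, and Theorem \ref{thm:1} on the other, whose proof (construction of a quasimode localized near $x_0$, identification of the limiting model operator as a Dirichlet complex Airy operator in the normal variable tensored with a non-self-adjoint harmonic oscillator in the tangential variable, and the promotion of that quasimode to a genuine eigenvalue despite the lack of self-adjointness) is where all the work lies. The only sanity check worth recording here is that $\alpha=\alpha(x_0)>0$ by the reduction following \eqref{eq:3}, so that $\sqrt{\alpha}\,h$ is a well-defined lower-order term and the displayed asymptotics of $\Re\lambda(h)$ are legitimate.
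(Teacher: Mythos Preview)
Your proof is correct and is precisely the argument the paper has in mind: the corollary is stated as ``immediate'' with no proof given, the intended deduction being exactly the combination of the lower bound \eqref{eq:4} with the real part of the eigenvalue \eqref{eq:5} from Theorem~\ref{thm:1}. Your computation of the real parts and the handling of the lower-order terms are accurate.
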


\begin{remark}
  While we do not prove that here, it appears that \eqref{eq:53} can
  be extended to higher dimensions. Let $D^2_\parallel V$ denote the
  Hessian matrix of $V$ with respect to a local curvilinear coordinate system
  defined on $\partial\Omega$ (including, of course, curvature
  effects). Suppose that $D^2_\parallel V(x)$ is either positive or
  negative. Then, we set $\alpha$ in the following manner
  \begin{displaymath}
    \alpha(x) = \sign \big(D^2_\parallel V(x)\big)\inf_{
      \begin{subarray}\strut
        t\cdot\nu(x)=0 \\
        |t|=1
      \end{subarray}}
|t\cdot D^2_\parallel V(x)t| \,,
  \end{displaymath}
and assume  \eqref{eq:3} once again.
\end{remark}

\begin{remark}
  Let $\A^N_h$ denote the Neumann realization of $\A_h$. By using the
  same techniques as in the sequel, one can obtain an upper bound for
  $\inf \Re\sigma(\A^N_h)$. In this case, $\mu_1$ will be replaced by the
  rightmost critical point Airy's function.
\end{remark}

Finally, we note that it has been established in \cite{hen15} that for
all $\epsilon>0$ there exist positive $M_\epsilon$ and $h_\epsilon$ such that for all
$h\in(0,h_\epsilon)$ we have the following upper bound for the semigroup
assciated with $-\A_h$,
\begin{displaymath}
  \|e^{-t\A_h}\|\leq M_\epsilon \exp \{ -(c_m^{2/3}|\mu_1|/2-\epsilon)h^{2/3}t\} \,.
\end{displaymath}
From \eqref{eq:5} we can now establish that for some positive $M$, $C$
and $h_0$ the following lower bound for
the semigroup holds for all $h\in(0,h_0)$
\begin{displaymath}
   \|e^{-t\A_h}\|\geq  M \exp \Big\{ -c_m^{2/3}\frac{|\mu_1|}{2}h^{2/3}(1+Ch^{1/3})t\Big\} \,.
\end{displaymath}

The rest of this contribution is arranged as follows: in the next
section we consider a one-dimensional version of \eqref{eq:1}.
Assuming that $V\in C^\infty([0,a],\R)$ we obtain the complete
asymptotic expansion, as $h\to0$, of any eigenvalue $\lambda_k\in\sigma(\A_h)$
($k$ is fixed in the limit). In Section \ref{sec:2} we construct the quasimode
associated with the eigenvalue given in \eqref{eq:5}, and in the
last section provide a rigorous proof of Theorem \ref{thm:1}.

\section{The one-dimensional case}
\label{s:1D}

\subsection{Statement of the results}

Let $a>0$ and $V\in\mathcal{C}^\infty\big([0,a] ; \mathbb{R}\big)\,$ such that $V$ has no critical point
in $[0,a]\,$.  Consider then the one-dimensional Schr\"odinger operator
$\mathcal{A}_h$ defined on $(0,a)$ by
\begin{displaymath}
 \mathcal{A}_h = -h^2\frac{d^2}{dx^2}+i\big(V-V(0)\big)\,,
\end{displaymath}
with domain
\begin{displaymath}
 D(\mathcal{A}_h) = H_0^1([0,a],\C)\cap H^2([0,a],\C)\,.
\end{displaymath}
The main result we prove in this section is the following:
\begin{theorem}\label{thm:1D}
Assume that, for all $x\in[0,a]\,$, $V'(x)\neq0\,$. Then,
for all $n\geq1\,$, there exists a complex sequence $(\alpha_{j,n})_{j\geq1}$ and an
eigenvalue $\lambda_n(h)\in\sigma(\mathcal{A}_h)$ such that, as $h\to0\,$,
\begin{equation}\label{devBord}
h^{-2/3}\lambda_n(h) \underset{h\to0}{\sim} e^{\sigma i\pi/3}|V'(0)|^{2/3}|\mu_n|
+\sum_{j=1}^{+\infty}\alpha_{j,n}h^{2j/3} + \mathcal{O}(h^\infty)\,,
\end{equation}
where $\sigma$ is the (constant) sign of the function $V'\,$.\\
\end{theorem}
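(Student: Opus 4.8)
The plan is to construct a quasimode by a boundary-layer WKB construction near $x=0$, where the imaginary potential $i(V-V(0))$ vanishes, and then to show that the approximate eigenvalue it produces is a genuine eigenvalue up to $\mathcal{O}(h^\infty)$ by a standard spectral-stability argument. First I would reduce to the model. Rescaling $x = h^{2/3}t$ turns $\mathcal{A}_h$ into $h^{2/3}\big(-\frac{d^2}{dt^2} + i h^{-2/3}(V(h^{2/3}t)-V(0))\big)$, and Taylor expanding $V$ near $0$ gives the leading complex Airy operator $-\frac{d^2}{dt^2} + i V'(0) t$ on the half-line $t>0$ with a Dirichlet condition at $t=0$. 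The spectrum of this rotated Airy operator is $\{ e^{\sigma i\pi/3}|V'(0)|^{2/3}|\mu_n| : n\ge 1\}$, where $\mu_n$ are the zeros of the Airy function; this produces the leading term of \eqref{devBord}. The higher-order corrections come from the Taylor tail $\sum_{k\ge 2} \frac{V^{(k)}(0)}{k!} h^{2(k-1)/3} t^k$ of the potential, which organizes itself as a perturbation series in powers of $h^{2/3}$, giving the coefficients $\alpha_{j,n}$.

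The second step is to make this rigorous as a formal series. I would posit an ansatz $\lambda_n(h) \sim h^{2/3}\big(\lambda_n^{(0)} + \sum_{j\ge 1}\alpha_{j,n}h^{2j/3}\big)$ and $u_n(x,h) \sim \sum_{j\ge 0} h^{2j/3}\, v_j(h^{-2/3}x)$ with $v_0$ the ground (or $n$-th) state of the model Airy operator. Plugging into $\mathcal{A}_h u = \lambda u$ and collecting powers of $h^{2/3}$ yields a triangular hierarchy: at each order one must solve $\big(L_0 - \lambda_n^{(0)}\big) v_j = (\text{known data from } v_0,\dots,v_{j-1} \text{ and } \alpha_{1,n},\dots,\alpha_{j,n})$, where $L_0 = -\frac{d^2}{dt^2}+iV'(0)t - \lambda_n^{(0)}$. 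Since $\lambda_n^{(0)}$ is a simple eigenvalue of the model operator, the Fredholm alternative fixes $\alpha_{j,n}$ (pairing the data against the corresponding eigenfunction of the adjoint, which here is the complex conjugate model operator) and then determines $v_j$ uniquely up to the kernel direction, normalized away. One needs the eigenfunctions of $L_0$ to decay fast enough (Airy-type super-exponential decay as $t\to+\infty$) so that all the $v_j$ lie in the weighted spaces where the multiplication by powers of $t$ is controlled — this is where the hypothesis $V\in C^\infty$ and $V'\ne 0$ is used, together with simplicity of the complex Airy eigenvalues, which is classical.

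The third step is to cut off and estimate: fix $N$, set $u_n^N(x,h) = \chi(x)\sum_{j=0}^N h^{2j/3} v_j(h^{-2/3}x)$ with $\chi$ supported near $0$ and equal to $1$ on a smaller neighborhood, and set $\lambda_n^N(h) = h^{2/3}\big(\lambda_n^{(0)}+\sum_{j=1}^N\alpha_{j,n}h^{2j/3}\big)$. One checks $\|(\mathcal{A}_h-\lambda_n^N(h))u_n^N\| \le C_N h^{2/3}\,h^{2(N+1)/3}\,\|u_n^N\|$, the error coming both from the truncated Taylor tail and from the commutator with $\chi$ (the latter is $\mathcal{O}(h^\infty)$ because $v_0$ is exponentially small away from $0$ after rescaling, and the cutoff lives where $V-V(0)$ is bounded below, so the resolvent of $\mathcal{A}_h$ is controlled there). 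Then the standard fact $\mathrm{dist}\big(\lambda_n^N(h),\sigma(\mathcal{A}_h)\big) \le \|(\mathcal{A}_h-\lambda_n^N(h))u\|/\|u\|$ for the (self-adjoint part? no — here $\mathcal{A}_h$ is not self-adjoint, so one must instead argue via the resolvent estimate) gives an eigenvalue of $\mathcal{A}_h$ within $\mathcal{O}(h^{2(N+2)/3})$ of $\lambda_n^N(h)$; letting $N\to\infty$ and using simplicity to track a single eigenvalue branch $\lambda_n(h)$ yields \eqref{devBord}.

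The main obstacle is precisely the non-self-adjointness: one cannot conclude "quasimode $\Rightarrow$ nearby eigenvalue" for free, so I would need a resolvent bound $\|(\mathcal{A}_h-z)^{-1}\| \le C h^{-2/3}$ (or a localized version of it) for $z$ near $\lambda_n^{(0)}h^{2/3}$ but off the spectrum, which must be established for the complex Airy model and transferred back; the one-dimensional complex Airy operator on a half-line has good resolvent bounds away from its spectrum (its spectrum is discrete with the eigenvalues listed, and it generates a contraction semigroup up to the rotation), so this is available but requires care. A secondary, more technical point is controlling the weighted norms of the $v_j$ uniformly in $j$ well enough to actually pass to the limit $N\to\infty$ and obtain a genuine asymptotic series with the error term $\mathcal{O}(h^\infty)$ rather than just an order-by-order statement; this is a Borel-type summation bookkeeping that I expect to be routine given the super-exponential decay of the model eigenfunctions.
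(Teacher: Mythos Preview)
Your proposal is correct and follows essentially the same route as the paper: rescale by $h^{2/3}$ to reach the complex Airy model, build the formal quasimode series via the Fredholm alternative against $\bar v_0$, cut off, and then---because $\mathcal{A}_h$ is non-self-adjoint---prove a resolvent bound of the form $\|(\mathcal{A}_\varepsilon-\lambda)^{-1}\|\le C/r$ on small circles about the approximate eigenvalue and feed it into a Riesz-projection argument to extract a genuine eigenvalue. The paper carries out exactly this, with the resolvent estimate obtained by a near/far decomposition (a cutoff pair $\chi_\varepsilon,\tilde\chi_\varepsilon$) together with the Riesz--Schauder expansion of $(A_0-\lambda)^{-1}$ about the simple Airy eigenvalue, and then uses the one-dimensionality of the Riesz projection range to track a single branch $\lambda_n(h)$.
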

Similarly, one could also prove the existence of another sequence
$(\nu_n(h))_{n\geq1}$ of eigenvalues satisfying an asymptotic expansion
of the form
\begin{equation}\label{devBord2}
\nu_n(h) \underset{h\to0}{\sim}  i\big(V(a)-V(0)-a\big) + e^{\sigma i\pi/3}|V'(a)|^{2/3}|\mu_n|h^{2/3}
+\sum_{j=1}^{+\infty}\beta_{j,n}h^{2(j+1)/3} + \mathcal{O}(h^\infty)
\end{equation}
by applying the transformation $x\to a-x$. Similar results have
previously been obtained in the particular cases $V(x) = x$ and 
$V(x) = x^2$, see \cite{sh03}
and \cite{bhhr14}. 

\begin{remark}
  Theorem \ref{thm:1D} esablishes existence of two sequences of
  eigenvalues of $\A_h$, respectively obeying \eqref{devBord} and
  \eqref{devBord2}. The fact that these sequences constitute the
  entire spectrum of $\A_h$ for $\Re \lambda \leq Mh^{2/3}$ for any positive
  $M$ follows from \cite[Proposition 6.1]{hen15}.
\end{remark}

Let $\varepsilon=h^{2/3}$. It is more convenient to obtain the spectrum of $\A_h$ by first
applying the dilation operator $U:L^2(0,a)\to L^2(0,a/\varepsilon)$ defined by
\begin{displaymath}
  (Uu)(\cdot/\varepsilon)=u(\cdot) \,.
\end{displaymath}
Let
\begin{displaymath}
 V_\varepsilon(x) = \frac{V(\varepsilon x)}{\varepsilon}\,.
\end{displaymath}
Then by applying the above dilation we obtain
\begin{equation}
\label{eq:6}
 \frac{1}{\varepsilon}U^{-1}\A_hU= \mathcal{A}_\varepsilon = -\frac{d^2}{dx^2}+i\left(V_\varepsilon-\frac{V(0)}{\varepsilon}\right)\,,
\end{equation}
defined on
\begin{displaymath}
 D(\mathcal{A}_\varepsilon) = (H_0^1\cap H^2)\big((0,a/\varepsilon),\C\big)\,.
\end{displaymath}

\subsection{Quasimode construction}\label{ss:1D_quasimode}
In the following we construct quasimodes and approximate eigenvalues
for $\mathcal{A}_\varepsilon$ in the neighborhood of the boundary point
$x=0\,$.  In particular, we obtain the asymptotic expansion (\ref{devBord}) for
each approximate eigenvalue. 

\begin{proposition}
 Assume that, for all $x\in[0,a]\,$, $V'(x)\neq0\,$. Let $n\geq1$ and $\sigma$ denote the sign of $V'\,$.
 Then there exists $\psi_\varepsilon\in\mathcal{D}(\mathcal{A}_\varepsilon)$
and a complex sequence $(\nu_j)_{j\geq2}$
 such that
 \begin{equation}
\label{eq:7}
 \big\|(\mathcal{A}_\varepsilon-\nu(\varepsilon))\psi_\varepsilon\big\| = \mathcal{O}(\varepsilon^\infty)\|\psi_\varepsilon\|\,,
 \end{equation}
 where
 \begin{equation}
 \nu(\varepsilon) = e^{\sigma i\pi/3}|V'(0)|^{2/3}|\mu_n| + \sum_{j=1}^{+\infty}\nu_j\varepsilon^j
 + \mathcal{O}(\varepsilon^\infty)
 \end{equation}
 as $\varepsilon\to0\,$.
\end{proposition}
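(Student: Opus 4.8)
The plan is a WKB-type construction in integer powers of $\varepsilon$, localized at the boundary point $x=0$ and built on the eigenfunctions of the complex Airy operator, followed by an $\varepsilon$-dependent cut-off and a standard resummation.

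First I set up the model operator. The Taylor expansion of $V$ at $0$ gives $V_\varepsilon(x)-\tfrac{V(0)}{\varepsilon}=\tfrac{V(\varepsilon x)-V(0)}{\varepsilon}=V'(0)x+\sum_{j\ge1}\varepsilon^j w_j(x)$ with $w_j(x)=\tfrac{V^{(j+1)}(0)}{(j+1)!}x^{j+1}$, so that $\mathcal{A}_\varepsilon=\mathcal{A}_0+i\sum_{j\ge1}\varepsilon^j w_j$ (valid as an asymptotic expansion in $\varepsilon$ on the relevant range of $x$), with $\mathcal{A}_0=-\tfrac{d^2}{dx^2}+iV'(0)x$ the Dirichlet realization on the half-line $\R_+$. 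The operator $\mathcal{A}_0$ has compact resolvent, and the substitution $u(x)=\mathrm{Ai}\bigl(e^{\sigma i\pi/6}|V'(0)|^{1/3}x+\mu_n\bigr)$ --- which solves the ODE, vanishes at $x=0$ since $\mathrm{Ai}(\mu_n)=0$, and decays like $e^{-cx^{3/2}}$ at $+\infty$ --- shows that the eigenvalues of $\mathcal{A}_0$ are exactly $\nu_0^{(n)}=e^{\sigma i\pi/3}|V'(0)|^{2/3}|\mu_n|$, $n\ge1$. Fix $n$, write $\nu_0=\nu_0^{(n)}$ and let $\psi_0=\psi_0^{(n)}$ be a corresponding eigenfunction. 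With respect to the bilinear pairing $\langle f,g\rangle=\int_0^{\infty}fg\,dx$ the operator $\mathcal{A}_0$ is complex-symmetric ($\mathcal{A}_0^{\mathrm t}=\mathcal{A}_0$, by integration by parts), hence $\ker\bigl((\mathcal{A}_0-\nu_0)^{\mathrm t}\bigr)=\C\psi_0$ as well; consequently $(\mathcal{A}_0-\nu_0)u=f$, with $u(0)=0$ and $u$ decaying at $+\infty$, is solvable if and only if $\langle\psi_0,f\rangle=0$, and the solution is then unique once one imposes $\langle\psi_0,u\rangle=0$ --- \emph{provided} $\langle\psi_0,\psi_0\rangle=\int_0^\infty\psi_0^2\,dx\neq0$, i.e. provided $\nu_0$ carries no Jordan block. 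This non-degeneracy holds for the complex Airy operator, and it is where the simplicity of the zeros of $\mathrm{Ai}$ is used.

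Next comes the formal recursion. Inserting $\psi_\varepsilon\sim\sum_{j\ge0}\varepsilon^j\psi_j$ and $\nu(\varepsilon)\sim\nu_0+\sum_{j\ge1}\varepsilon^j\nu_j$ into $(\mathcal{A}_\varepsilon-\nu(\varepsilon))\psi_\varepsilon=0$ and matching powers of $\varepsilon$: the order $\varepsilon^0$ equation $(\mathcal{A}_0-\nu_0)\psi_0=0$ is already solved, and for $N\ge1$ the order $\varepsilon^N$ equation is $(\mathcal{A}_0-\nu_0)\psi_N=\nu_N\psi_0+g_N$ with $g_N=\sum_{j=1}^{N-1}\nu_j\psi_{N-j}-\sum_{j=1}^{N}iw_j\psi_{N-j}$, which depends only on data determined at earlier steps. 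The solvability condition above forces $\nu_N=-\langle\psi_0,g_N\rangle/\langle\psi_0,\psi_0\rangle$, and then $\psi_N$ is obtained by solving the inhomogeneous Airy equation by variation of parameters against the two Airy solutions, keeping the branch that is Dirichlet at $0$ and decaying at $+\infty$, and normalising by $\langle\psi_0,\psi_N\rangle=0$. An easy induction shows that each $\psi_j$ is a polynomial in $x$ times the decaying Airy function, and hence again decays like $e^{-cx^{3/2}}$; in particular all $\psi_j$ and all products $w_j\psi_k$ lie in a fixed weighted $L^2$ space on $\R_+$, with $L^2(\R_+)$-norms some constants $C_j$, $C_{j,k}$.

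To obtain a genuine quasimode in $\mathcal{D}(\mathcal{A}_\varepsilon)$ and to upgrade the error to $\mathcal{O}(\varepsilon^\infty)$, fix $0<\rho<1$, take $\chi\in C_c^\infty([0,1))$ equal to $1$ near $0$, put $\chi_\varepsilon(x)=\chi(\varepsilon^\rho x)$ (whose support lies in $[0,a/\varepsilon]$ for small $\varepsilon$), and set $\psi_\varepsilon^{(N)}=\chi_\varepsilon\sum_{j\le N}\varepsilon^j\psi_j$, $\nu^{(N)}(\varepsilon)=\nu_0+\sum_{j=1}^N\nu_j\varepsilon^j$. Then $(\mathcal{A}_\varepsilon-\nu^{(N)}(\varepsilon))\psi_\varepsilon^{(N)}$ equals $\chi_\varepsilon\sum_{N<m\le2N}\varepsilon^m r_m+[-\partial_x^2,\chi_\varepsilon]\sum_{j\le N}\varepsilon^j\psi_j$, where the first term has $L^2$-norm $\mathcal{O}(\varepsilon^{N+1})$ (finitely many $w_j\psi_k$, $\nu_j\psi_k$, plus the Taylor remainder of $V$, all of fixed norm) and the commutator, supported where $x\gtrsim\varepsilon^{-\rho}$ and every $\psi_j$ is $\mathcal{O}(e^{-c\varepsilon^{-3\rho/2}})$, contributes $\mathcal{O}(\varepsilon^\infty)$; since $\|\psi_\varepsilon^{(N)}\|\ge\|\chi_\varepsilon\psi_0\|-\mathcal{O}(\varepsilon)\ge c_0>0$, we get $\|(\mathcal{A}_\varepsilon-\nu^{(N)}(\varepsilon))\psi_\varepsilon^{(N)}\|=\mathcal{O}(\varepsilon^{N+1})\|\psi_\varepsilon^{(N)}\|$ for every $N$. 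A diagonal (Borel-type) argument --- letting $N=N(\varepsilon)\to\infty$ slowly enough to absorb the constants $C_j$, $C_{j,k}$ --- then produces a single family $\psi_\varepsilon\in\mathcal{D}(\mathcal{A}_\varepsilon)$ and a value $\nu(\varepsilon)$ with the stated asymptotic expansion satisfying \eqref{eq:7}. The main obstacle is the non-self-adjointness: one must keep the adjoint eigenfunction in play throughout, and the entire recursion collapses unless $\langle\psi_0,\psi_0\rangle\neq0$, i.e. unless the Airy eigenvalue $\nu_0$ is algebraically simple; the only other, purely technical, point is bookkeeping the polynomial growth of the $\psi_j$ so that the final passage to $\mathcal{O}(\varepsilon^\infty)$ is legitimate.
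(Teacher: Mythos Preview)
Your proposal is correct and follows essentially the same route as the paper: Taylor expand the potential, set up the recursion $(\mathcal A_0-\nu_0)\psi_N=\nu_N\psi_0+g_N$ using the Fredholm alternative for the complex Airy operator (the paper phrases the solvability condition as $\ker(A_0^*-\bar\lambda_0)^\perp$, you via the complex-bilinear form, which is the same thing), verify the non-degeneracy $\int_0^\infty\psi_0^2\,dx\neq0$, then cut off at scale $\varepsilon^{-\rho}$ and Borel-sum. One minor imprecision: the $\psi_j$ are in general polynomial combinations of $\mathrm{Ai}$ \emph{and} $\mathrm{Ai}'$ (not just $\mathrm{Ai}$), but this does not affect your decay estimate, and the paper anyway only uses the weaker fact $\psi_j\in\mathcal S(\mathbb R_+)$.
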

\begin{proof}

We approximate $\mathcal{A}_\varepsilon$ at any order $N$ by the operator
\begin{displaymath}
A_N(\varepsilon) = A_0 + \sum_{j=1}^NV_j\varepsilon^j~~~\textrm{ on }~~~(0,+\infty)\,,
\end{displaymath}
where
\begin{eqnarray*}
  A_0 = -\frac{d^2}{dx^2} + i\beta_0x\,,&\beta_0 = V'(0)\,,&\\
  V_j = i\beta_jx^{j+1}\,,&\beta_j = \frac{V^{(j+1)}(0)}{(j+1)!}\,,& j\in\mathbb{N}\,.
\end{eqnarray*}
Then, for all $N\geq1\,$,
we look for a quasimode $u^N(x,\varepsilon)$ and an approximate eigenvalue $\lambda^N(\varepsilon)$ in the form
\begin{equation}
\label{eq:8}
 u^N(x,\varepsilon) = \sum_{j=0}^N u_j(x) \varepsilon^j\,,~~~\lambda^N(\varepsilon) = \sum_{j=0}^N \lambda_j\varepsilon^j\,,
\end{equation}
satisfying
\begin{displaymath}
 \Big(A_0+\sum_{j=1}^N V_j\varepsilon^j\Big)u^N(x,\varepsilon) = \lambda^N(\varepsilon)u^N(x,\varepsilon) + \OO(\varepsilon^{N+1})\,.
\end{displaymath}
To this end, we need to successively solve the following equations:
\begin{eqnarray}
 (A_0-\lambda_0)u_0 & = & 0\,, \nonumber \\
 (A_0-\lambda_0)u_1 & = & -(V_1-\lambda_1)u_0\,,\nonumber  \\
 \vdots & & \\
 (A_0-\lambda_0)u_k & = & -\sum_{j=1}^k(V_j-\lambda_j)u_{k-j}\,,~~~k=1,\dots,N\,. \label{eqK}
\end{eqnarray}
Consider the first equation. If $\beta_0>0\,$, we can use the scale change
$x\mapsto \beta_0^{1/3}x$ and the well-known properties of the complex Airy
operator \cite{al08} to obtain
\begin{displaymath}
 \sigma(A_0) = \big\{\beta_0^{1/3}\mu_ne^{-2i\pi/3} : n\in\mathbb{N}\big\}\,,
\end{displaymath}
where $\mu_n$ denotes the $n$-th zero of the Airy function $Ai\,$. The associated eigenfunctions are
\begin{displaymath}
 x\mapsto Ai(\beta_0^{1/3}e^{i\pi/6}x+\mu_n)\,.
 \end{displaymath}
 If $\beta_0<0\,$, then the operator $A_0$ is the adjoint
 of $-\frac{d^2}{dx^2}+i|\beta_0|x\,$. Hence,
\begin{displaymath}
 \sigma(A_0) = \big\{|\beta_0|^{1/3}\mu_ne^{+2i\pi/3} : n\in\mathbb{N}\big\}\,,
\end{displaymath}
and the eigenfunctions are given by
\begin{displaymath}
 x\mapsto \overline{Ai(\beta_0^{1/3}e^{i\pi/6}x+\mu_n)}\,.
\end{displaymath}
Therefore, for any fixed $n\in\mathbb{N}\,$, we choose
\begin{equation}
 \lambda_0 = \lambda_{0,n} = |\beta_0|^{1/3}\mu_ne^{\sigma 2i\pi/3}\,,
\end{equation}
and $u_0 = u_{0,n}$ to be a corresponding eigenfunction.\\

Next, consider the second equation. To ensure the existence of a
$u_1\,$, we first select $\lambda_1$ such that
\begin{displaymath}
 (V_1-\lambda_1)u_0 \in \Im(A_0-\lambda_0) = \ker(A_0^*-\bar\lambda_0)^\perp\,.
\end{displaymath}
Since $\ker(A_0^*-\bar\lambda_0) = \langle\bar u_0\rangle\,$ we may conclude that
\begin{equation}\label{condLambda1}
\lambda_1\int_{\mathbb{R}_+}u_0(x)^2dx = i\beta_1\int_{\mathbb{R}_+}x^2u_0(x)^2dx\,.
\end{equation}
Furthermore, as $u_0(x) = Ai(\beta_0^{1/3}e^{i\pi/6}x+\mu_n)$ (respectively
$u_0(x) = \overline{Ai(\beta_0^{1/3}e^{i\pi/6}x+\mu_n)}$) for $\beta_0>0$
(respectively  $\beta_0<0$), Cauchy Theorem and the decay of $Ai$ in the sector
$\{|\arg z|\leq \pi/3\}$ immediately yields
  \begin{displaymath}
    \int_{\mathbb{R}_+}u_0(x)^2dx =
    e^{-i\pi/6}\int_{\mathbb{R}_+}Ai^2(\beta_0^{1/3}x+\mu_n) dx\neq0\,.
  \end{displaymath}
Thus, we may select
\begin{equation}
\lambda_1 =
i\beta_1\frac{\int_{\mathbb{R}_+}x^2u_0(x)^2dx}{\int_{\mathbb{R}_+}u_0(x)^2dx}=
  i\beta_1e^{-i\pi/3}\frac{\int_{\mathbb{R}_+}x^2Ai^2(\beta_0^{1/3}x+\mu_n)dx}{\int_{\mathbb{R}_+}Ai^2(\beta_0^{1/3}x+\mu_n)dx}
  \,,
\end{equation}
and there exists $u_1\in D(A_0)$ such that
\begin{displaymath}
 (A_0-\lambda_0)u_1 = -V_1u_0\,.
\end{displaymath}
Assuming that the first $k$  equations are solved by
$\lambda_0,\dots,\lambda_{k-1}\,$, $u_0,\dots,u_{k-1}\,$, we have to choose such
$\lambda_k$ so that a solution $u_k$ to the $(k+1)$-th
equation exists. It easily follows that the solvability condition is
\begin{displaymath}
 -\sum_{j=1}^k(V_j-\lambda_j)u_{k-j}\in \ker(A_0^*-\bar\lambda_0)^\perp\,,
\end{displaymath}
yielding
\begin{equation}\label{exprLambdak}
\lambda_k = \frac{1}{\int_{\mathbb{R}_+}u_0(x)^2dx}\left(\sum_{j=1}^{k-1}\int_{\mathbb{R}_+}\big(i\beta_jx^{j+1}-\lambda_j)u_{k-j}(x)u_0(x)dx
+i\beta_k\int_{\mathbb{R}_+}x^{k+1}u_0(x)^2dx\right)\,.
\end{equation}
For this value of $\lambda_k\,$, there exists $u_k\in\mathcal{D}(A_0)$
satisfying (\ref{eqK}). Invoking inductive arguments, we assume that
each function $u_0,\dots,u_{k-1}$ is in $\mathcal{S}(\mathbb{R}_+)\,$.
Then, it easily follows that $u_k\in\mathcal{S}(\mathbb{R}_+)\,$. We
can then set $u(x,\varepsilon)$ and $\lambda(\varepsilon)$ to be some appropriate Borel sums
of the formal series $\sum u_j(x)\varepsilon^j$ and $\sum\lambda_j \varepsilon^j\,$,
respectively.\\

We now construct from $u(\cdot,\varepsilon)$ a quasimode satisfying the desired
boundary conditions.  Let $c_0>0$ and
$\chi\in\mathcal{C}_0^\infty\big((-c_0,c_0) ; [0,1]\big)$ be such that $\chi(y)
= 1$ for all $y\in[-c_0/2,c_0/2]$, and such that $\chi', \chi''$ are
bounded. We set
\[
 \chi_\varepsilon(x) = \chi(\varepsilon^{1-\rho}x)\,.
\]
Then, for $p=1,2\,$, we have
\begin{equation}\label{suppChi1}
\mathbb{R}_+\cap\textrm{Supp }\chi_\varepsilon^{(p)} \subset [c_0\varepsilon^{\rho-1}/2,c_0\varepsilon^{\rho-1}]\,,
\end{equation}
and
\begin{equation}\label{suppChi2}
\sup_{x\in\mathbb{R}}\big|\chi_\varepsilon^{(p)}(x)\big| = \mathcal{O}\big(\varepsilon^{p(1-\rho)}\big)\,.
\end{equation}
We next define
\begin{displaymath}
\psi_\varepsilon(x) =
  \mathbf{1}_{\mathbb{R}_+}(x)\chi_\varepsilon(x)u(x,\varepsilon)\,.
\end{displaymath}
Then, we write
\begin{displaymath}
 \mathcal{A}_\varepsilon = A_0 + \sum_{j=1}^NV_j(x)\varepsilon^j + \frac{1}{\varepsilon}R_{N+1}(\varepsilon,x)\,,
\end{displaymath}
where $R_{N+1}$ denotes the remainder term in the $(N+1)$-th order
Taylor expansion of $V$ near $x=0$ (so that $\varepsilon^{-1}R_{N+1}(\varepsilon x)$ is of
order $\mathcal{O}(\varepsilon^{N+1})\,$).\\
Then, we have
\begin{equation}\label{decomp1}
 \big(\mathcal{A}_\varepsilon-\lambda(\varepsilon)\big)\psi_\varepsilon  =  
 \chi_\varepsilon\big(\mathcal{A}_\varepsilon-\lambda(\varepsilon)\big)u(\cdot,\varepsilon) + [\mathcal{A}_\varepsilon,\chi_\varepsilon]u(\cdot,\varepsilon)\,.
\end{equation}
We seek an estimate for both terms on the right-hand side.
Consider the first term, for which we have
\begin{equation}
\label{eq:54}
  \left\|\chi_\varepsilon\big(\mathcal{A}_\varepsilon-\lambda(\varepsilon)\big)u(\cdot,\varepsilon)\right\|  \leq 
  \left\|\left(A_0+\sum_{j=1}^NV_j\varepsilon^j-\lambda(\varepsilon)\right)u(\cdot,\varepsilon)\right\| + \left\|\varepsilon^{-1}R_{N+1}(\varepsilon,\cdot)u(\cdot,\varepsilon)\right\| \,.
\end{equation}
By the construction of $u$ and $\lambda$, the first term on the
right-hand side is of order $\mathcal{O}(\varepsilon^{N+1})$. Furthermore,
there exists $c_N>0$ such that
\begin{equation}
 \left\|\varepsilon^{-1}R_{N+1}(\varepsilon\cdot)u(\cdot,\varepsilon)\right\| \leq c_N\varepsilon^{N+1}\|x^{N+2}u(\cdot,\varepsilon)\|
  = \mathcal{O}(\varepsilon^{N+1})\,,
\end{equation}
where we made use of the fact that $u(\cdot,\varepsilon)\in\mathcal{S}(\mathbb{R})\,$. Therefore,
there exists $K_N>0$ such that
\begin{equation}\label{estRHS1}
\left\|\chi_\varepsilon\big(\mathcal{A}_\varepsilon-\lambda(\varepsilon)\big)u(\cdot,\varepsilon)\right\| \leq K_N\varepsilon^{N+1}\,.
\end{equation}
Consider, next, the commutator term in (\ref{decomp1}). Since
$u(\cdot,\varepsilon)\in\mathcal{S}(\mathbb{R})\,$, (\ref{suppChi1}) and
(\ref{suppChi2}) yield
\begin{equation}\label{estRHScommut}
\left\|[\mathcal{A}_\varepsilon,\chi_\varepsilon]u(\cdot,\varepsilon)\right\|  \leq  2\|\chi_\varepsilon'\pa_xu(\cdot,\varepsilon)\| + \|\chi_\varepsilon''u(\cdot,x)\|
= \mathcal{O}(\varepsilon^\infty)\|\psi_\varepsilon\|\,.
\end{equation}
Finally, by (\ref{decomp1}), (\ref{estRHS1}) and (\ref{estRHScommut}), we have
\begin{displaymath}
 \left\|\big(\mathcal{A}_\varepsilon-\lambda(\varepsilon)\big)\psi_\varepsilon\right\| =
 \mathcal{O}(\varepsilon^\infty)\|\psi_\varepsilon\|\,.
\end{displaymath}
\end{proof}

\subsection{Proof of Theorem \ref{thm:1D}}\label{ss:1D_proof}
Once the quasimodes associated with the approximate eigenvalues
(\ref{devBord}) have been found, it remains necessary to prove that
such eigenvalues indeed exist in $\sigma(\A_h)$.

\begin{lemma}
  Let $n\in\N$ and $\lambda_n$ be given by the expansion \eqref{devBord}.  Let
  $\lambda=\lambda_n+re^{i\theta}$ where $\theta\in[0,2\pi)$. Then for $\alpha\in(1,4/3)$, there exist $\delta>0$, $\varepsilon_0>0$
  and $C>0$  such that for any $\varepsilon\in(0,\varepsilon_0)$ and $r$ satisfying $\varepsilon^{\alpha}<r<\delta$, we have
  \begin{equation}
    \label{eq:9}
 \|(\mathcal{A}_\varepsilon-\lambda)^{-1}\|\leq \frac{C}{r} \,.
 \end{equation}
  \end{lemma}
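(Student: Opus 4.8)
The strategy is to show that on the circle $|\lambda - \lambda_n| = r$ the operator $\mathcal{A}_\varepsilon - \lambda$ is bounded below by a constant times $r$, uniformly for $\varepsilon^\alpha < r < \delta$. Since $\mathcal{A}_\varepsilon$ is a closed operator with compact resolvent (so $\sigma(\mathcal{A}_\varepsilon)$ is discrete) and by \cite[Proposition 6.1]{hen15} the only eigenvalue near $\lambda_n$ in the relevant region is the one we are tracking, for $\delta$ small and $r$ in the stated range the circle avoids the spectrum and the resolvent exists; the whole issue is the quantitative bound \eqref{eq:9}. The natural route is an a priori estimate: I would show that for every $u \in D(\mathcal{A}_\varepsilon)$ with $\|u\| = 1$ one has $\|(\mathcal{A}_\varepsilon - \lambda)u\| \geq cr$ for the same constant $c$, which immediately gives injectivity with the desired norm bound on the inverse on its range, and then separately (or by the same estimate applied to $\mathcal{A}_\varepsilon^*$, which has the same structure with $V_\varepsilon$ replaced by $-V_\varepsilon$) control the cokernel.

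\textbf{Key steps.} First I would split an arbitrary normalized $u$ into its component along the quasimode direction and the orthogonal complement: write $u = \zeta \psi_\varepsilon / \|\psi_\varepsilon\| + w$ with $w \perp \psi_\varepsilon$. On the quasimode piece, $(\mathcal{A}_\varepsilon - \lambda)\psi_\varepsilon = (\mathcal{A}_\varepsilon - \lambda_n)\psi_\varepsilon - (\lambda - \lambda_n)\psi_\varepsilon$, and by \eqref{eq:7} the first term is $\mathcal{O}(\varepsilon^\infty)\|\psi_\varepsilon\|$ while the second has norm $r\|\psi_\varepsilon\|$; since $r > \varepsilon^\alpha$, the $\mathcal{O}(\varepsilon^\infty)$ error is negligible compared with $r$, so this piece contributes a term of size comparable to $r|\zeta|$. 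Second, on the orthogonal complement I need a lower bound $\|(\mathcal{A}_\varepsilon - \lambda)w\| \geq c_0 \|w\|$ with $c_0$ \emph{independent of} $r$ (a fixed gap); this is the substitute for a spectral gap of $\mathcal{A}_\varepsilon$ transverse to the quasimode. I would get it by combining the known lower bound \eqref{eq:4} (which in the rescaled variables says $\Re\sigma(\mathcal{A}_\varepsilon) \geq$ const) with the fact that $\lambda_n$ is the unique eigenvalue in a fixed neighborhood — more precisely, from \cite[Proposition 6.1]{hen15} and the compactness of the resolvent there is a fixed annulus $\delta_0 \leq |\mu - \lambda_n| \leq 2\delta_0$ free of spectrum, and a deformation/Grushin-type argument reduces the estimate on $w$ to resolvent control on that annulus together with the first-order perturbation theory already carried out in the Proposition. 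Third, I would treat the cross terms: $\langle (\mathcal{A}_\varepsilon - \lambda)\psi_\varepsilon, (\mathcal{A}_\varepsilon-\lambda)w\rangle$ type quantities are $\mathcal{O}(\varepsilon^\infty)\|w\|$ by \eqref{eq:7} after moving $\mathcal{A}_\varepsilon - \lambda_n$ onto $\psi_\varepsilon$, so they do not spoil the two main lower bounds; assembling, $\|(\mathcal{A}_\varepsilon - \lambda)u\|^2 \gtrsim r^2|\zeta|^2 + c_0^2\|w\|^2 - (\text{small}) \gtrsim r^2$ since $r < \delta < c_0$. Finally, the same argument applied to $\mathcal{A}_\varepsilon^*$ (whose quasimode is $\overline{\psi_\varepsilon}$ and whose relevant eigenvalue is $\overline{\lambda_n}$) gives surjectivity of $\mathcal{A}_\varepsilon - \lambda$, and \eqref{eq:9} follows.

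\textbf{Main obstacle.} The delicate point is the transverse lower bound $\|(\mathcal{A}_\varepsilon - \lambda)w\| \geq c_0\|w\|$ with $c_0$ bounded below independently of $\varepsilon$ and $r$. One cannot simply invoke selfadjoint spectral theory because $\mathcal{A}_\varepsilon$ is non-normal, so the resolvent can be large even far from the spectrum; what rescues the situation is that the \emph{pseudospectral} growth of $(\mathcal{A}_\varepsilon - \mu)^{-1}$ away from $\lambda_n$ is controlled on a fixed annulus by the estimates underlying \eqref{eq:4} and \cite[Prop. 6.1]{hen15} (which give semigroup bounds, hence resolvent bounds in a half-plane $\Re\mu \leq Mh^{2/3}$), and that the orthogonality $w \perp \psi_\varepsilon$ removes exactly the single near-degenerate direction. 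Making this rigorous amounts to a finite-dimensional (rank-one) Grushin reduction: set up $\mathcal{P}(\lambda) = \begin{pmatrix} \mathcal{A}_\varepsilon - \lambda & R_- \\ R_+ & 0 \end{pmatrix}$ with $R_+ u = \langle u, \psi_\varepsilon\rangle$ and $R_- = \psi_\varepsilon$, show $\mathcal{P}(\lambda)$ is invertible with $\mathcal{O}(1)$ norm for $\lambda$ in the disc (this is where the fixed annulus and \eqref{eq:7} enter), and read off that $\mathcal{A}_\varepsilon - \lambda$ is invertible iff the scalar $E_{-+}(\lambda)$ is nonzero, with $\|(\mathcal{A}_\varepsilon - \lambda)^{-1}\| \lesssim |E_{-+}(\lambda)|^{-1} + \mathcal{O}(1)$. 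Then $E_{-+}(\lambda) = c(\lambda - \lambda_n) + \mathcal{O}(\varepsilon^\infty) + \mathcal{O}((\lambda-\lambda_n)^2)$ with $c \neq 0$, so for $\varepsilon^\alpha < r < \delta$ we get $|E_{-+}(\lambda)| \gtrsim r$ and hence \eqref{eq:9}. Establishing the $\mathcal{O}(1)$ invertibility of the Grushin matrix — i.e. the transverse nondegeneracy — is the crux; everything else is bookkeeping with the already-constructed quasimode.
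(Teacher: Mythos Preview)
Your Grushin outline is reasonable in shape but leaves its self-identified crux unfilled: you assert the $\mathcal{O}(1)$ invertibility of the extended matrix $\mathcal{P}(\lambda)$ (equivalently, the uniform transverse bound $\|(\mathcal{A}_\varepsilon-\lambda)w\|\geq c_0\|w\|$ for $w\perp\psi_\varepsilon$) but the inputs you name do not deliver it. The result \cite[Prop.~6.1]{hen15} localizes the spectrum, yet for a non-selfadjoint operator absence of spectrum on an annulus gives no resolvent bound there; the semigroup estimates behind \eqref{eq:4} control $(\mathcal{A}_\varepsilon-\mu)^{-1}$ only in a left half-plane, not near $\lambda_n$; and the quasimode identity \eqref{eq:7} speaks only to the one direction you are removing, not to its complement. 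With $R_\pm$ built from the quasimode of the \emph{full} operator there is no reference problem whose Grushin matrix is already known to be invertible, so the argument as written is circular.

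The paper circumvents this by a different decomposition. Instead of projecting onto $\psi_\varepsilon$, it first localizes in space with cutoffs $\chi_\varepsilon^2+\tilde{\chi}_\varepsilon^2=1$ (near and away from $x=0$, at scale $\varepsilon^{\rho-1}$). On the far piece the rescaled potential satisfies $|V_\varepsilon-V(0)/\varepsilon|\gtrsim\varepsilon^{\rho-1}$, and the real and imaginary parts of $\langle\tilde{\chi}_\varepsilon^2u,(\mathcal{A}_\varepsilon-\lambda)u\rangle$ give $\|\tilde{\chi}_\varepsilon u\|\leq C\varepsilon^{(1-\rho)/2}(\|u\|+\|f\|)$. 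On the near piece one compares to the \emph{explicit} complex Airy model $A_0=-\partial_x^2+i\beta_0 x$ on $\mathbb{R}_+$, whose resolvent has the Riesz--Schauder expansion $(A_0-\lambda)^{-1}=(\lambda-\lambda_{0,n})^{-1}\Pi_n+T_n(\lambda)$ with $T_n$ holomorphic and uniformly bounded near $\lambda_{0,n}$. The transverse bound is then free: the component $w_n=(I-\Pi_n)(\chi_\varepsilon u)$ is controlled by $\|T_n\|$ applied to the localized right-hand side. The remaining scalar $\gamma_n=\langle\bar v_n,\chi_\varepsilon u\rangle$ is handled by pairing the localized equation with $\bar v_n$ and expanding $V$ one order further, which produces $|\gamma_n|\leq Cr^{-1}\big(\|f\|+(\varepsilon^{2\rho}+\varepsilon^{2-\rho})\|u\|\big)$. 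Choosing $\rho=2/3$ balances the two error powers at $\varepsilon^{4/3}$, and the hypothesis $r>\varepsilon^{\alpha}$ with $\alpha<4/3$ is exactly what makes $\varepsilon^{4/3}/r\to0$ and closes the estimate. Your sketch generates no analogue of this $4/3$ threshold, which is another symptom that the transverse step has not actually been carried out. If you want to salvage the Grushin route, build $R_\pm$ from the model eigenfunction $v_n$ of $A_0$ after localizing with $\chi_\varepsilon$; the reference Grushin problem for $A_0$ is then invertible by Riesz--Schauder, and the perturbation to $\mathcal{A}_\varepsilon$ is small on $\mathrm{supp}\,\chi_\varepsilon$ --- which is essentially the paper's argument recast.
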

  \begin{proof}
    Let $f\in L^2(0,a/\varepsilon)$ and $u\in D(\A_\varepsilon)$ satisfy
    \begin{equation}
\label{eq:10}
      (\mathcal{A}_\varepsilon-\lambda)u=f \,.
    \end{equation}
Let $\tilde{\chi}_\varepsilon$ satisfy
\begin{displaymath}
  \chi_\varepsilon^2+\tilde{\chi}_\varepsilon^2 =1
\end{displaymath}
and
\begin{equation}
\sup_{x\in\mathbb{R}}\big|\nabla\tilde{\chi}_\varepsilon(x)\big| = \mathcal{O}\big(\varepsilon^{(1-\rho)}\big)\,.
\end{equation}
Taking the inner product in $L^2(0,a/\varepsilon)$ of \eqref{eq:10} with $\tilde{\chi}_\varepsilon^2 u$ we obtain from
the real part
\begin{displaymath}
  \|\nabla(\tilde{\chi}_\varepsilon u)\|_2^2 = \Re\langle\tilde{\chi}_\varepsilon u,\tilde{\chi}_\varepsilon f\rangle+
  \|u\nabla\tilde{\chi}_\varepsilon\|_2^2+\Re\lambda\|\tilde{\chi}_\varepsilon u\|_2^2 \,.
\end{displaymath}
Hence,
\begin{equation}
  \label{eq:11}
\|\nabla(\tilde{\chi}_\varepsilon u)\|_2\leq  C\big(\varepsilon^{-(1-\rho)}\|\tilde{\chi}_\varepsilon f\|_2+\|\tilde{\chi}_\varepsilon u\|_2
+\varepsilon^{1-\rho}\|u\|_2\big)\,.
\end{equation}

From the imaginary part of the above inner product we obtain that
\begin{displaymath}
  \langle\tilde{\chi}_\varepsilon(V_\varepsilon-\varepsilon^{-1}V(0))u,\tilde{\chi}_\varepsilon u\rangle
  =\Im\langle\tilde{\chi}_\varepsilon u,\tilde{\chi}_\varepsilon f\rangle+\Im\langle\nabla(\tilde{\chi}_\varepsilon u),u\nabla\tilde{\chi}_\varepsilon\rangle+\Im\lambda\|\tilde{\chi}_\varepsilon u\|_2^2 \,.
\end{displaymath}
Since
\begin{displaymath}
 \min_{x\in(0,a/\varepsilon)} |\tilde{\chi}_\varepsilon(V_\varepsilon-\varepsilon^{-1}V(0))|\geq C\varepsilon^{\rho-1} \,,
\end{displaymath}
We obtain that
\begin{displaymath}
  \|\tilde{\chi}_\varepsilon u\|_2^2 \leq C\varepsilon^{1-\rho}\big[\|\tilde{\chi}_\varepsilon u\|_2^2
  +\|\tilde{\chi}_\varepsilon f\|_2^2+  \varepsilon^{2(1-\rho)}\|\nabla(\tilde{\chi}_\varepsilon u)\|_2^2 +\|u\|_2^2\big] \,.
\end{displaymath}
With the aid of \eqref{eq:11} we then obtain
\begin{equation}
\label{eq:12}
  \|\tilde{\chi}_\varepsilon u\|_2\leq C\varepsilon^{(1-\rho)/2}(\|u\|_2 +\|f\|_2) \,.
\end{equation}

We next seek an estimate for $\|\chi_\varepsilon u\|_2$. 
To this end we write
\begin{equation}
\label{eq:13}
  (A_0-\lambda)(\chi_\varepsilon u)=\chi_\varepsilon f -
  i\Big(V_\varepsilon-\frac{V(0)}{\varepsilon}-\beta_0x\Big) \chi_\varepsilon u +
  [\mathcal{A}_\varepsilon,\chi_\varepsilon]u \,.
\end{equation}
Denote by $v_n$ the eigenfunction of $A_0$ associated with the
eigenvalue $e^{i\pi/3}\beta_0^{1/3}\mu_n$. 
For any $g\in L^2(0,a/\varepsilon)$ let
\begin{displaymath}
  \Pi_ng=\langle\bar{v}_n,g\rangle v_n \,.
\end{displaymath}
Let further
\begin{displaymath}
  w_n = (I-\Pi_n)(\chi_\varepsilon u) \,.
\end{displaymath}
By \eqref{eq:13} we easily obtain that
\begin{displaymath}
  (A_0-\lambda)w_n = (I-\Pi_n)\Big(\chi_\varepsilon f -
  i\Big(V_\varepsilon-\frac{V(0)}{\varepsilon}-\beta_0x\Big) \chi_\varepsilon u +
  [\mathcal{A}_\varepsilon,\chi_\varepsilon]u \Big)\,.
\end{displaymath}
By the Riesz-Schauder theory for compact operators (cf. \cite{ag65}
for instance) we have that
\begin{displaymath}
  (A_0-\lambda)^{-1} = \frac{\Pi_n}{\lambda-\lambda_{0,n}} + T_n(\lambda) \,,
\end{displaymath}
where $T_n(\lambda)$ is holomorphic, and hence bounded, in some
fixed neighborhood of $\lambda_{0,n}$. Consequently, there exists
$C(n,\beta_0)$ such that $\|(A_0-\lambda)^{-1}(I-\Pi_n)\|\leq C$, and  therefore, 
\begin{multline*}
  \| w_n\|_2\leq C \Big\|\Big(\chi_\varepsilon f -
  i\Big(V_\varepsilon-\frac{V(0)}{\varepsilon}-\beta_0x\Big) \chi_\varepsilon u +
  [\mathcal{A}_\varepsilon,\chi_\varepsilon]u \Big)\Big\|_2\\\leq C \Big(\|f\|_2 +
  \Big\|\Big(V_\varepsilon-\frac{V(0)}{\varepsilon}-\beta_0x\Big) \chi_\varepsilon u\Big\|_2 +
 \| [\mathcal{A}_\varepsilon,\chi_\varepsilon]u \|_2\Big)\,.
\end{multline*}
Hence,
\begin{displaymath}
  \| w_n\|_2\leq C \big(\|f\|_2
  +[\varepsilon^{2\rho-1}+\varepsilon^{2(1-\rho)}]\|u\|_2+\varepsilon^{1-\rho}\|\nabla u\|_2\big)\,,
\end{displaymath}
and since
\begin{equation}
\label{eq:14}
  \|\nabla u\|_2^2= \Re\langle u,f\rangle+ \Re\lambda\|u\|_2^2\,,
\end{equation}
we obtain that
\begin{equation}
  \label{eq:15}
\| w_n\|_2\leq C \big(\|f\|_2
  +[\varepsilon^{2\rho-1}+\varepsilon^{1-\rho}]\|u\|_2)\,.
\end{equation}

To complete the proof, we seek an estimate for $\Pi_n(\chi_\varepsilon u)$. Taking
the inner product of \eqref{eq:13} with $\chi_\varepsilon\bar{v}_n$ yields
\begin{multline}
\label{eq:16}
  (e^{i\pi/3}\beta_0^{1/3}\mu_n-\lambda)\gamma_n=\langle\bar{v}_n,f\rangle+\langle[A_0,\chi_\varepsilon]\bar{v}_n,\chi_\varepsilon u\rangle- \langle\tilde{\chi}_\varepsilon\bar{v}_n,\tilde{\chi}_\varepsilon f\rangle+\\
  i\Big\langle{\bar v}_n,\Big(V_\varepsilon-\frac{V(0)}{\varepsilon}-\beta_0x\Big) \chi_\varepsilon u \Big\rangle
  +\langle\chi_\varepsilon\bar{v}_n,[A_0,\chi_\varepsilon]u \rangle +\\
  (e^{i\pi/3}\beta_0^{1/3}\mu_n-\lambda)\langle\tilde{\chi}_\varepsilon v_n,\tilde{\chi}_\varepsilon u\rangle -
  i\Big\langle(1-\chi_\varepsilon){\bar v}_n,\Big(V_\varepsilon-\frac{V(0)}{\varepsilon}-\beta_0x\Big)\chi_\varepsilon u \Big\rangle\,,
  \end{multline}
  where
\begin{displaymath}
  \gamma_n=\langle\bar{v}_n,\chi_\varepsilon u\rangle \,.
\end{displaymath}
By the exponential decay of $v_n$ and \eqref{eq:14} we have that
\begin{multline}
\label{eq:17}
  \Big| \langle[A_0,\chi_\varepsilon]\bar{v}_n,\chi_\varepsilon u\rangle- \langle\tilde{\chi}_\varepsilon\bar{v}_n,\tilde{\chi}_\varepsilon f\rangle +
  (e^{i\pi/3}\beta_0^{1/3}\mu_n-\lambda)\langle\tilde{\chi}_\varepsilon v_n,\tilde{\chi}_\varepsilon u\rangle -\\
  i\Big\langle(1-\chi_\varepsilon){\bar v}_n,\Big(V_\varepsilon-\frac{V(0)}{\varepsilon}-\beta_0x\Big)
  \chi_\varepsilon u \Big\rangle\Big| \leq Ce^{-\varepsilon^{-3(1-\rho)/2}}(\|u\|_2+\|f\|_2)\,.
\end{multline}
We next write
\begin{multline*}
   \Big\langle{\bar v}_n,\Big(V_\varepsilon-\frac{V(0)}{\varepsilon}-\beta_0x\Big) \chi_\varepsilon u \Big\rangle=
   \varepsilon\gamma_n\langle{\bar v}_n,\beta_1x^2v_n\rangle \\ +  \Big\langle{\bar v}_n,\Big(V_\varepsilon-\frac{V(0)}{\varepsilon}-\beta_0x\Big)
   w_n \Big\rangle+
  \gamma_n \Big\langle{\bar v}_n,\Big(V_\varepsilon-\frac{V(0)}{\varepsilon}-\beta_0x-\varepsilon\beta_1x^2\Big)v_n\rangle \,.
\end{multline*}
We now observe that
\begin{displaymath}
  \Big\|{\bar v}_n\Big(V_\varepsilon-\frac{V(0)}{\varepsilon}-\beta_0x\Big)\Big\|_2 \leq C\varepsilon \,,
\end{displaymath}
and that
\begin{displaymath}
   \Big|\Big\langle{\bar v}_n,\Big(V_\varepsilon-\frac{V(0)}{\varepsilon}-\beta_0x-\varepsilon\beta_1x^2\Big)v_n\rangle\Big|\leq C\varepsilon^2 \,.
\end{displaymath}
As $|\gamma_n|\leq \|u\|_2$, we obtain with the aid of \eqref{eq:15} that
\begin{displaymath}
   \Big|\Big\langle{\bar v}_n,\Big(V_\varepsilon-\frac{V(0)}{\varepsilon}-\beta_0x\Big) \chi_\varepsilon u
   \Big\rangle- \varepsilon\gamma_n\langle{\bar v}_n,\beta_1x^2v_n\rangle \Big|\leq C\varepsilon(\|f\|_2 +[\varepsilon^{2\rho-1}+\varepsilon^{1-\rho}]\|u\|_2)\,.
\end{displaymath}
Substituting the above, together with \eqref{eq:17} into \eqref{eq:16}
yields
\begin{displaymath}
  |(e^{i\pi/3}\beta_0^{1/3}\mu_n+i\varepsilon\gamma_n\langle{\bar v}_n,\beta_1x^2v_n\rangle-\lambda)\gamma_n| \leq C(\|f\|_2 +
  [\varepsilon^{2\rho}+\varepsilon^{2-\rho}]\|u\|_2)
\end{displaymath}
Consequently, we must have
\begin{equation}
\label{eq:18}
  |\gamma_n|\leq \frac{C}{r}(\|f\|_2 + [\varepsilon^{2\rho}+\varepsilon^{2-\rho}]\|u\|_2)\,.
\end{equation}

We now choose $\rho=2/3$. Since
\begin{displaymath}
  \|u\|_2\leq C(|\gamma_n| + \|w_n\|_2+ \|\tilde{\chi}_\varepsilon u\|_2)\,,
\end{displaymath}
\eqref{eq:9} easily follows from \eqref{eq:12}, \eqref{eq:15}, and \eqref{eq:18}.
  \end{proof}

\begin{lemma}
  \label{lem:existunique}
Let $1<\alpha<4/3$. Let further
\begin{equation}
\label{eq:19}
  \Lambda_{n,N}(\varepsilon)=e^{\sigma i\pi/3}|\beta_0|^{2/3}|\mu_n|
+\sum_{j=1}^{N}\alpha_{j,n}\varepsilon^j \,.
\end{equation}
Then, for sufficiently small $\varepsilon$ there exists $\lambda_n(\varepsilon)$ such that
\begin{equation}
  \label{eq:20}
\sigma(\A_\varepsilon)\cap B(\Lambda_{n,1},2\varepsilon^\alpha)=\{\lambda_n(\varepsilon)\}\,.
\end{equation}
Furthermore, the eigenspace associated with $\lambda_n(\varepsilon)$ is of dimension
$1$. 
\end{lemma}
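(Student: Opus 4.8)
The plan is to deduce Lemma~\ref{lem:existunique} from the quasimode of the Proposition together with the resolvent estimate of the preceding Lemma, by means of the Riesz spectral projection. Since $V_\varepsilon-\varepsilon^{-1}V(0)$ is bounded on the bounded interval $(0,a/\varepsilon)$, the operator $\A_\varepsilon$ has compact resolvent, hence discrete spectrum made of eigenvalues of finite algebraic multiplicity. Fix $n$. Because the truncation $\Lambda_{n,1}(\varepsilon)$ agrees with the full expansion \eqref{devBord} up to $\OO(\varepsilon^2)$ and $\alpha<2$, the preceding Lemma, after harmless adjustments of $C$ and $\delta$, also applies with $\Lambda_{n,1}(\varepsilon)$ as centre: for every $r\in(\varepsilon^\alpha,\delta)$ the circle $\Gamma_r=\{\lambda:|\lambda-\Lambda_{n,1}(\varepsilon)|=r\}$ lies in the resolvent set of $\A_\varepsilon$ and $\|(\A_\varepsilon-\lambda)^{-1}\|\le C/r$ there. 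Consequently
\begin{displaymath}
 P_n(\varepsilon)=-\frac{1}{2\pi i}\oint_{\Gamma_r}(\A_\varepsilon-\lambda)^{-1}\,d\lambda
\end{displaymath}
is a finite-rank projection which, by Cauchy's theorem, does not depend on $r\in(\varepsilon^\alpha,\delta)$; in particular $\sigma(\A_\varepsilon)\cap B(\Lambda_{n,1}(\varepsilon),\delta)\subset\overline{B(\Lambda_{n,1}(\varepsilon),\varepsilon^\alpha)}$, and $\mathrm{rank}\,P_n(\varepsilon)$ equals the total algebraic multiplicity of $\sigma(\A_\varepsilon)$ in that disc. It then suffices to prove $\mathrm{rank}\,P_n(\varepsilon)=1$.

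For the lower bound $\mathrm{rank}\,P_n(\varepsilon)\ge1$ I would use the quasimode $\psi_\varepsilon$ of the Proposition. Setting $g_\varepsilon=(\A_\varepsilon-\nu(\varepsilon))\psi_\varepsilon$, so that $\|g_\varepsilon\|=\OO(\varepsilon^\infty)\|\psi_\varepsilon\|$, and noting $|\nu(\varepsilon)-\Lambda_{n,1}(\varepsilon)|=\OO(\varepsilon^2)$, one has for $\lambda\in\Gamma_r$ the identity
\begin{displaymath}
 (\A_\varepsilon-\lambda)^{-1}\psi_\varepsilon=\frac{1}{\nu(\varepsilon)-\lambda}\Big(\psi_\varepsilon-(\A_\varepsilon-\lambda)^{-1}g_\varepsilon\Big)\,.
\end{displaymath}
Since $\nu(\varepsilon)$ is enclosed by $\Gamma_r$ for $\varepsilon$ small, integrating over $\Gamma_r$ and using $\|(\A_\varepsilon-\lambda)^{-1}g_\varepsilon\|\le (C/r)\|g_\varepsilon\|$ gives $P_n(\varepsilon)\psi_\varepsilon=\psi_\varepsilon+\OO(\varepsilon^\infty)\|\psi_\varepsilon\|$, whence $P_n(\varepsilon)\neq0$ for $\varepsilon$ small. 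Thus $\A_\varepsilon$ has at least one eigenvalue in $\overline{B(\Lambda_{n,1}(\varepsilon),\varepsilon^\alpha)}$; and once the rank is known to be one, the same relation forces this eigenvalue $\lambda_n(\varepsilon)$ to coincide with $\nu(\varepsilon)$ up to $\OO(\varepsilon^\infty)$, in accordance with \eqref{devBord}.

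The heart of the argument is the matching upper bound $\mathrm{rank}\,P_n(\varepsilon)\le1$, obtained by comparing $P_n(\varepsilon)$ with the rank-one Riesz projection $Q$ of the model operator $A_0$ onto its $n$-th (simple) eigenspace, i.e.\ $Qf=\langle\bar v_n,f\rangle v_n$ with $v_n$ normalised so that $\langle\bar v_n,v_n\rangle=1$. Revisiting the proof of the preceding Lemma, but this time carrying an arbitrary right-hand side $f$ in $(\A_\varepsilon-\lambda)u=f$, one extracts for $\lambda\in\Gamma_r$ a decomposition
\begin{displaymath}
 (\A_\varepsilon-\lambda)^{-1}f=\frac{\langle\bar v_n,f\rangle}{\Lambda_{n,1}(\varepsilon)-\lambda}\,v_n+E_\varepsilon(\lambda)f\,,
\end{displaymath}
where the first term is the contribution of the component $\gamma_n v_n=\Pi_n(\chi_\varepsilon u)$ of $u$ — the scalar equation for $\gamma_n$ being $(\Lambda_{n,1}(\varepsilon)-\lambda)\gamma_n=\langle\bar v_n,f\rangle$ up to lower-order corrections of relative size $\OO(\varepsilon^{4/3})$ — while $E_\varepsilon(\lambda)f$ collects the transverse component $w_n$ and the tail $\tilde\chi_\varepsilon u$. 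Using the resolvent bound $\|u\|\le C\|f\|/r$ of the preceding Lemma together with the smallness of the subprincipal and cutoff errors (one takes $\rho=2/3$), this yields $\sup_{\Gamma_r}\|E_\varepsilon(\lambda)\|\le C\big(1+\varepsilon^{1/6}r^{-1}+\varepsilon^{4/3}r^{-2}\big)$. Integrating over $\Gamma_r$, using $\oint_{\Gamma_r}(\Lambda_{n,1}(\varepsilon)-\lambda)^{-1}\,d\lambda=-2\pi i$ and choosing $r=\tfrac32\varepsilon^\alpha$, one obtains
\begin{displaymath}
 \|P_n(\varepsilon)-Q\|\le r\sup_{\Gamma_r}\|E_\varepsilon(\lambda)\|\le C\big(\varepsilon^\alpha+\varepsilon^{1/6}+\varepsilon^{4/3-\alpha}\big)\,,
\end{displaymath}
which tends to $0$ as $\varepsilon\to0$ throughout $1<\alpha<4/3$, the last term being where $\alpha<4/3$ is used. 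Hence $\|P_n(\varepsilon)-Q\|<1$ for $\varepsilon$ small, and two projections at distance $<1$ have equal rank, so $\mathrm{rank}\,P_n(\varepsilon)=\mathrm{rank}\,Q=1$.

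Collecting the pieces, $\sigma(\A_\varepsilon)\cap\overline{B(\Lambda_{n,1}(\varepsilon),\tfrac32\varepsilon^\alpha)}$ reduces to a single eigenvalue $\lambda_n(\varepsilon)$ of algebraic — hence geometric — multiplicity one, while the resolvent bound on the annulus $\tfrac32\varepsilon^\alpha\le|\lambda-\Lambda_{n,1}(\varepsilon)|\le2\varepsilon^\alpha$ shows there is no further spectrum in $B(\Lambda_{n,1}(\varepsilon),2\varepsilon^\alpha)$; this is exactly \eqref{eq:20}, with a one-dimensional eigenspace. The step I expect to be the main obstacle is the third one: one has to re-run the estimates in the proof of the preceding Lemma while keeping track of the dependence of the whole resolvent — not just of the coefficient $\gamma_n$ — on $f$ and $\lambda$, and verify that every error term, after multiplication by the circumference $2\pi r$ of $\Gamma_r$, still tends to zero; this bookkeeping is precisely what forces the admissible range $1<\alpha<4/3$.
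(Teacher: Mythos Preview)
Your existence argument (rank $\ge1$) via the quasimode and the Riesz projection is essentially the paper's. The simplicity argument (rank $\le1$), however, is genuinely different. The paper argues by contradiction: assuming two elements $w_1,w_2$ in the range of $P_n$ with $\langle\bar w_1,w_2\rangle=0$, it shows directly from the eigenvalue equation that each $\chi_\varepsilon w_j$ is, up to $O(\varepsilon^{1-2\rho})$, a scalar multiple of the model eigenfunction $u_0$; hence $|\langle\chi_\varepsilon\bar w_1,\chi_\varepsilon w_2\rangle|\ge 1-C\varepsilon^{1-2\rho}$, and the assumed orthogonality transfers this to the tail $|\langle\tilde\chi_\varepsilon\bar w_1,\tilde\chi_\varepsilon w_2\rangle|\ge 1-C\varepsilon^{1-2\rho}$, contradicting the decay $\|\tilde\chi_\varepsilon w_j\|\le C\varepsilon^{1-\rho}$ obtained from the imaginary part. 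Your route---bounding $\|P_n(\varepsilon)-Q\|$ by re-running the resolvent decomposition of the preceding lemma and integrating---is the standard perturbation-theoretic alternative and is correct: with $\rho=2/3$ one indeed gets $\|E_\varepsilon(\lambda)\|\le C(1+\varepsilon^{1/6}r^{-1}+\varepsilon^{4/3}r^{-2})$, and the constraint $\alpha<4/3$ enters exactly where you indicate. Two small points to clean up: since $v_n$ lives in $L^2(\R_+)$ while $P_n$ acts on $L^2(0,a/\varepsilon)$, replace $Q$ by a renormalized truncation so that it is an honest projection before invoking the equal-rank lemma; and your bound for $E_\varepsilon$ omits an $O(\varepsilon r^{-1})$ term from the $C\varepsilon\|f\|$ contribution in the scalar equation for $\gamma_n$, but this is dominated by $\varepsilon^{1/6}r^{-1}$ and changes nothing. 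Your approach gives existence and simplicity in one stroke; the paper's avoids reopening the resolvent estimates and needs only qualitative localization of the eigenfunctions.
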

\begin{proof}
  We follow the same procedure used in \cite{aletal13,aletal12} to
  prove existence of eigenvalues. Let $u_{n,N}$ be given by
  \eqref{eq:8} and set $\psi_{n,N}=\chi_\varepsilon u_{n,N}$. Let
  $\varepsilon^{\alpha}<r<2\varepsilon^\alpha$ be such that $\partial B(\Lambda_{n,N},r)\in\rho(\A_\varepsilon)$. Let
  further $\lambda\in\partial B(\Lambda_{n,N},r)$. Then, by \eqref{eq:7} we have
  \begin{displaymath}
    (\A_\varepsilon-\lambda)\psi_{n,N}= (\Lambda_{n,N}-\lambda)\psi_{n,N}+\varepsilon^{N+1}f\,,
  \end{displaymath}
where $\|f\|_2\leq C$,  for some $C>0$ which is independent of
$\varepsilon$. Applying $(\A_\varepsilon-\lambda)^{-1}$ to both sides of the above equation
yields
\begin{displaymath}
  (\A_\varepsilon-\lambda)^{-1}\psi_{n,N}=\frac{1}{\Lambda_{n,N}-\lambda}\big[\psi_{n,N}-\varepsilon^{N+1}(\A_\varepsilon-\lambda)^{-1}f\big] \,.
\end{displaymath}
Integrating the above identity with respect to $\lambda$ along
$\partial B(\Lambda_{n,N},r)$ yields
\begin{displaymath}
  P_n\psi_{n,N}=
  \psi_{n,N}-\oint_{\partial B(\Lambda_{n,N},r)}\frac{\varepsilon^{N+1}(\A_\varepsilon-\lambda)^{-1}f}{2\pi i(\Lambda_{n,N}-\lambda)}\,d\lambda \,,
\end{displaymath}
where $P_n$ is the spectral projection
\begin{equation}
\label{eq:21}
  P_n =\frac{1}{2\pi i}\oint_{\partial B(\Lambda_{n,N},r)}(\A_\varepsilon-\lambda)^{-1}\,d\lambda \,.
\end{equation}
With the aid of \eqref{eq:9} we then obtain that
\begin{equation}
  \label{eq:22}
\|(I-P_n)\psi_{n,N}\|_2 \leq C\varepsilon^{N+1-\alpha} \,.
\end{equation}
By Cauchy Theorem we now readily obtain that
\begin{displaymath}
  \sigma(\A_\varepsilon)\cap B(\Lambda_{n,1},2\varepsilon^\alpha)\neq \emptyset \,.
\end{displaymath}

We now prove that $P_nL^2(0,a/\varepsilon)$ is one dimensional. To this end
suppose that for some $\nu_1,\nu_2\in B(\Lambda_{n,1},2\varepsilon^\alpha)$ (which can be
equal or not) and $w_1,w_2\in D(\A_\varepsilon)$ we have 
\begin{equation}
\label{eq:23}
    (\A_\varepsilon-\nu_j)w_j=0 \quad j=1,2
\end{equation}
such that $\|w_1\|_2=\|w_2\|_2=1$ and
\begin{equation}
  \label{eq:24}
\langle\bar{w}_1,w_2\rangle=0\,.
\end{equation}
Let further
\begin{equation}
\label{eq:25}
  f_j= (A_0-\Lambda_{n,0})(\chi_\varepsilon w_j) \quad j=1,2\,.
\end{equation}
A simple calculation yields
\begin{equation}
\label{eq:26}
  f_j= \chi_\varepsilon(\nu_j-\Lambda_{n,0})w_j - i(V_\varepsilon-\varepsilon^{-1}V(0)-\beta_0x)\chi_\varepsilon w_j
  +[A_0,\chi_\varepsilon]w_j\quad j=1,2\,.
\end{equation}

We now turn to estimate the various terms on the right-hand-side of
\eqref{eq:26}. Let $j\in\{1,2\}$. For the first term we easily obtain, since
$\nu_j\in B(\Lambda_{n,1},2\varepsilon^\alpha)$ that
\begin{equation}
\label{eq:27}
  \|\chi_\varepsilon(\nu_j-\Lambda_{n,0})w_j\|_2\leq C\varepsilon\,.
\end{equation}
For the second term we have that
\begin{equation}
\label{eq:28}
  \|(V_\varepsilon-\varepsilon^{-1}V(0)-\beta_0x)\chi_\varepsilon w_j\|_2\leq C\varepsilon^{1-2\rho} \,.
\end{equation}
To estimate the last term we take the inner product of  \eqref{eq:23}
with $w_j$ to obtain from the real part that
\begin{displaymath}
   \|\nabla w_j\|_2\leq C \,.
\end{displaymath}
Consequently, we have that
\begin{displaymath}
\|[A_0,\chi_\varepsilon]w_j\|_2\leq \|\Delta\chi_\varepsilon w_j\|_2+2\|\nabla\chi_\varepsilon\cdot\nabla w_j\|_2\leq
C\varepsilon^{1-\rho} \,.
\end{displaymath}
Substituting the above, together with \eqref{eq:27} and \eqref{eq:28}
into \eqref{eq:26} then yields
\begin{equation}
\label{eq:29}
  \|f_j\|_2\leq C\varepsilon^{1-2\rho} \,.
\end{equation}

We now write
\begin{displaymath}
  \chi_\varepsilon w_j =  (\chi_\varepsilon w_j)_\|+ (\chi_\varepsilon w_j)_\perp \,,
\end{displaymath}
where
\begin{displaymath}
  (\chi_\varepsilon w_j)_\| =\langle\bar{u}_0,\chi_\varepsilon w_j\rangle u_0 \,.
\end{displaymath}
Applying Riesz-Schauder theory to $A_0$ yields, by \eqref{eq:25} and
\eqref{eq:26},
\begin{displaymath}
  \|(\chi_\varepsilon w_j)_\perp\|\leq C\varepsilon^{1-2\rho} \,.
\end{displaymath}
Consequently,
\begin{displaymath}
  |\langle\chi_\varepsilon{\bar w}_1,\chi_\varepsilon w_2\rangle| \geq 1-C\varepsilon^{1-2\rho} \,.
\end{displaymath}
Hence, by \eqref{eq:24} we have that
\begin{equation}
\label{eq:30}
  |\langle\tilde{\chi}_\varepsilon{\bar w}_1,\tilde{\chi}_\varepsilon w_2\rangle| \geq 1-C\varepsilon^{1-2\rho} \,.
\end{equation}

To complete the proof we take again the inner product of  \eqref{eq:23}
with $w_j$ to obtain, this time from the imaginary  part, that
\begin{displaymath}
   \|(V_\varepsilon-\varepsilon^{-1}V(0))w_j\|_2\leq C \,.
\end{displaymath}
Hence,
\begin{displaymath}
  \|w_j\|_{L^2(\varepsilon^{\rho-1},a/\varepsilon)}\leq C\varepsilon^{1-\rho}\,,
\end{displaymath}
from which we easily conclude that
\begin{displaymath}
   |\langle\tilde{\chi}_\varepsilon{\bar w}_1,\tilde{\chi}_\varepsilon w_2\rangle| \leq
   \|w_1\|_{L^2(\varepsilon^{\rho-1},a/\varepsilon)} \|w_2\|_{L^2(\varepsilon^{\rho-1},a/\varepsilon)}\leq C\varepsilon^{2(1-\rho)}\,,
\end{displaymath}
contradicting \eqref{eq:30} and therefore \eqref{eq:24}.
\end{proof}

\begin{proof}[Proof of Theorem  \ref{thm:1D} ]
Recall that by \eqref{eq:7} we have
  \begin{displaymath}
    (\A_\varepsilon-\Lambda_{n,N})\psi_{n,N}= \varepsilon^{N+1}f\,,
  \end{displaymath}
  where $\|f\|_2$ is uniformly bounded as $\varepsilon\to0$.  We now apply the
  spectral projection $P_n$, defined in \eqref{eq:21} to both side of
  the above equations. It can be easily verified that $[P_n,\A_\varepsilon]=0$.
  Consequently
\begin{equation}
\label{eq:31}
  (\A_\varepsilon-\Lambda_{n,N})P_n\psi_{n,N}= \varepsilon^{N+1}P_nf\,.
\end{equation}
By \eqref{eq:20} we have that
\begin{equation}
\label{eq:32}
  (\A_\varepsilon-\Lambda_{n,N})P_n\psi_{n,N}=(\lambda_n-\Lambda_{n,N})P_n\psi_{n,N}\,.
\end{equation}
By \eqref{eq:22}  we have that
\begin{displaymath}
  \|P_n\psi_{n,N}\|_2\geq 1-C\varepsilon^{N+1} \,.
\end{displaymath}
Substituting the above, together with \eqref{eq:32} into \eqref{eq:31}
then yields
\begin{displaymath}
  |\lambda_n-\Lambda_{n,N}|\leq C\varepsilon^{N+1}
\end{displaymath}
Theorem \ref{thm:1D} now easily follows from \eqref{eq:6}
\end{proof}

\section{Two dimensions: Quasimode construction} 
\label{sec:2}
Let $\Omega\subset\subset\R^2$ be a $C^3$ domain and $V\in C^3(\bar{\Omega})$.  Let
$\partial\Omega_\perp$ denote the portion of the boundary $\partial\Omega$ where $\nabla V$ is
orthogonal to $\partial\Omega$.  (Note that $\partial\Omega_\perp$ may be finite, but is
never empty by the continuity of $V$ on $\partial\Omega$.)  Let $x_0\in\partial\Omega_\perp$
such that
\begin{displaymath}
 |\nabla V(x_0)| = \min_{x\in\partial \Omega_\perp}|\nabla V(x)|\,,
\end{displaymath}
and let $V_0 = V(x_0)\,$. We look for an approximation of the principal eigenvalue and the
corresponding eigenfunction of the operator
\begin{equation}
\label{eq:33}
  \A_h = -h^2\Delta + i(V-V_0) \,,
\end{equation}
defined over
\begin{displaymath}
  D(\A_h) = H^1_0(\Omega,\C)\cap H^2(\Omega,\C)\,.
\end{displaymath}

Define in a vicinity of $\partial\Omega$ a curvilinear coordinate system $(t,s)$
such that $t=d(x,\partial\Omega)$ and $s(x)$ denotes the distance (or arclength)
along $\partial\Omega$ connecting $x_0$ and the projection of $x$ on $\partial\Omega$. We
have
\begin{equation}
\label{eq:34}
  \Delta = \Big(\frac{1}{g} \frac{\partial}{\partial s}\Big)^2 + \frac{1}{g}
  \frac{\partial}{\partial t}\Big(g\frac{\partial}{\partial t}\Big) \,, 
\end{equation}
where
\begin{equation}
\label{eq:35}
  g= 1- t\kappa(s)\,,
\end{equation}
and $\kappa(s)$ is the curvature at $s$ on $\partial\Omega$.  Expanding $\Delta$ near
$x_0$ ($t^2+s^2\ll1$) yields for some $u\in D(\A_h)$
\begin{equation}
\label{eq:36}
  \Delta u = u_{tt}+ u_{ss}+ \Upsilon u \,,
\end{equation}
where
\begin{equation}
\label{eq:37}
  \Upsilon u =\Big(\frac{1}{g^2}-1\Big)u_{ss}+
 \frac{ t\kappa^\prime}{g^3}u_s-\frac{\kappa}{g}u_t \,.
\end{equation}
We next expand $V$ near $x_0$ 
\begin{equation}
\label{eq:38}
   V(s,t) -V_0 = ct + \frac{1}{2}(\alpha s^2
  +\beta t^2+2\sigma st) + \OO((s^2+t^2)^{3/2}) \,,  
\end{equation}
where
\begin{displaymath}
  c = V_t(x_0) \quad ; \quad \alpha = V_{ss}(x_0)  \quad ; \quad
  \beta = V_{tt}(x_0) \quad;\quad  \sigma = V_{st}(x_0)  \,.
\end{displaymath}
We note that $V_s(x_0)=0$ since $x_0\in\partial\Omega_\perp$. We confine the
discussion, in view of \eqref{eq:3} to the case where $\alpha c>0$. Without
any loss of generality we may assume $c>0$ (and hence $\alpha>0$ as well),
otherwise we can consider the spectrum of the complex conjugate of $\A_h$.

We search for an approximate eigenpair $(u,\lambda)$ of $\A_h$. Previous
works \cite{al08,hen15} suggest that one should look for such $u$
which is localized near $x_0$. Applying the transformation
\begin{equation}
\label{eq:39}
  \tau= \Big(\frac{c}{h^2}\Big)^{1/3}t \quad;\quad \xi=\Big(\frac{\alpha}{h^2}\Big)^{1/4}s
\end{equation}
to \eqref{eq:38} and \eqref{eq:36}  leads to the
following approximation for every $u\in D(\A_h)$
\begin{equation}
 \label{eq:40}
 \frac{\alpha}{\varepsilon c^2} \A_h u =   -u_{\tau\tau} + i\tau u + \varepsilon^{1/2}\Big(-u_{\xi\xi}+\frac{i}{2}\xi^2u\Big)
+ \Big(\frac{\varepsilon}{\alpha}\Big)^{3/4}i\sigma\xi\tau u + Ru \,,
\end{equation}
where 
\begin{equation}
\label{eq:137}
  \varepsilon=\alpha(h^2/c^4)^{1/3}\,, 
\end{equation}
$\|u\|_2=1$, and the operator $R$ satisfies, for all $u\in D(\A_h)\,$
\begin{multline}
  Ru=
  c^{2/3}\Big(\frac{\varepsilon}{\alpha}\Big)^{1/2}\Big(\frac{1}{g^2}-1\Big)u_{\xi\xi}+
  c^{2/3}\Big(\frac{\varepsilon}{\alpha}\Big)^{9/4}\frac{
    \tau c^{1/3}\kappa^\prime}{g^3}u_\xi-\Big(\frac{\varepsilon}{\alpha}\Big)\frac{c^{1/3}\kappa}{g}u_\tau+  \\
  i\frac{\alpha}{\varepsilon c^2}\Big(V(\xi,\tau)-V_0-\frac{\varepsilon}{\alpha} c^2\tau-
  \frac{c^2\varepsilon^{3/2}}{\alpha}\frac{1}{2}\xi^2-
  \Big(\frac{\varepsilon}{\alpha}\Big)^{7/4}c^2\sigma\xi\tau\Big) \,. \label{eq:R}
\end{multline}
It can be easily verified that for any $0<\gamma <1$ we have 
\begin{multline}
 \|Ru\|_{L^2(B_+(0,\varepsilon^{-\gamma}))}\leq C\varepsilon\Big[\|\varepsilon^{1/2}|\tau u_{\xi\xi}| +\varepsilon^{5/4}|\tau u_\xi|+|u_\tau|
\|_{L^2(B_+(0,\varepsilon^{-\gamma}))} + \\
 C\varepsilon\Big[\|\tau^2u\|_{L^2(B_+(0,\varepsilon^{-\gamma}))}+\varepsilon^{1/4}\|\xi^3u\|_{L^2(B_+(0,\varepsilon^{-\gamma}))}\Big]\,.\label{eq:estR}
\end{multline}

We seek an approximate solution for $\A_h u=\lambda u$. To this end, we
introduce the expansion
\begin{displaymath}
  u\cong u_0+\varepsilon^{1/4}u_1 +\varepsilon^{1/2}u_2 + \varepsilon^{3/4}u_3 + \OO(\varepsilon) \quad ; \quad
  \frac{\alpha}{\varepsilon c^2} \lambda=\lambda_0+\varepsilon^{1/4}\lambda_1 + \varepsilon^{1/2}\lambda_2+ \varepsilon^{3/4}\lambda_3+ \OO(\varepsilon) \,.
\end{displaymath}
Substituting into \eqref{eq:40} leads to the following $\OO(1)$ balance
\begin{subequations}
\label{eq:41}
  \begin{equation}
 \LL_\tau u_0 \overset{def}{=} - \frac{\partial^2u_0}{\partial\tau^2}+ i\tau u_0 = \lambda_0u_0 \quad;\quad u_0(0,\xi)=0 \,,
\end{equation}
where the operator $\LL_\tau$ is defined over
\begin{equation}
  D(\LL_\tau) = \{ u\in H^2(\R_+,\C)\cap H^1_0(\R_+,\C) \, | \, \tau u\in L^2(\R,\C)\}\,.
\end{equation}
\end{subequations}
The solution to \eqref{eq:41} associated with the energy $\lambda_0$ having the smallest real
part is given by
\begin{equation}
\label{eq:42}
  u_0(\tau,\xi)=v_0(\tau)w_0(\xi) \quad \text{where} \quad  v_0(\tau)= A_i(e^{i\pi/6}\tau+\mu_1)\,,
\end{equation}
and
\begin{equation}
\label{eq:43}
  \lambda_0 = e^{-i2\pi/3}\mu_1 \,,
\end{equation}
where $A_i$ is Airy's function and $\mu_1<0$ is its rightmost zero. The
function $w_0(\xi)$ will be determined from the $\OO(\varepsilon^{1/2})$
balance. 

The next order, or $\OO(\varepsilon^{1/4})$, balance in \eqref{eq:40} assumes the
form
\begin{equation}
\label{eq:44}
  (\LL_\tau-\lambda_0) u_1 = \lambda_1u_0 \quad;\quad u_1(0,\xi)=0 \,,
\end{equation}
Taking the inner product of \eqref{eq:44} with $\bar{v}_0$ yields
$\lambda_1=0$. Hence, $u_1=v_0(\tau)w_1(\xi)$.

The next order, or $\OO(\varepsilon^{1/2})$, balance in \eqref{eq:40} assumes the
form
\begin{equation}
\label{eq:45}
  (\LL_\tau-\lambda_0) u_2 =- \Big(\LL_\xi -\lambda_2\Big)u_0 \quad;\quad u_2(0,\xi)=0 \,,
\end{equation}
where
\begin{equation}
\label{eq:46}
 \LL_\xi = -\frac{\partial^2}{\partial\xi^2} +\frac{i}{2}\xi^2\,,
\end{equation}
is defined over
\begin{displaymath}
  D(\LL_\xi) = \{ u\in H^2(\R,\C) \, | \, \xi^2u\in L^2(\R,\C)\}
\end{displaymath}
For fixed $\xi$ we now take the inner product of the above equation
with $\bar{v}_0$, in $L^2(\R_+)$. After noticing that by Cauchy's Theorem
\begin{equation}\label{intv0}
  \int_0^\infty v_0^2(\tau) \,d\tau = e^{-i\pi/6}\int_0^\infty A_i^2(x+\mu_1)
\,dx \neq 0\,,
\end{equation}
we obtain
\begin{displaymath}
  (\LL_\xi -\lambda_2)w_0=0 \,.
\end{displaymath}
The solution  of the above problem corresponding to the $\lambda_2$ with smallest real part is given by 
\begin{equation}
\label{eq:47}
  w_0(\xi) = C_0\exp\Big\{-\frac{1}{\sqrt{2}}e^{i \frac{\pi}{4}}  
  \xi^2\Big\} \quad ; \quad \lambda_2 = \sqrt{2}e^{i \frac{\pi}{4}}  \,.
\end{equation}
The constant $C_0$ should be obtain, up to a product by $-1$, from the
normalization condition $\|u\|_2=1$. We allow dependence of $C_0$ on
$\varepsilon$ (see below).  Substituting into \eqref{eq:45} yields
\begin{displaymath}
  u_2(\tau,\xi)=v_0(\tau)w_2(\xi) \,.
\end{displaymath}

For the $\OO(\varepsilon^{3/4})$ balance in \eqref{eq:40} we have
\begin{displaymath}
  (\LL_\tau -\lambda_0)u_3 = -v_0(\LL_\xi-\lambda_2)w_1 -
  \Big(i\sigma\xi\tau-\lambda_3\Big)v_0w_0 \quad ; \quad u_2(0,\xi)=0\,.
\end{displaymath}
We take once again the inner product of the above balance with
$\bar{v}_0$ to obtain
\begin{equation}
\label{eq:48}
  (\LL_\xi-\lambda_2)w_1+  \Big(i\gamma\xi-\lambda_3\Big)w_0 =0 \,,
\end{equation}
where
\begin{displaymath}
  \gamma=\sigma\frac{\int_0^\infty \tau v_0^2(\tau) \,d\tau}{\int_0^\infty v_0^2(\tau) \,d\tau} \,.
\end{displaymath}
Note that this expression is well-defined due to \eqref{intv0}.
Taking the inner product, this time in $L^2(\R,\C)$, of \eqref{eq:48}
with $w_0$, which is even,  yields
\begin{displaymath}
  \lambda_3 = 0\,.
\end{displaymath}
Furthermore, $w_1$ is the unique solution of 
\begin{displaymath}
   (\LL_\xi-\lambda_2)w_1=  -i\gamma\xi w_0 \quad ;\quad \int_\R w_1(\xi)w_0(\xi)\,d\xi=0\,,
\end{displaymath}
and
\begin{displaymath}
  u_3 = v_3(\xi,\tau)+ v_0(\tau)w_3(\xi) \,,
\end{displaymath}
where $v_3$ is the unique solution of the problem
\begin{equation}
  \begin{cases}
    (\LL_\tau -\lambda_0)v_3 = -i\xi(\tau-\gamma)v_0w_0 & \tau>0 \label{eq:v2}\\
    v_3(0,\xi)=0 & \\
    \int_0^\infty v_2(\tau,\xi)v_0(\tau) d\tau = 0\,. 
  \end{cases}
\end{equation}
Notice that, if $\mathcal{S}(\mathbb{R}_+^2)$ denotes the Schwartz
space of rapidly decaying functions along with all their derivatives,
then the right-hand side in \eqref{eq:v2} belongs to
$\mathcal{S}(\mathbb{R}_+^2)$. As the operator
$-\partial^2/\partial\tau^2+i\tau-\lambda_0$ is globally elliptic with respect
to $\tau$, in the sense of \cite[Definition 1.5.6]{hel84},
we have that
\begin{equation}\label{eq:v2inS}
 v_3\in\mathcal{S}(\mathbb{R}_+^2)\,,
\end{equation}
(see \cite[Theorem 1.6.4]{hel84}).
For the same reason, the $\OO(\varepsilon)$ balance would
yield $w_3\in\mathcal{S}(\mathbb{R})$.

We have thus obtained the quasimode
\begin{multline}
\label{eq:49}
U=  \Big(C_0(\varepsilon)\exp\Big\{-\frac{1}{\sqrt{2}}e^{i\frac{\pi}{4}}  
  \xi^2\Big\}+\varepsilon^{1/2}w_1(\xi) \Big)A_i(e^{i\pi/6}\tau+\mu_1)\\ + \varepsilon^{3/4}v_3(\xi,\tau) + \varepsilon^{3/4} w_3(\xi)A_i(e^{i\pi/6}\tau+\mu_1) \,.
\end{multline}
We obtain the various constants by requiring that
\begin{displaymath}
  \|U\|_2=1 \,.
\end{displaymath}

We now conclude this section by the following proposition
\begin{proposition}
  Let $\A_h$ be given by \eqref{eq:33} and $U$ by \eqref{eq:49}. Let
  further
  \begin{equation}
\label{eq:50}
    \Lambda= \lambda_0+\varepsilon^{1/2}\lambda_2 \,.
  \end{equation}
Let $\eta_r = \eta_r^0(\tau)\eta_r^1(\xi)$, where  $\eta^0_r\in C^\infty(\R_+,[0,1])$ and
$\eta_r^1\in C^\infty(\R,[0,1])$ are chosen so that
\begin{equation}
\label{eq:51}
\eta_r =
  \begin{cases}
   1 & |x-x_0|<r \\
   0 & |x-x_0|>2r \,,
  \end{cases}
|\nabla\eta_r|\leq \frac{C}{r}\,.
\end{equation} 
Then,
\begin{equation}
\label{eq:52}
  \Big\|\Big(\frac{\alpha}{\varepsilon c^2} \A_h-\Lambda\Big)(\eta_{\varepsilon^{-1/2}}U)\Big\|_2 \leq C\varepsilon\|\eta_{\varepsilon^{-1/2}}U\|_2 \,. 
\end{equation}
\end{proposition}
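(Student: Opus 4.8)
The plan is to estimate the left-hand side of \eqref{eq:52} by decomposing the operator error into three parts: the residual from the formal asymptotic construction, the commutator between $\A_h$ and the cutoff $\eta_{\varepsilon^{-1/2}}$, and the contribution of the remainder operator $R$ from \eqref{eq:R}. First I would write, in analogy with \eqref{decomp1},
\begin{equation*}
 \Big(\frac{\alpha}{\varepsilon c^2}\A_h-\Lambda\Big)(\eta_r U) = \eta_r\Big(\frac{\alpha}{\varepsilon c^2}\A_h-\Lambda\Big)U + \Big[\frac{\alpha}{\varepsilon c^2}\A_h,\eta_r\Big]U\,,
\end{equation*}
with $r=\varepsilon^{-1/2}$, and then further split the first term using the identity
\begin{equation*}
 \frac{\alpha}{\varepsilon c^2}\A_h = \LL_\tau + \varepsilon^{1/2}\LL_\xi + \Big(\frac{\varepsilon}{\alpha}\Big)^{3/4} i\sigma\xi\tau + R
\end{equation*}
coming from \eqref{eq:40}, so that the main piece is $\big(\LL_\tau+\varepsilon^{1/2}\LL_\xi+(\varepsilon/\alpha)^{3/4}i\sigma\xi\tau - \Lambda\big)U$.

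The heart of the matter is that the functions $u_0,\dots,u_3$ were constructed precisely so that the $\OO(1)$, $\OO(\varepsilon^{1/4})$, $\OO(\varepsilon^{1/2})$, and $\OO(\varepsilon^{3/4})$ balances in \eqref{eq:40} are satisfied (equations \eqref{eq:41}, \eqref{eq:44}, \eqref{eq:45}, and \eqref{eq:v2}); hence plugging $U = u_0 + \varepsilon^{1/2} u_2 + \varepsilon^{3/4}(v_3 + v_0 w_3)$ (note $u_1$ contributes at order $\varepsilon^{1/4}$ but with $\lambda_1=0$ it can be absorbed, and here $w_1$ enters through the $\varepsilon^{1/2}w_1$ term in \eqref{eq:49}) into $\LL_\tau + \varepsilon^{1/2}\LL_\xi + (\varepsilon/\alpha)^{3/4}i\sigma\xi\tau - \Lambda$ leaves only terms of order $\varepsilon$ or higher. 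Since every function appearing in $U$ is in the Schwartz class — $v_0, w_0, w_1$ by explicit formulas and decay of $A_i$, and $v_3, w_3$ by the global ellipticity argument \eqref{eq:v2inS} citing \cite[Theorem 1.6.4]{hel84} — all the $\xi$- and $\tau$-weighted norms that appear are $\OO(1)$, so this residual is $\OO(\varepsilon)$ in $L^2(\R_+^2)$, and a fortiori in $L^2(B_+(0,2\varepsilon^{-1/2}))$.

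For the commutator term, $\big[\frac{\alpha}{\varepsilon c^2}\A_h,\eta_r\big]U = -(\alpha/(\varepsilon c^2))h^2(\Delta\eta_r\, U + 2\nabla\eta_r\cdot\nabla U)$; after rescaling via \eqref{eq:39} this becomes, up to constants, $(\Delta_{\tau,\xi}\eta_r) U + 2\nabla_{\tau,\xi}\eta_r\cdot\nabla_{\tau,\xi}U$ with $\eta_r$ supported in $\{\varepsilon^{-1/2}\lesssim |(\tau,\xi)|\lesssim \varepsilon^{-1/2}\}$ and $|\nabla\eta_r|\leq C\varepsilon^{1/2}$, $|\Delta\eta_r|\leq C\varepsilon$. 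Because $U$ and $\nabla U$ are Schwartz in $(\tau,\xi)$, on that annular support they and all their weighted norms decay faster than any power of $\varepsilon$ (like $e^{-c\varepsilon^{-1}}$ coming from the Gaussian $w_0$ and the exponential decay of $A_i$ and $v_3$), so the commutator is $\OO(\varepsilon^\infty)\|\eta_r U\|_2$, which is swallowed by $C\varepsilon\|\eta_r U\|_2$. Finally, the term $\eta_r R U$ is handled directly by the pointwise estimate \eqref{eq:estR} with $\gamma=1/2$: each of $\|\tau u_{\xi\xi}\|$, $\|\tau u_\xi\|$, $\|u_\tau\|$, $\|\tau^2 u\|$, $\|\xi^3 u\|$ over $B_+(0,\varepsilon^{-1/2})$ is $\OO(1)$ since $U$ is Schwartz, giving $\|\eta_r R U\|_{L^2(B_+(0,2\varepsilon^{-1/2}))}\leq C\varepsilon$. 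Combining the three bounds and noting $\|\eta_r U\|_2 = 1 + o(1)$ (by the normalization $\|U\|_2=1$ and the rapid decay of $U$ outside $B_+(0,\varepsilon^{-1/2})$) yields \eqref{eq:52}. The main obstacle I anticipate is bookkeeping: one must carefully check that the $\varepsilon^{1/4}$- and $\varepsilon^{3/4}$-order terms involving $w_1$, $w_3$, and $v_3$ do not generate an uncancelled $\varepsilon^{3/4}$ residual — this is exactly why \eqref{eq:48} forces $\lambda_3=0$ and why $v_3$ and $w_3$ must solve their respective inhomogeneous problems — and that the Borel-summation-free truncation at finite order is consistent with the claimed $\OO(\varepsilon)$ rather than $\OO(\varepsilon^\infty)$, which here suffices.
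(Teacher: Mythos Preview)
Your proposal is correct and follows essentially the same route as the paper: split into the formal residual of the asymptotic construction, the commutator with the cutoff, and the remainder $R$, then use the Schwartz decay of every building block of $U$ to control each piece by $C\varepsilon$, and finally invoke $\|\eta_{\varepsilon^{-1/2}}U\|_2 = 1+o(1)$. The only cosmetic differences are that the paper commutes $\eta$ only with $\LL_\tau+\varepsilon^{1/2}\LL_\xi$ (absorbing the derivative part of $R$ into $R(\eta U)$ rather than $\eta RU + [R,\eta]U$), uses the explicit Airy asymptotics \eqref{eq:cutoff3} to get $\|[\LL_\tau,\eta]U\|\leq C\varepsilon$ where you invoke the stronger (and also correct) $\OO(\varepsilon^\infty)$, and takes the exponent in \eqref{eq:estR} strictly in $(1/2,1)$ so that $B_+(0,\varepsilon^{-\gamma})$ comfortably contains $\mathrm{supp}\,\eta_{\varepsilon^{-1/2}}\subset\{|x|\leq 2\varepsilon^{-1/2}\}$; you should do the same rather than use the borderline $\gamma=1/2$.
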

\begin{proof}
We first write
\begin{eqnarray}
   \frac{\alpha}{\varepsilon c^2}\A_h(\eta_{\varepsilon^{-1/2}}U) & = &
 \big(\LL_\tau+\varepsilon^{1/2}\LL_\xi+\varepsilon^{3/4}i\sigma\xi\tau\big)(\eta_{\varepsilon^{-1/2}}U)
  + R\eta_{\varepsilon^{-1/2}}U \nonumber \\
  & = & \Lambda\eta_{\varepsilon^{-1/2}}U + \Big[\LL_\tau+\varepsilon^{1/2}\LL_\xi , \eta_{\varepsilon^{-1/2}}\Big] U
  + R\eta_{\varepsilon^{-1/2}}U\,, \label{eq:cutoff1}
\end{eqnarray}
where the operator $R$ is defined by \eqref{eq:R}. We next seek an
estimate for the commutator term in \eqref{eq:cutoff1}, given by
\begin{equation}\label{eq:cutoff2}
 [\LL_\tau,\eta_{\varepsilon^{-1/2}}]U = -\partial_\tau^2(\eta_{\varepsilon^{-1/2}})U-2\partial_\tau\eta_{\varepsilon^{-1/2}}\partial_\tau U\,.
\end{equation}
In order to estimate the norm of $U$ and $\partial_\tau U$ on the support of
$\partial_\tau^2\eta_{\varepsilon^{-1/2}}$ and $\partial_\tau\eta_{\varepsilon^{-1/2}}\,$, we recall the well-known
asymptotic behavior of the Airy function \cite{abst72}:
\begin{equation}\label{eq:cutoff3}
 Ai(z) = \frac{1}{2\sqrt{\pi}z^{1/4}}e^{-\frac{2}{3}z^{3/2}}\big(1+\OO(z^{-3/2})\big)
\end{equation}
as $|z|\to+\infty$ in any sector of the form $|\arg z|\leq\pi-\delta\,$, $\delta>0\,$. By
\eqref{eq:49}, and since for all $(\tau,\xi)\in\textrm{Supp }\pa_\tau\eta_{\varepsilon^{-1/2}}$
we have $\varepsilon^{-1/2}\leq\tau\leq2\varepsilon^{-1/2}\,$, \eqref{eq:v2inS} and
\eqref{eq:cutoff3} yield
\begin{displaymath}
  \|(\partial_\tau^2\eta_{\varepsilon^{-1/2}}) U\|_2\leq C_1\varepsilon\,,
\end{displaymath}
for some positive constant $C_1\,$. 

Since the asymptotic behaviour of
$Ai'$, as $|z|\to\infty$ is not substantially different from
\eqref{eq:cutoff3} (cf. \cite{abst72}), we easily obtain that
\begin{displaymath}
 \|\partial_\tau\eta_{\varepsilon^{-1/2}}\partial_\tau U\|_2\leq C_2\varepsilon\,,~~~~C_2>0\,.
\end{displaymath}
Thus \eqref{eq:cutoff2} yields, for some $C>0\,$,
\begin{equation}\label{eq:cutoff4}
 \|[\LL_\tau,\eta_{\varepsilon^{-1/2}}]U\|_2\leq C\varepsilon\,.
\end{equation}
Due to the decay of the $U$ and $\pa_\xi U$ as $|\xi|\to+\infty$ (recall that
$w_3\in\mathcal{S}(\mathbb{R})$), we similarly obtain
\begin{equation}\label{eq:cutoff5}
 \|[\varepsilon^{1/2}\LL_\xi,\eta_{\varepsilon^{-1/2}}]U\|_2\leq K\varepsilon\,,
\end{equation}
for some $K>0\,$.can be estimated as
follows. Using

To estimate the remaining term $R\eta_{\varepsilon^{-1/2}}U$ we use \eqref{eq:estR}
to obtain, for $\alpha\in(1/2,1)$,
\begin{equation}\label{eq:cutoff6}
 \|R\eta_{\varepsilon^{-1/2}}U\|_2\leq \|RU\|_{L^2(B_+(0,\varepsilon^{-\alpha}))}\leq C'\varepsilon
\end{equation}
for some $C'>0$.\ Finally \eqref{eq:cutoff1}, \eqref{eq:cutoff4},
\eqref{eq:cutoff5} and \eqref{eq:cutoff6} yield, for some positive
$\tilde{C}$ and $C$,
\begin{eqnarray*}
 \Big\|\Big(\frac{\alpha}{\varepsilon c^2} \A_h-\Lambda\Big)(\eta_{\varepsilon^{-1/2}}U)\Big\|_2 & \leq & C'\varepsilon \\
 & \leq & C\varepsilon\|\eta_{\varepsilon^{-1/2}}U\|_2\,,
\end{eqnarray*}
where we have used the that for some $C''>0$,
$\|\eta_{\varepsilon^{-1/2}}U\|_{2}\geq1/C''$. 
\end{proof}

\section{Eigenvalue existence }
\label{sec:3}
Let $\LL_\tau$ and $\LL_\xi$ be respectively defined by \eqref{eq:41}
and \eqref{eq:46}. Then let
\begin{equation}
\label{eq:68}
  \B_\varepsilon  = \LL_\tau + \varepsilon^{1/2}\LL_\xi 
\end{equation}
be the closed operator associated with the quadratic form
\begin{displaymath}
  \langle\nabla u,\nabla v\rangle+ i\langle u, (\tau+ \varepsilon^{1/2}\xi^2)v\rangle 
\end{displaymath}
whose domain is given by $\tilde{V}\times  \tilde{V}$ where
\begin{displaymath}
  \tilde{V}=\{ u\in H^1_0(\R^2_+,\C) \,| \,
  |(\tau^{1/2}+|\xi|)u\in L^2(\R^2_+,\C)\} \,. 
\end{displaymath}
It can be easily verified that
\begin{displaymath}
  D(\B_\varepsilon) = \{u\in H^2(\R_+^2,\C)\cap H^1_0(\R^2_+)\,|\, (\tau+\xi^2)u\in L^2(\mathbb{R}_+^2),\} \,.
\end{displaymath}
We begin by the following straightforward observation
\begin{lemma}
  We have 
  \begin{equation}
    \label{eq:69}
\sigma(\B_\varepsilon)=\{c^{2/3}\mu_ne^{-i2\pi/3}+(2k-1)\varepsilon^{1/2}\sqrt{2}e^{i\frac{\pi}{4}}\}_{n,k=1}^\infty \,.
  \end{equation}
\end{lemma}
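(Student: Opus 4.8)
The plan is to use separation of variables: since $\B_\varepsilon$ acts as $\LL_\tau$ in the $\tau$--variable plus $\varepsilon^{1/2}\LL_\xi$ in the $\xi$--variable, one expects $\sigma(\B_\varepsilon)=\sigma(\LL_\tau)+\varepsilon^{1/2}\sigma(\LL_\xi)$, which is precisely \eqref{eq:69} once the one--dimensional spectra are recalled, namely the eigenvalues $\mu_ne^{-2i\pi/3}$ of the Dirichlet complex Airy operator from \eqref{eq:42}--\eqref{eq:43} and the eigenvalues $(2k-1)\lambda_2$ of the rotated harmonic oscillator $\LL_\xi$ (the ground value $\lambda_2$ from \eqref{eq:47}, the higher ones obtained from the usual Hermite/ladder construction). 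First I would record that $\B_\varepsilon$ has compact resolvent: a bounded subset of the form domain $\tilde{V}$ is bounded in $H^1_{\mathrm{loc}}(\R^2_+)$, hence locally precompact in $L^2$ by Rellich, and the confining weight $(\tau^{1/2}+|\xi|)$ forces uniformly small $L^2$--mass outside large half--balls; a truncation/diagonal argument then gives compactness of $\tilde{V}\hookrightarrow L^2(\R^2_+)$. By Riesz--Schauder, $\sigma(\B_\varepsilon)$ then consists entirely of isolated eigenvalues of finite algebraic multiplicity, so it suffices to determine for which $\lambda$ one has $\ker(\B_\varepsilon-\lambda)\neq\{0\}$.

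The inclusion $\supseteq$ in \eqref{eq:69} is immediate. Writing $v_n(\tau)=Ai(e^{i\pi/6}\tau+\mu_n)$ (which vanishes at $\tau=0$ and decays like $e^{-c\tau^{3/2}}$ with all its derivatives) and $w_k(\xi)$ for the $k$-th eigenfunction of $\LL_\xi$ (a polynomial times $\exp\{-\tfrac{1}{\sqrt2}e^{i\pi/4}\xi^2\}$), the product $v_n(\tau)w_k(\xi)$ is Schwartz on $\overline{\R^2_+}$ and satisfies the Dirichlet condition, hence lies in $D(\B_\varepsilon)$, and $\B_\varepsilon\big(v_nw_k\big)=\big(\mu_ne^{-2i\pi/3}+\varepsilon^{1/2}(2k-1)\lambda_2\big)v_nw_k$. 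So every point of the right-hand side of \eqref{eq:69} is an eigenvalue of $\B_\varepsilon$.

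For the converse inclusion $\subseteq$, suppose $\B_\varepsilon\phi=\lambda\phi$ with $\phi\neq0$ and $\lambda$ not in the set \eqref{eq:69}. The adjoint $\B_\varepsilon^*$ is obtained by conjugating the potential, so (using that $Ai$ has real Taylor coefficients) the functions $\overline{v_n(\tau)}\,\overline{w_k(\xi)}$ are eigenfunctions of $\B_\varepsilon^*$, again Schwartz, hence in $D(\B_\varepsilon^*)$, with eigenvalues $\overline{\mu_ne^{-2i\pi/3}+\varepsilon^{1/2}(2k-1)\lambda_2}$. Pairing the eigenvalue equation against them,
\[
0=\langle\B_\varepsilon\phi,\overline{v_n}\,\overline{w_k}\rangle-\langle\phi,\B_\varepsilon^*(\overline{v_n}\,\overline{w_k})\rangle=\big(\lambda-\mu_ne^{-2i\pi/3}-\varepsilon^{1/2}(2k-1)\lambda_2\big)\,\langle\phi,\overline{v_n}\,\overline{w_k}\rangle ,
\]
so $\langle\phi,\overline{v_n}\,\overline{w_k}\rangle=0$ for all $n,k$. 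Since the eigenfunctions $\{v_n\}$ of the Dirichlet complex Airy operator \cite{al08} are complete in $L^2(\R_+)$ and the eigenfunctions $\{w_k\}$ of $\LL_\xi$ are complete in $L^2(\R)$, a Fubini-type argument (test $\phi$ against one factor, take a countable intersection over $n$ of the resulting null sets, then test against the other factor) shows the products $\{\overline{v_n}\,\overline{w_k}\}$ are complete in $L^2(\R^2_+)$; hence $\phi=0$, a contradiction. Together with the previous paragraph this gives $\sigma(\B_\varepsilon)$ equal to the set in \eqref{eq:69}.

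The only non-elementary input is the completeness of the two non-self-adjoint one--dimensional eigensystems; neither is a Riesz basis, so one cannot expand an arbitrary $L^2$ function in them and must instead argue through density and biorthogonality as above, and this is the step I expect to carry the actual content. Both completeness statements are, however, classical (for the complex Airy operator on the half-line, see \cite{al08}; for the rotated oscillator, completeness of the Hermite--Gaussian system with a Gaussian factor of positive real part is standard). An alternative route that bypasses completeness is to build the resolvent directly as $(\B_\varepsilon-\lambda)^{-1}=\tfrac{1}{2\pi i}\oint_\Gamma(\LL_\tau-z)^{-1}(\varepsilon^{1/2}\LL_\xi-\lambda+z)^{-1}\,dz$ with $\Gamma$ a curve enclosing $\sigma(\LL_\tau)$ and $\lambda-\Gamma$ disjoint from $\sigma(\varepsilon^{1/2}\LL_\xi)$: convergence follows from the known resolvent bounds for the two (sectorial) operators, and the only delicate point is choosing $\Gamma$ so that it separates $\sigma(\LL_\tau)$ from $\lambda-\sigma(\varepsilon^{1/2}\LL_\xi)$, which is possible because the two spectra escape to infinity along rays in different directions.
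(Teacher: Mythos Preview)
Your argument is correct, but it differs from the paper's. The paper does not argue via compact resolvent and completeness; instead it identifies $\B_\varepsilon$ with the closure $\LL_\tau\dotplus\varepsilon^{1/2}\LL_\xi$ of $\LL_\tau\otimes I+I\otimes\varepsilon^{1/2}\LL_\xi$ (using the semigroup criterion \cite[Theorem X.49]{resi78} to show $D(\LL_\tau)\otimes D(\LL_\xi)$ is a core), and then invokes Ichinose's Lemma \cite[Theorem XIII.35, Corollary~2]{resi78}, which gives $\sigma(A\dotplus B)=\sigma(A)+\sigma(B)$ for two sectorial operators whose numerical ranges lie in a common sector---here the sector $\{0\le\arg z\le\pi/2\}$. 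This yields \eqref{eq:69} in one stroke once the one-dimensional spectra are recalled.

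The trade-offs: the paper's route is shorter and avoids the completeness input entirely, at the price of a black-box citation and the (not entirely trivial) verification that the form-defined $\B_\varepsilon$ coincides with the abstract closure of the tensor sum. Your route is more hands-on and makes the mechanism transparent, but it leans on the completeness of the eigensystems of the complex Airy operator and of the rotated oscillator, which are genuine theorems (the first is in \cite{al08}; the second goes back to Davies). Your closing remark about building the resolvent by a contour integral in the spectral parameter of one factor is essentially the content of Ichinose's Lemma, so the two approaches converge there.
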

\begin{proof}
After the scale changes $\tau\mapsto c^{1/3}\tau$ and
$\xi\mapsto\big(|\alpha|/2\big)^{1/4}\xi$, we obtain from \cite{al08} and \cite[Section $14.5$]{da07} the following expressions for the eigenvalues of the complex Airy
operator $\LL_\tau$ and the complex harmonic oscillator $\LL_\xi$:
\begin{displaymath}
 \sigma(\LL_\tau) = \Big\{c^{2/3}\mu_ne^{-i2\pi/3} : n\geq1\Big\}\,,
\end{displaymath}
$\mu_n$ being the $n$-th (negative) zero of the Airy function $Ai\,$, and
\begin{displaymath}
 \sigma(\LL_\xi) = \Big\{(2k-1)\sqrt{2}\,e^{i\frac{\pi}{4}} : k\geq1\Big\}\,.
\end{displaymath}

Denote by $\LL_\tau\dotplus\varepsilon^{1/2}\LL_\xi$ the closure of the operator
$\LL_\tau\otimes I+I\otimes(\varepsilon^{1/2}\LL_\xi)$ whose domain is $D(\LL_\tau)\otimes D(\LL_\xi)$.  We
first need to verify that the domains of $\B_\varepsilon$ and
$\LL_\tau\dotplus\varepsilon^{1/2}\LL_\xi$ coincide.  Let $e^{-t\B_\varepsilon}$ denote the
contraction semigroup generated by $\B_\varepsilon$, and let $\varphi\in D(\LL_\tau)$, $\psi\in
D(\LL_\xi)$. Clearly,
\begin{displaymath}
 e^{-t\B_\varepsilon}(\varphi\otimes\psi) = e^{-t\LL_\tau}\varphi\otimes e^{-t(\varepsilon^{1/2}\LL_\xi)}\psi\,,
\end{displaymath}
where $e^{-t\LL_\tau}$ and $e^{-t(\varepsilon^{1/2}\LL_\xi)}$ denote respectively the
contraction semigroups generated by $\LL_\tau$ and $\varepsilon^{1/2}\LL_\xi$.Thus, 
\begin{displaymath}
 e^{-t\B_\varepsilon}\big(D(\LL_\tau)\otimes D(\LL_\xi)\big)\subset D(\LL_\tau)\otimes D(\LL_\xi)\,.
\end{displaymath}
Consequently, due to \cite[Theorem X.49]{resi78} we have $\B_\varepsilon =
\overline{{(\B_\varepsilon)}_{\mid D(\LL_\tau)\otimes D(\LL_\xi)}}\,$, and $\B_\varepsilon$ clearly
coincides with $\LL_\tau\otimes I+I\otimes(\varepsilon^{1/2}\LL_\xi)$ on $D(\LL_\tau)\otimes D(\LL_\xi)\,$,
and hence $\B_\varepsilon = \LL_\tau\dotplus\varepsilon^{1/2}\LL_\xi\,$.

Noticing that $\LL_\tau$ and $\LL_\xi$ are both sectorial with respect to
the same sector $\mathcal{S} = \{z\in\mathbb{C} : 0\leq\arg z\leq\pi/2\}\,$, we
can then apply the so-called Ichinose Lemma (see \cite[Theorem
XIII.35, Corollary 2]{resi78}) which yields
\begin{displaymath}
 \sigma\big(\LL_\tau\dotplus\varepsilon^{1/2}\LL_\xi\big) = \sigma(\LL_\tau) + \sigma(\varepsilon^{1/2}\LL_\xi)\,,
\end{displaymath}
and \eqref{eq:69} follows.
\end{proof}

The following auxiliary lemma will be necessary in the sequel
\begin{lemma}
  Let $v_n$ denote the (unique up to multiplication by a complex number of modulus $1$)
  unity norm eigenfunction associated with the eigenvalue
  \begin{equation}
\label{eq:55}
    \nu_{n-1}=\mu_ne^{-i2\pi/3} \quad n\in\N
  \end{equation}
of $\LL_\tau$. Let further $\Vg$ denote the
  form domain of $\LL_\tau$, i.e,
  \begin{displaymath}
    \Vg = \{u\in H^1_0(\R_+,\C)\, | \, \tau^{1/2}u\in L^2(\R_+,\C) \,\}\,,
  \end{displaymath}
and $\Vg_n={\rm span} \{v_n\}_{n=k+1}^\infty \cap\Vg$. Set
  \begin{subequations}
    \label{eq:70}
     \begin{equation}
\beta_k = \inf_{
  \begin{subarray}{c}
   u\in\Vg_n \\
   \|u\|=1
  \end{subarray}} \|u_\tau\|_2^2+\|\tau^{1/2}u\|_2^2 \,.
  \end{equation}
Then,
\begin{equation}
   \beta_k\to\infty \,.
\end{equation}
  \end{subequations}
\end{lemma}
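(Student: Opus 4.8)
The plan is to recognise the functional to be minimised as the quadratic form of a self-adjoint operator with compact resolvent, and then to combine the compactness this yields with the completeness in $L^2(\R_+,\C)$ of the eigenfunctions $\{v_n\}$ of $\LL_\tau$. First I would observe that for $u$ in the form domain $\Vg$ of $\LL_\tau$ one has $\langle\LL_\tau u,u\rangle=\|u_\tau\|_2^2+i\|\tau^{1/2}u\|_2^2$, so that
\[
\|u_\tau\|_2^2+\|\tau^{1/2}u\|_2^2=\Re\langle\LL_\tau u,u\rangle+\Im\langle\LL_\tau u,u\rangle=\langle Hu,u\rangle ,
\]
where $H=-\partial_\tau^2+\tau$ is the nonnegative Dirichlet realisation on $\R_+$ of the real Airy operator, whose form domain is again $\Vg$. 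Since the potential $\tau$ is confining, $H$ has compact resolvent; equivalently, for every $M>0$ the set $\{u\in\Vg:\|u\|_2=1,\ \langle Hu,u\rangle\le M\}$ is relatively compact in $L^2(\R_+,\C)$.

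Next I would record the biorthogonality of the family $\{v_n\}$: exactly as in the computation leading to \eqref{intv0} (Cauchy's theorem together with the decay of $Ai$ in $\{|\arg z|\le\pi/3\}$), one obtains $\int_0^\infty v_mv_n\,d\tau=0$ whenever $m\neq n$. Since $u\mapsto\langle u,\bar v_n\rangle=\int_0^\infty u\,v_n\,d\tau$ is continuous on $L^2$ and vanishes on every $v_m$ with $m>n$, each $u\in\Vg_k$ satisfies $\langle u,\bar v_n\rangle=0$ for all $n\le k$. With this in hand the proof is by contradiction. As $\Vg_{k+1}\subset\Vg_k$, the sequence $(\beta_k)$ is nondecreasing, so it suffices to exclude the possibility that $\beta_k\le M$ for all $k$. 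If that held, I would pick $u_k\in\Vg_k$ with $\|u_k\|_2=1$ and $\langle Hu_k,u_k\rangle\le M+1$; by the compactness above, a subsequence converges in $L^2(\R_+,\C)$ to some $u_\ast$ with $\|u_\ast\|_2=1$. Fixing $n$ and passing to the limit along that subsequence (using $\langle u_k,\bar v_n\rangle=0$ for all $k\ge n$) gives $\langle u_\ast,\bar v_n\rangle=0$, hence $u_\ast\perp\bar v_n$ for every $n$. Completeness of $\{v_n\}$, equivalently of $\{\bar v_n\}$, in $L^2(\R_+,\C)$ then forces $u_\ast=0$, which is the desired contradiction.

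The one genuinely non-routine ingredient --- and the step I expect to be the main obstacle --- is the completeness in $L^2(\R_+,\C)$ of the eigenfunctions of the complex Airy operator $\LL_\tau$. This is classical but not elementary: $\LL_\tau$ has compact resolvent, lying in the Schatten class $\mathcal{S}_p$ for every $p>3/2$, while its spectrum $\{\mu_ne^{-2i\pi/3}\}_{n\ge1}$ (see \eqref{eq:55}) lies on a single ray through the origin; completeness of the root system then follows from the standard completeness criteria for non-self-adjoint operators with Schatten-class resolvent whose spectrum is confined to a sector of sufficiently small opening. If one prefers not to invoke such a theorem, this completeness has to be proved directly; the remaining ingredients --- the identification with $H$, the compactness, and the biorthogonality --- are routine.
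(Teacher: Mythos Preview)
Your argument is correct and follows essentially the same route as the paper's proof: a contradiction argument in which boundedness of the quadratic form yields $L^2$--precompactness, and the limit is forced to vanish by completeness of $\{\bar v_n\}$ (which the paper simply cites from \cite{al08}). The only cosmetic difference is that you package the compactness step via the compact resolvent of the self-adjoint operator $H=-\partial_\tau^2+\tau$, whereas the paper carries out the tightness-plus-Rellich argument by hand (bounding $\int_r^\infty|u_n|^2$ via $\|\tau^{1/2}u_n\|_2^2$ and using strong $L^2$ convergence on $[0,r]$); your monotonicity observation for $(\beta_k)$ also spares you the passage to a subsequence.
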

\begin{proof}
  Let us assume by contradiction that there exists a subsequence
  $(k_n)$ and a positive constant $C$ such that
\begin{displaymath}
 \sup_{n\in\mathbb{N}}\beta_{k_n}\leq C\,.
\end{displaymath}
Then there exists a sequence $(u_n)$ of functions in $H^1_0(\R_+,\C)$,
$\tau^{1/2}u_n\in L^2(\R_+,\C)$ such that, for all $n\in\mathbb{N}$, 
$u_n\in{\rm span} \{v_j\}_{j=k_n+1}^\infty$, $\|u_n\|_2 = 1$ and
\begin{equation}
\label{eq:71}
 \sup_{n\in\mathbb{N}}\big(\|\pa_\tau u_n\|_2^2+\|\tau^{1/2}u_n\|_2^2\big)\leq2C\, .
\end{equation}
Since for any $r>0$ we have
\begin{displaymath}
  \int_r^\infty |u_n|^2 \leq \frac{1}{r} \int_r^\infty \tau|u_n|^2 \leq \frac{2C}{r}\,, 
\end{displaymath}
we can choose such $r$ for which
\begin{displaymath}
   \int_0^r |u_n|^2 \geq \frac{1}{2} \,.
\end{displaymath}
Since by \eqref{eq:71} the $H^1(\R_+,\C)$ norms of $\{u_n\}_{n=1}^\infty$
are bounded, we can extract a subsequence $(u_{\varphi(n)})$ such that
$u_{\varphi(n)}\to u_\infty$ in $L^2(\mathbb{R}_+,\C)$ weakly, and in
$L^2([0,r],\C)$ strongly, for some limit function $u_\infty\in
L^2(\mathbb{R}_+,\C)$. We note that 
\begin{equation}
\label{eq:72}
   \int_0^r |u_\infty|^2 \geq \frac{1}{2} \,.
\end{equation}

Now let $k\in\mathbb{N}$ be fixed. Then for all $n$ such that $k_{\varphi(n)}\geq k$ we have
\begin{displaymath}
 u_{\varphi(n)} \in {\rm span} \{v_j\}_{j\geq k+1} = \big({\rm  span} \{\bar{v}_n\}_{n=1}^k\big)_\perp\,,
\end{displaymath}
hence, by the weak convergence in $L^2(\mathbb{R}_+,\C)$.
\begin{displaymath}
 0 = \langle u_{\varphi(n)} , \bar v_k\rangle\longrightarrow \langle u_\infty,\bar v_k\rangle = 0\,.
\end{displaymath}
Consequently $u_\infty\in\big({\rm span}
\{\bar{v}_j\}_{j=1}^{+\infty}\big)_\perp$, thus $u_\infty = 0$ since the
eigenfunctions $\{\bar v_j\}_{j\geq1}$ of $\LL_\tau^*$ form a complete
family of $L^2(\mathbb{R}_+,\C)$ (see \cite{al08}). A contradiction, in
view of \eqref{eq:72}.
\end{proof}

We next claim the following
\begin{lemma}
\label{lem:3.2}
There exist $r_0>0$, $\varepsilon_0>0$ and $C>0$, such that if $r\in(0,r_0)$, then
  \begin{equation}
    \label{eq:73}
|\lambda-\lambda_0-\varepsilon^{1/2}\lambda_2|=r\varepsilon^{1/2} \Rightarrow\|(\B_\varepsilon-\lambda)^{-1}\|\leq
\frac{C}{r}\varepsilon^{-1/2} \quad\forall0<\varepsilon<\varepsilon_0\,.
  \end{equation}
\end{lemma}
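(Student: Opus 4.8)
The plan is to exploit the explicit spectral decomposition of $\B_\varepsilon$ provided by \eqref{eq:69} together with the fact that $\lambda_0+\varepsilon^{1/2}\lambda_2 = c^{2/3}\mu_1e^{-i2\pi/3}+\varepsilon^{1/2}\sqrt{2}e^{i\pi/4}$ is the eigenvalue of $\B_\varepsilon$ obtained by picking $n=k=1$ in \eqref{eq:69}, i.e.\ the eigenvalue with smallest real part. Since $\B_\varepsilon = \LL_\tau\dotplus\varepsilon^{1/2}\LL_\xi$, every $f\in L^2(\R^2_+)$ can (at least formally) be expanded along the tensor eigenbasis $v_n(\tau)\otimes\varphi_k(\xi)$, where $v_n$ is as in \eqref{eq:55} and $\varphi_k$ is the $k$-th Hermite-type eigenfunction of $\LL_\xi$; then
\begin{displaymath}
  (\B_\varepsilon-\lambda)^{-1}f = \sum_{n,k\geq1}\frac{\langle \bar v_n\otimes\bar\varphi_k,f\rangle}{c^{2/3}\mu_ne^{-i2\pi/3}+(2k-1)\varepsilon^{1/2}\sqrt{2}e^{i\pi/4}-\lambda}\,v_n\otimes\varphi_k\,,
\end{displaymath}
so that the operator norm is governed by $\sup_{n,k}|c^{2/3}\mu_ne^{-i2\pi/3}+(2k-1)\varepsilon^{1/2}\sqrt2 e^{i\pi/4}-\lambda|^{-1}$, up to a constant coming from the non-orthogonality of the basis (the eigenprojections of $\LL_\tau$ and $\LL_\xi$ are not self-adjoint). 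First I would therefore isolate the distance from $\lambda$ to the spectrum. On the circle $|\lambda-(\lambda_0+\varepsilon^{1/2}\lambda_2)| = r\varepsilon^{1/2}$ the distance to the $(1,1)$-eigenvalue is exactly $r\varepsilon^{1/2}$; for any other pair $(n,k)\neq(1,1)$ I claim the distance is bounded below by a constant (independent of $\varepsilon$, for $\varepsilon$ small and $r<r_0$), because the gaps in $\sigma(\LL_\tau)$, namely $c^{2/3}|\mu_n-\mu_1|$, are bounded away from $0$, while the $\LL_\xi$-contribution $(2k-1)\varepsilon^{1/2}\sqrt2 e^{i\pi/4}$ is $O(\varepsilon^{1/2})$ for bounded $k$ and grows with $k$ otherwise; one checks that moving in the direction $e^{i\pi/4}$ (for $k$) or $e^{-i2\pi/3}$ (for $n$) never lets a far eigenvalue approach within a fixed distance of the circle. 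This gives, for the ``transverse'' part, a uniformly bounded resolvent contribution, and for the $(1,1)$-mode the contribution $C r^{-1}\varepsilon^{-1/2}$.

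The cleanest way to organize the estimate, avoiding convergence issues with the full expansion, is a two-step reduction. Step one: split $L^2(\R^2_+) = \mathrm{Ran}\,\Pi_1 \oplus \mathrm{Ran}(I-\Pi_1)$ where $\Pi_1$ is the (rank-one in $\tau$, full in $\xi$ is not what we want — rather) the spectral projection of $\B_\varepsilon$ onto the eigenvalue $\lambda_0+\varepsilon^{1/2}\lambda_2$; this is a genuine bounded projection by Riesz--Schauder since $\B_\varepsilon$ has compact resolvent and $\lambda_0+\varepsilon^{1/2}\lambda_2$ is a simple, isolated eigenvalue. On $\mathrm{Ran}\,\Pi_1$ the operator $\B_\varepsilon-\lambda$ is just multiplication by $(\lambda_0+\varepsilon^{1/2}\lambda_2-\lambda)$, whose inverse has norm $r^{-1}\varepsilon^{-1/2}$ (times the norm of $\Pi_1$, which I must show is $\varepsilon$-uniformly bounded — this follows from the factorization $e^{-t\B_\varepsilon}=e^{-t\LL_\tau}\otimes e^{-t\varepsilon^{1/2}\LL_\xi}$ and the fact that the $\tau$-projection is a fixed bounded operator while the $\xi$-direction contributes the identity on its range at this lowest mode). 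Step two, the main work: show $\|(\B_\varepsilon-\lambda)^{-1}(I-\Pi_1)\|\leq C$ uniformly. Here I would use the numerical-range / sectoriality structure. Write $u=(I-\Pi_1)u$, $f=(\B_\varepsilon-\lambda)u$, and test against $\bar u$: the real part gives $\|u_\tau\|^2+\varepsilon^{1/2}\|u_\xi\|^2 - \Re\lambda\|u\|^2 = \Re\langle u,f\rangle$ and the imaginary part gives $\langle u,(\tau+\varepsilon^{1/2}\xi^2)u\rangle - \Im\lambda\|u\|^2 = \Im\langle u,f\rangle$. Since $\Re\lambda$ and $\Im\lambda$ are $O(\varepsilon^{1/2})$-close to fixed constants, and since on $\mathrm{Ran}(I-\Pi_1)$ one has a spectral gap, I can upgrade these to the coercivity estimate $\|u\|\leq C\|f\|$. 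The gap in the $\tau$-direction alone is not enough (the projection removes only one $\tau$-mode at the bottom of the $\xi$-tower), so the previous lemma on $\beta_k\to\infty$ is exactly what rescues the argument: decompose $u=\sum_n c_n(\xi)v_n(\tau)$; the $n=1$ slice must then be $L^2_\xi$-orthogonal (in the biorthogonal sense) to $\varphi_1$, hence its $\LL_\xi$-energy is $\geq 3\sqrt2\,\varepsilon^{1/2}$ away, while for $n\geq2$ the $\LL_\tau$-energy already provides an $O(1)$ gap.

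The main obstacle I anticipate is controlling the resolvent on $\mathrm{Ran}(I-\Pi_1)$ with a constant that does \emph{not} degenerate as $\varepsilon\to0$, despite the fact that infinitely many eigenvalues $c^{2/3}\mu_1e^{-i2\pi/3}+(2k-1)\varepsilon^{1/2}\sqrt2 e^{i\pi/4}$, $k\geq2$, accumulate toward the bottom eigenvalue along the ray $e^{i\pi/4}$ as $\varepsilon\to0$. The point is that they accumulate only \emph{in the limit}; for fixed small $\varepsilon$ the nearest of them, the $k=2$ mode, sits at distance $2\sqrt2\,\varepsilon^{1/2}$, which beats $r\varepsilon^{1/2}$ once $r<r_0:=\sqrt2$. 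So the estimate is genuinely $\varepsilon$-uniform provided $r_0$ is chosen smaller than $2\sqrt2$ (and small enough that the circle stays in the gap to the $n\geq2$ modes too). Making this quantitative and uniform — in particular bounding the off-diagonal coupling between the non-orthogonal spectral projections of $\B_\varepsilon$ — is where the bulk of the estimates will go; everything else is bookkeeping with the quadratic form and the tensor structure already established in the preceding lemma.
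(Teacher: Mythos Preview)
Your strategy is close to the paper's, but one intermediate claim is wrong. You assert that $\|(\B_\varepsilon-\lambda)^{-1}(I-\Pi_1)\|\leq C$ with $C$ independent of $\varepsilon$, and earlier that the distance from $\lambda$ to every eigenvalue with $(n,k)\neq(1,1)$ is bounded below by a fixed constant. Both fail for the modes $(1,k)$ with $k\geq2$: these sit at distance $(2k-2)\sqrt2\,\varepsilon^{1/2}$ from $\lambda_0+\varepsilon^{1/2}\lambda_2$, hence only $O(\varepsilon^{1/2})$ from $\lambda$. You in fact notice this yourself (``the $k=2$ mode sits at distance $2\sqrt2\,\varepsilon^{1/2}$'') but then draw the wrong conclusion. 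The correct bound on $\mathrm{Ran}(I-\Pi_1)$ is $C\varepsilon^{-1/2}$, not $C$; this still yields the lemma since $C\varepsilon^{-1/2}\leq Cr^{-1}\varepsilon^{-1/2}$ for $r<1$, so the slip is not fatal, but it is a slip.

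The paper organises the decomposition differently and thereby avoids this issue. It splits only in the $\tau$-variable: with $\Pi_k$ the projection onto the $k$-th eigenspace of $\LL_\tau$ (tensored with the identity in $\xi$) and $P_K=I-\sum_{k=0}^K\Pi_k$, one has $\Pi_k w=\varepsilon^{-1/2}\big(\LL_\xi-\lambda_2-re^{i\alpha}+\varepsilon^{-1/2}(\nu_k-\nu_0)\big)^{-1}\Pi_k g$. For $k=0$ the $\LL_\xi$-resolvent is evaluated at distance $r$ from $\sigma(\LL_\xi)$, giving the dominant term $\|\Pi_0w\|\leq Cr^{-1}\varepsilon^{-1/2}\|g\|$; for $1\leq k\leq K$ the shift $\varepsilon^{-1/2}(\nu_k-\nu_0)$ pushes the spectral parameter into the half-plane $\Re z<0$, where accretivity of $\LL_\xi$ gives $\|(\LL_\xi-z)^{-1}\|\leq|\Re z|^{-1}=O(\varepsilon^{1/2})$, hence $\|\Pi_k w\|\leq C_k\|g\|$ with $C_k$ independent of $\varepsilon$. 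The tail $P_Kw$ is handled exactly as you propose, by the quadratic-form identity together with the lemma $\beta_K\to\infty$, which for $K$ large enough yields $\|P_Kw\|\leq C_K\|g\|$. Thus the paper gets a genuinely $O(1)$ bound on everything except the \emph{entire} first $\tau$-slice $\Pi_0w$, which absorbs both your $(1,1)$ mode and the troublesome $(1,k\geq2)$ modes in one stroke. Your worry about summing infinitely many non-orthogonal projections is also sidestepped this way: only finitely many $\Pi_k$ appear, each with individually bounded norm, and the infinite tail is controlled through the form, not through the eigenfunction expansion.
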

\begin{proof}
Suppose that $r$ is so chosen such that 
$\partial B(\lambda_0+\varepsilon^{1/2}\lambda_2,r\varepsilon^{1/2})\in\rho(\B_\varepsilon)$. Let $g\in {\rm
  span}\{v_nw_m\}_{n,m=0}^\infty$ and $w$ denote the solution of 
\begin{equation}
\label{eq:75}
  (\B_\varepsilon-\lambda)w=g \,.
\end{equation}
Let further
\begin{displaymath}
  \lambda-\lambda_0-\varepsilon^{1/2}\lambda_2=\varepsilon^{1/2}re^{i\alpha}\,,
\end{displaymath}
where $\alpha\in[0,2\pi)$. By the Riesz-Schauder Theory
(cf. \cite[Eq. (16.4)]{ag65} for instance) we
have that
\begin{equation}
  \label{eq:76}
(\LL_\tau-\lambda)^{-1} = \frac{\Pi_0}{\lambda-\nu_0} + \sum_{k=1}^K
\frac{\Pi_k}{\lambda-\nu_k} + T_k(\lambda)\,,
\end{equation}
where $\{\nu_n\}_{n=0}^\infty$ are given by \eqref{eq:55}, and $\|T_K\|\leq
C_K$ in $B(\nu_0,\tilde{r})$ for some fixed $\tilde{r}>0$.  In the
above $\Pi_k$ is the projection operator on ${\rm span}\{v_k\}$, which
is explicitly given, for any $u\in {\rm span}\{v_n\}_{n=0}^\infty$, by
\begin{displaymath}
  \Pi_k(u) = \langle\bar{v_k},u\rangle_\tau v_k(\tau)\,,
\end{displaymath}
where $\langle\cdot,\cdot\rangle_\tau$ denotes the standard  $L^2(\R_+,\C)$ inner
product. 

Let $u_k=\Pi_k(w)$. It can be easily verified that
\begin{displaymath}
  u_k = \varepsilon^{-1/2}(\LL_\xi - \lambda_2-re^{i\alpha}+\varepsilon^{-1/2}(\nu_k-\nu_0))^{-1}\Pi_k(g)\,.
\end{displaymath}
It easily follows from here that
\begin{equation}
\label{eq:77}
  \|u_0\|_2\leq  \frac{C}{r\varepsilon^{1/2}} \|\Pi_0(g)\|_2\leq \frac{C}{r\varepsilon^{1/2}}
  \|g\|_2\,,
\end{equation}
whereas
\begin{equation}
\label{eq:78}
 \|u_k\|_2\leq  C_k\|g\|_2\,,
\end{equation}
where $C_k$ is independent of $r$ and $\varepsilon$. For every $K\geq1$ we have
\begin{equation}
\label{eq:79}
  \|w\|_2 \leq \Big( \frac{C}{r\varepsilon^{1/2}} + \sum_{k=1}^KC_k\Big)\|g\|_2 +
  \|P_K(w)\|_2 \,,
\end{equation}
where
\begin{equation}
\label{eq:80}
  P_K =I - \sum_{k=0}^K\Pi_k \,.
\end{equation}

To complete the proof we need an estimate for  $\|P_K(w)\|_2$. Let
then $u_K=P_K(w)$. Clearly,
\begin{displaymath}
  (\B_\varepsilon-\lambda)u_K=P_K(g) \,.
\end{displaymath}
Taking the inner product of the above equation by $u_K$ yields
\begin{align*}
  & \Big\|\frac{\partial u_K}{\partial\tau}\Big\|_2^2 +
  \varepsilon^{1/2}\Big\|\frac{\partial u_K}{\partial\xi}\Big\|_2^2 - \Re\lambda\|u_K\|_2^2 =
  \Re\langle u_K,P_K(g)\rangle  \\
&\|\tau^{1/2}u_K\|_2^2 +
  \varepsilon^{1/2}\|\xi u_K\|_2^2 - \Im\lambda\|u_K\|_2^2 =
  \Im\langle u_K,P_K(g)\rangle  \,.
\end{align*}
Combining the above equations yields
\begin{equation}
\label{eq:81}
  \Big\|\frac{\partial u_K}{\partial\tau}\Big\|_2^2 + \|\tau^{1/2}u_K\|_2^2  -
  (\Im\lambda+\Re\lambda)\|u_K\|_2^2 \leq 2 \|u_K\|_2\|P_K(g)\|_2 \,.
\end{equation}
As
\begin{equation}
\label{eq:82}
  \|P_K(g)\|_2 \leq C_K\|g\|_2 \,,
\end{equation}
we obtain by \eqref{eq:70} and \eqref{eq:81} that for sufficiently
large $K$ (but independent of $\varepsilon$)
\begin{displaymath}
  \|u_K\|_2\leq C_K\|g\|_2 \,.
\end{displaymath}
 The lemma is now proved by the above and \eqref{eq:79} for any $g\in {\rm
  span}\{v_nw_m\}_{n,m=0}^\infty$, and hence for any $g\in L^2(\R^2_+,\C)$
via a density argument.
\end{proof}
Note that $r$ may depend on $\varepsilon$. As a matter of fact \eqref{eq:73}
remains valid independently of the pace at which $r\to0$ as $\varepsilon\to0$.

\begin{corollary}
  Under the conditions of \ref{lem:3.2} we have that
  \begin{equation}
    \label{eq:85}
\|(\B_\varepsilon-\lambda)^{-1}P_1\|\leq C \,,
  \end{equation}
where $C$ is independent of $\varepsilon$.
\end{corollary}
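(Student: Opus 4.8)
The plan is to deduce \eqref{eq:85} from the resolvent estimate obtained in the proof of Lemma \ref{lem:3.2}, the point being that composing with $P_1$ is exactly what discards the one singular term in that argument.

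First I would note that $P_1=I-\Pi_0-\Pi_1$ commutes with $\B_\varepsilon$. Indeed, $\Pi_0$ and $\Pi_1$ are the Riesz spectral projections of the complex Airy operator $\LL_\tau$ attached to the two eigenvalues $\nu_0,\nu_1$ of smallest real part (cf.\ \eqref{eq:76}); as bounded operators on $L^2(\R_+^2,\C)$ they act only in the $\tau$ variable, hence commute with $\LL_\tau$ (being its spectral projections) and with $\LL_\xi$ (which acts only in $\xi$), and therefore with $\B_\varepsilon=\LL_\tau+\varepsilon^{1/2}\LL_\xi$; so does $P_1$, and hence $P_1$ commutes with $(\B_\varepsilon-\lambda)^{-1}$. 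Thus $(\B_\varepsilon-\lambda)^{-1}P_1=P_1(\B_\varepsilon-\lambda)^{-1}P_1$, and since $\|P_1\|$ is a finite constant independent of $\varepsilon$, it suffices to bound $(\B_\varepsilon-\lambda)^{-1}$ restricted to the $\B_\varepsilon$-invariant subspace ${\rm Range}(P_1)$.

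Next I would rerun the argument of Lemma \ref{lem:3.2}. By density it is enough to take $g\in{\rm Range}(P_1)\cap{\rm span}\{v_nw_m\}_{n,m}$ and estimate $w=(\B_\varepsilon-\lambda)^{-1}g$. For such $g$ we have $\Pi_0g=\Pi_1g=0$, and since $\Pi_0,\Pi_1$ commute with the resolvent, also $\Pi_0w=\Pi_1w=0$; thus, with the same fixed ($\varepsilon$-independent) integer $K$ as in that proof, $w=\sum_{k=2}^{K}\Pi_kw+P_Kw$. The terms $\Pi_kw$ with $k\geq1$ were bounded there by $\|\Pi_kw\|_2\leq C_k\|g\|_2$ with $C_k$ independent of $\varepsilon$ and $r$ (this is \eqref{eq:78}; it uses only that the shift $\varepsilon^{-1/2}(\nu_k-\nu_0)$ is nonzero for $k\geq1$), while $\|P_Kw\|_2\leq C_K\|g\|_2$ for $K$ large, by \eqref{eq:70}, \eqref{eq:81} and \eqref{eq:82}. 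Summing the finitely many contributions gives $\|w\|_2\leq\bigl(\sum_{k=2}^{K}C_k+C_K\bigr)\|g\|_2$, i.e.\ \eqref{eq:85} with a constant independent of $\varepsilon$.

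I do not expect any genuine obstacle: the whole content is the observation that the singular behaviour in Lemma \ref{lem:3.2} lives entirely in the component $\Pi_0w$ (the mode responsible for the $\varepsilon^{-1/2}/r$ blow-up), so on ${\rm Range}(P_1)$ what survives is a finite sum of terms each of which was already estimated uniformly in $\varepsilon$. The only thing to watch is purely clerical, namely checking that every bound imported from the proof of Lemma \ref{lem:3.2} other than the one for $\Pi_0w$ has its constant independent of $\varepsilon$ and $r$, which holds because $K$ is fixed once and for all and the estimates \eqref{eq:78}, \eqref{eq:81}, \eqref{eq:82} are $\varepsilon$-free.
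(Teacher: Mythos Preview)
Your proposal is correct and follows essentially the same approach as the paper, which simply remarks that the corollary ``follows immediately from \eqref{eq:78} and \eqref{eq:82}''. Your elaboration via commutativity and the decomposition $P_1w=\sum_{k=2}^{K}\Pi_kw+P_Kw$ makes explicit exactly why dropping the $\Pi_0$-component eliminates the only $\varepsilon$-singular term from the estimate \eqref{eq:79}; the paper leaves this implicit but is relying on the same bounds you cite (indeed \eqref{eq:82} alone does not bound $\|P_Kw\|_2$---one also needs \eqref{eq:81} and \eqref{eq:70}, as you note).
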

The corollary follows immediately from \eqref{eq:78} and
\eqref{eq:82}.\\

Recall now the definition of $\Sg$ from the introduction
\begin{displaymath}
  \Sg=\{x\in\partial\Omega_\perp\,: \,|\nabla V(x)| = |\nabla V(x_0)|  \,, \;V(x)=V(x_0)\}\,.
\end{displaymath}
By \eqref{eq:3}, $\Sg$ is a finite set of isolated points
$\{x_j\}_{j\in J_\Sg}$. Recall the definition of the curvilinear
coordinate system $(s,t)$ from the previous section, and then let
$x_j=(s_j,0)$. Let further $f\in L^\infty(\Omega,\C)$ be supported on
$\Omega\cap\Union_{j\in J_\Sg}B(x_j,\delta)$ and satisfy
\begin{equation}
\label{eq:90}
  |f|\leq C\|f\|_2\varepsilon^{7/8}e^{-\gamma_1\varepsilon^{-3/2}[(s-s_j)^2+t^{3/2}]} \quad
  \text{in}~ B(x_j,\delta)\cap\Omega \quad \forall j\in J_\Sg \,,
\end{equation}
for some fixed and positive $\gamma_1$ and $C$.

We seek an estimate for the resolvent of $\A_h$. To this end a few
auxiliary estimates, beyond \eqref{eq:73}, are necessary.  Set then
\begin{displaymath}
  \Omega_+=\{x\in\Omega\,|\, V(x)>V(x_0)\,\} \quad ; \quad \Omega_-=\{x\in\Omega\,|\,
  V(x)<V(x_0)\,\} \,,
\end{displaymath}
and
\begin{displaymath}
  \Gamma=\{x\in\Omega\,|\, V(x)=V(x_0)\,\} \,.
\end{displaymath}
Define then the cutoff function $ \chi_{\varepsilon,n}^+\in C^\infty(\Omega,[0,1])$,
where $n\in\N$, in the following manner
\begin{equation}
\label{eq:91}
  \chi_{\varepsilon,n}^+(x) =
  \begin{cases}
    1 & x\in\Omega_- \\
    1 & x\in\Omega_+\cap  \{V(x)-V(x_0)\leq2^{n-1}\varepsilon^\rho\} \\
    0 & x\in\Omega_+\cap \{V(x)-V(x_0)\geq 2^n\varepsilon^\rho\} \,,
  \end{cases}
\quad \|\nabla\chi_{\varepsilon,n}^+\|_\infty \leq \frac{C_n}{\varepsilon^\rho}
\end{equation}
where $0<\rho<1$. We further set
\begin{equation}
\label{eq:92}
  (\tilde{\chi}_{\varepsilon,n}^+)^2 +  (\chi_{\varepsilon,n}^+)^2=1 \,.
\end{equation}
In a similar manner we then define $ \chi_{\Gamma,\varepsilon,n}^-$: 
\begin{displaymath}
  \chi_{\varepsilon,n}^-(x) =
  \begin{cases}
    1 & x\in\Omega_+ \\
    1 & x\in\Omega_-\cap \{V(x_0)-V(x)\leq2^{n-1}\varepsilon^\rho\} \\
    0 & x\in\Omega_-\cap \{V(x_0)-V(x) \geq 2^n\varepsilon^\rho\} \,.
  \end{cases}
\end{displaymath}
The complementary cutoff function $\tilde{\chi}_{\varepsilon,n}^-$ is then given
by
\begin{displaymath}
  (\tilde{\chi}_{\varepsilon,n}^-)^2 = 1-  (\chi_{\varepsilon,n}^-)^2
\end{displaymath}

We begin with the following estimate
\\
\begin{lemma}
\label{lem:Gamma}
Let $f$ satisfy \eqref{eq:90} and 
\begin{equation}
\label{eq:93}
  (\A_h-\lambda^*)w=f\,,
\end{equation}
where
\begin{displaymath}
  |\lambda^*|\leq C\varepsilon
\end{displaymath}
Then, for any $n\in\N$ there exists $C_n>0$ and $\gamma_2>0$ such that for sufficiently
small $\varepsilon$ we have
\begin{subequations}
    \label{eq:94}
    \begin{equation}
  \| \tilde{\chi}_{\varepsilon,n}^-w\|_2 +\| \tilde{\chi}_{\varepsilon,n}^+w\|_2
  \leq C_n(\varepsilon^{n\rho-1}\|w\|_2+ e^{-\gamma_2\varepsilon^{-\frac{3}{2}(1-\rho)}}\|f\|_2)\,.
  \end{equation}
Furthermore, we have that
\begin{equation}
  \|\nabla(\tilde{\chi}_{\varepsilon,n}^+w)\|_2 +\| \nabla(\tilde{\chi}_{\varepsilon,n}^-w)\|_2 + \varepsilon^2(   \|D^2(\tilde{\chi}_{\varepsilon,n}^+w)\|_2 +\|
  D^2(\tilde{\chi}_{\varepsilon,n}^-w)\|_2) \leq  C_n\varepsilon^{n\rho-1}(\|w\|_2+\|f\|_2)\,.
\end{equation}
\end{subequations}
\end{lemma}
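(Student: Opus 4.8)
This is a weighted energy (Agmon--type) estimate. The plan is to test equation \eqref{eq:93} against $(\tilde{\chi}^{\pm}_{\varepsilon,n})^2 w$, read off a gradient bound from the real part and an $L^2$ bound from the imaginary part --- the latter using that $|V-V_0|\geq 2^{n-1}\varepsilon^\rho$ on $\mathrm{Supp}\,\tilde{\chi}^{\pm}_{\varepsilon,n}$ --- and then iterate over $n$, exploiting that $\{\chi^{\pm}_{\varepsilon,n}\}_n$ is a nested family. I treat $\tilde{\chi}^+_{\varepsilon,n}$ throughout; the estimate for $\tilde{\chi}^-_{\varepsilon,n}$ follows verbatim with $\Omega_+$ replaced by $\Omega_-$. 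Two preliminary facts are needed. First, testing \eqref{eq:93} against $w$ and taking the real part gives, since $|\lambda^*|\leq C\varepsilon$, the crude global bound $h^2\|\nabla w\|_2^2\leq C\varepsilon\|w\|_2^2+C\|f\|_2\|w\|_2$, hence $\|\nabla w\|_2\leq C\varepsilon^{-1}(\|w\|_2+\|f\|_2)$ because $h^2=c^4\varepsilon^3/\alpha^3$ by \eqref{eq:137}. Second, on $\mathrm{Supp}\,\tilde{\chi}^+_{\varepsilon,n}$ one has $V-V_0\geq 2^{n-1}\varepsilon^\rho$, and by the local expansion \eqref{eq:38} about each $x_j\in\Sg$ (where $c(x_j),\alpha(x_j)>0$) this forces either $t\geq c'\varepsilon^\rho$ or $(s-s_j)^2\geq c'\varepsilon^\rho$ there, whence $\varepsilon^{-3/2}[(s-s_j)^2+t^{3/2}]\geq c''\varepsilon^{-3(1-\rho)/2}$; by \eqref{eq:90} this yields
\begin{displaymath}
  \|f\|_{L^2(\mathrm{Supp}\,\tilde{\chi}^+_{\varepsilon,n})}\leq C\,e^{-\gamma_2\varepsilon^{-3(1-\rho)/2}}\|f\|_2\,,
\end{displaymath}
which is the source of the second term in \eqref{eq:94}.

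\emph{The iteration for the first bound.} Test \eqref{eq:93} against $(\tilde{\chi}^+_{\varepsilon,n})^2 w$ and integrate $-h^2\Delta$ by parts once. In the imaginary part the term $h^2\|\tilde{\chi}^+_{\varepsilon,n}\nabla w\|_2^2$ is real and drops out, the potential term is $\geq 2^{n-1}\varepsilon^\rho\|\tilde{\chi}^+_{\varepsilon,n}w\|_2^2$, the $\Im\lambda^*$ contribution is $\leq C\varepsilon\|\tilde{\chi}^+_{\varepsilon,n}w\|_2^2$ (absorbed on the left since $\rho<1$), the $f$-term is controlled by the display above, and the remaining commutator term is $\leq C h^2\|\nabla\tilde{\chi}^+_{\varepsilon,n}\|_\infty\,\|\nabla w\|_{L^2(A_n)}\,\|w\|_{L^2(A_n)}$ with $A_n=\{2^{n-1}\varepsilon^\rho\leq V-V_0\leq 2^n\varepsilon^\rho\}\subset\{\tilde{\chi}^+_{\varepsilon,n-1}=1\}$ and $\|\nabla\tilde{\chi}^+_{\varepsilon,n}\|_\infty\leq C_n\varepsilon^{-\rho}$. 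To close the recursion one needs $\|\nabla w\|_{L^2(A_n)}$, which comes from the real part of \eqref{eq:93} tested against $\zeta_n^2 w$, where $\zeta_n\in C^\infty_0$ equals $1$ on $A_n$, is supported in $\{2^{n-2}\varepsilon^\rho\leq V-V_0\leq 2^{n+1}\varepsilon^\rho\}\subset\{\tilde{\chi}^+_{\varepsilon,n-2}=1\}$, and satisfies $\|\nabla\zeta_n\|_\infty\leq C_n\varepsilon^{-\rho}$; this bounds $\|\nabla w\|_{L^2(A_n)}$ in terms of $\|\tilde{\chi}^+_{\varepsilon,n-2}w\|_2$ and $\|f\|_2$. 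Feeding this back yields an inequality relating $\|\tilde{\chi}^+_{\varepsilon,n}w\|_2$ to $\|\tilde{\chi}^+_{\varepsilon,n-1}w\|_2$ and $\|\tilde{\chi}^+_{\varepsilon,n-2}w\|_2$, up to an exponentially small multiple of $\|f\|_2$; iterating it down to the base levels $n=0,1$, where $\|\tilde{\chi}^+_{\varepsilon,0}w\|_2\leq\|w\|_2$ and the crude gradient bound $\|\nabla w\|_2\leq C\varepsilon^{-1}(\cdots)$ is used once, and bookkeeping the powers of $\varepsilon$ and the constants $C_n$ (which absorb the factors $2^n$ and the fixed number of steps) gives the first bound in \eqref{eq:94}; the lone surviving $\varepsilon^{-1}$ is exactly the one introduced by that crude gradient bound.

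\emph{The second bound.} The estimate for $\|\nabla(\tilde{\chi}^+_{\varepsilon,n}w)\|_2$ follows from $\nabla(\tilde{\chi}^+_{\varepsilon,n}w)=(\nabla\tilde{\chi}^+_{\varepsilon,n})w+\tilde{\chi}^+_{\varepsilon,n}\nabla w$ together with the first bound of \eqref{eq:94} applied at levels $n$ and $n-1$ and the localized gradient bounds already produced. For the Hessian, rewrite \eqref{eq:93} as
\begin{displaymath}
  -h^2\Delta(\tilde{\chi}^+_{\varepsilon,n}w)=\tilde{\chi}^+_{\varepsilon,n}(\lambda^*-i(V-V_0))w+\tilde{\chi}^+_{\varepsilon,n}f+[-h^2\Delta,\tilde{\chi}^+_{\varepsilon,n}]w\,,
\end{displaymath}
estimate the right-hand side in $L^2$ using $\|V-V_0\|_{L^\infty}\leq C$, \eqref{eq:90}, the gradient bound just obtained, and $\|\nabla\tilde{\chi}^+_{\varepsilon,n}\|_\infty\leq C_n\varepsilon^{-\rho}$, and then invoke the up-to-the-boundary elliptic estimate $\|D^2 u\|_2\leq C(\|\Delta u\|_2+\|\nabla u\|_2+\|u\|_2)$, valid on the smooth ($C^3$) domain $\Omega$ for $u\in H^2\cap H^1_0$ (which $\tilde{\chi}^+_{\varepsilon,n}w$ satisfies, even though it meets $\partial\Omega$). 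Since $\varepsilon^2/h^2\sim\varepsilon^{-1}$, multiplying by $\varepsilon^2 h^{-2}$ and inserting the first bound of \eqref{eq:94} produces the factor $\varepsilon^{n\rho-1}$.

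\emph{Main obstacle.} The technical crux is not any individual estimate but the bookkeeping of the powers of $\varepsilon$ through the nested recursion: the auxiliary annular cutoffs $\zeta_n$ must be arranged so that, at each scale $2^n\varepsilon^\rho$, the $L^2$-mass of $w$ and of $\nabla w$ appearing on the right-hand side of the energy identity is genuinely carried by a set on which the next coarser cutoff equals $1$, so that the recursion closes with a loss that is summable over the (finitely many, for fixed $n$) steps; and the $n$-dependent constants $C_n$ (carrying the $2^n$'s and the elliptic constants) must be prevented from interfering with the uniformity in $\varepsilon$.
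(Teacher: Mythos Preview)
Your proposal is correct and follows the paper's strategy: test \eqref{eq:93} against $(\tilde{\chi}^\pm_{\varepsilon,n})^2 w$, extract a gradient bound from the real part and a weighted $L^2$ bound from the imaginary part (using $|V-V_0|\gtrsim\varepsilon^\rho$ on the support), exploit the nesting $\mathrm{supp}\,\nabla\tilde{\chi}^\pm_{\varepsilon,n}\subset\{\tilde{\chi}^\pm_{\varepsilon,n-1}=1\}$, and iterate; the Hessian bound via localized elliptic regularity also matches.

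The one detour worth noting is your auxiliary cutoff $\zeta_n$. The paper avoids it: the cross term in the imaginary part is rewritten as $\Im\langle w\nabla\tilde{\chi}^+_{\varepsilon,n},\nabla(\tilde{\chi}^+_{\varepsilon,n}w)\rangle$ and bounded by $\|w\nabla\tilde{\chi}^+_{\varepsilon,n}\|_2\,\|\nabla(\tilde{\chi}^+_{\varepsilon,n}w)\|_2$, where the second factor is already controlled by the \emph{real} part of the very same weighted identity (namely $\|\nabla(\tilde{\chi}^+_{\varepsilon,n}w)\|_2^2\leq C(\|w\nabla\tilde{\chi}^+_{\varepsilon,n}\|_2^2+\varepsilon^{-4}\|\tilde{\chi}^+_{\varepsilon,n}f\|_2^2+\varepsilon^{-2}\|\tilde{\chi}^+_{\varepsilon,n}w\|_2^2)$). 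Substituting this back produces a clean \emph{one-step} recursion $\|\tilde{\chi}^+_{\varepsilon,n}w\|_2\leq C\big(\varepsilon^{\text{gain}}\|\tilde{\chi}^+_{\varepsilon,n-1}w\|_2+e^{-\gamma_2\varepsilon^{-3(1-\rho)/2}}\|f\|_2\big)$ rather than the two-step one you set up, so the bookkeeping you flag as the main obstacle largely disappears.
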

\begin{proof}
  In the following the constants $C$ and $\gamma_2$ depend on $n$.  Taking
  the inner product of (\ref{eq:93}) with $(\tilde{\chi}_{\varepsilon,n}^+)^2w$
  yields
\begin{subequations}
\label{eq:95}
\begin{empheq}[left={\empheqlbrace}]{alignat=2}
\|\nabla(\tilde{\chi}_{\varepsilon,n}^+w)\|_2^2 - \|w\nabla\tilde{\chi}_{\varepsilon,n}^+\|_2^2=
&\frac{\alpha}{\varepsilon^3c^4}\big(\Re\lambda^*\| \tilde{\chi}_{\varepsilon,n}^+w\|_2^2+ \Re\langle
\tilde{\chi}_{\varepsilon,n}^+w,\tilde{\chi}_{\varepsilon,n}^+f\rangle\big)& \\
 \frac{\alpha}{\varepsilon^3c^4}\|\tilde{\chi}_{\varepsilon,n}^+|V-V(x_0)|^{1/2}w\|_2^2&+
\Im\langle w\nabla\tilde{\chi}_{\varepsilon,n}^+,\nabla(\tilde{\chi}_{\varepsilon,n}^+w)\rangle&\notag \\
=&\frac{\alpha}{\varepsilon^3c^4}\big(\Im\lambda^*\| \tilde{\chi}_{\varepsilon,n}^+w\|_2^2+\Im\langle\tilde{\chi}_{\varepsilon,n}^+w,\tilde{\chi}_{\varepsilon,n}^+f\rangle\big)\,. &
\end{empheq}
\end{subequations}
From the definition of $\tilde{\chi}_{\varepsilon,n}^+$ and (\ref{eq:95}b) we get 
\begin{equation}
\label{eq:96}
  \| \tilde{\chi}_{\varepsilon,n}^+w\|_2^2\leq C\varepsilon^{3-\rho}
  \Big(\|\nabla(\tilde{\chi}_{\varepsilon,n}^+w)\|_2^2
  +\|w\nabla\tilde{\chi}_{\varepsilon,n}^+\|_2^2+ \varepsilon^{-4}\|\tilde{\chi}_{\varepsilon,n}^+f\|_2^2+\varepsilon^{-2}\| \tilde{\chi}_{\varepsilon,n}^+w\|_2^2\Big)\,.
\end{equation}
By (\ref{eq:95}a) we have
\begin{equation}
\label{eq:97}
  \|\nabla(\tilde{\chi}_{\varepsilon,n}^+w)\|_2^2 \leq C \Big(
  \|w\nabla\tilde{\chi}_{\varepsilon,n}^+\|_2^2+ \varepsilon^{-4}\|\tilde{\chi}_{\varepsilon,n}^+f\|_2^2+\varepsilon^{-2}\| \tilde{\chi}_{\varepsilon,n}^+w\|_2^2\Big)\,.
\end{equation}
Substituting the above into \eqref{eq:96} then yields 
\begin{displaymath}
   \| \tilde{\chi}_{\varepsilon,n}^+w\|_2^2\leq C\varepsilon^{3-\rho}
  \Big(\|w\nabla\tilde{\chi}_{\varepsilon,n}^+\|_2^2+ \varepsilon^{-4}\|\tilde{\chi}_{\varepsilon,n}^+f\|_2^2+\varepsilon^{-2}\| \tilde{\chi}_{\varepsilon,n}^+w\|_2^2\Big)\,,
\end{displaymath}
from which we easily obtain, for sufficiently small $\varepsilon$,
\begin{equation}
\label{eq:98}
   \| \tilde{\chi}_{\varepsilon,n}^+w\|_2^2\leq C\varepsilon^{3-\rho}
  \Big(\|w\nabla\tilde{\chi}_{\varepsilon,n}^+\|_2^2+ \varepsilon^{-4}\|\tilde{\chi}_{\varepsilon,n}^+f\|_2^2\Big)\,.
\end{equation}
By \eqref{eq:90} we have that for sufficiently small $\gamma_2$ and $\varepsilon$,
\begin{equation}
\label{eq:99}
  \|\tilde{\chi}_{\varepsilon,n}^+f\|_2 \leq  Ce^{-\gamma_2\varepsilon^{-\frac{3}{2}(1-\rho)}}\|f\|_2\,.
\end{equation}
Furthermore,  by \eqref{eq:91} and \eqref{eq:92} we have that
\begin{displaymath}
  \|w\nabla\tilde{\chi}_{\varepsilon,n}^+\|_2 \leq
  \frac{C}{\varepsilon^\rho}\|\tilde{\chi}_{\varepsilon,n-1}^+w\|_2 \,.
\end{displaymath}
Combining the above, \eqref{eq:99}, and \eqref{eq:98} then yields
\begin{displaymath}
   \| \tilde{\chi}_{\varepsilon,n}^+w\|_2\leq C\big( \varepsilon^\rho\|\tilde{\chi}_{\varepsilon,n-1}^+w\|_2+e^{-\gamma_2\varepsilon^{-\frac{3}{2}(1-\rho)}}\|f\|_2 \big)\,.
\end{displaymath}
Similarly we obtain that
\begin{displaymath}
   \| \tilde{\chi}_{\varepsilon,n}^-w\|_2\leq C\big( \varepsilon^\rho\|\tilde{\chi}_{\varepsilon,n-1}^+w\|_2+e^{-\gamma_2\varepsilon^{-\frac{3}{2}(1-\rho)}}\|f\|_2 \big)\,.
\end{displaymath}
The above pair of inequalities, when recursively applied, readily
yield (\ref{eq:94}a).  

We begin the proof of (\ref{eq:94}b) by combining \eqref{eq:97} and
(\ref{eq:94}a) to obtain
\begin{equation}
\label{eq:100}
  \|\nabla(\tilde{\chi}_{\varepsilon,n}^+w)\|_2 \leq C_n(\varepsilon^{n\rho-1}\|w\|_2+e^{-\gamma_2\varepsilon^{-\frac{3}{2}(1-\rho)}}\|f\|_2) \,.
\end{equation}
Furthermore, we have that
\begin{multline*}
  \|\tilde{\chi}_{\varepsilon,n}^+\Delta w\|_2 \leq
  \frac{C}{\varepsilon^3}\|(V-V(x_0))\tilde{\chi}_{\varepsilon,n}^+w\|_2\\ +\frac{C}{\varepsilon^2}\|\tilde{\chi}_{\varepsilon,n}^+w\|_2 +
  \frac{C}{\varepsilon^3}\|\tilde{\chi}_{\varepsilon,n}^+f\|_2 \leq C_n(\varepsilon^{n\rho-3}\|w\|_2+e^{-\gamma_2\varepsilon^{-\frac{3}{2}(1-\rho)}}\|f\|_2) \,.
\end{multline*}
As,
\begin{displaymath}
  \|\Delta(\tilde{\chi}_{\varepsilon,n}^+w)\|_2\leq \frac{C}{\varepsilon^\rho}
  \|\nabla(\tilde{\chi}_{\varepsilon,n-1}^+w)\|_2 +
  \frac{C}{\varepsilon^{2\rho}}\|\tilde{\chi}_{\varepsilon,n-1}^+w\|_2 + \|\tilde{\chi}_{\varepsilon,n}^+\Delta w\|_2 \,,
\end{displaymath}
we readily conclude that
\begin{displaymath}
  \|\Delta(\tilde{\chi}_{\varepsilon,n}^+w)\|_2 \leq C_n(\varepsilon^{n\rho-3}\|w\|_2+e^{-\gamma_2\varepsilon^{-\frac{3}{2}(1-\rho)}}\|f\|_2) \,.
\end{displaymath}
Standard elliptic estimates, together with \eqref{eq:100} then yield
 (\ref{eq:94}b), after repeating the same argument for $\tilde{\chi}_{\varepsilon,n}^-w$.
 \end{proof}

Before we attempt to estimate $(\A_h-\lambda^*)^{-1}f$ we need yet the
following auxiliary estimate.
\begin{lemma}
  \label{lem:aux1}
Under the same conditions of Lemma \ref{lem:Gamma} we have that
\begin{subequations}
\label{eq:101}
\begin{empheq}[left={\empheqlbrace}]{alignat=2}
& \|\nabla w\|_2 \leq \frac{C}{\varepsilon}\|w\|_2+ \frac{C}{\varepsilon^2}\|f\|_2 \,,& &\\
& \|D^2w\|_2 \leq \frac{C}{\varepsilon^{3-\rho}}\|w\|_2+ \frac{C}{\varepsilon^3}\|f\|_2\,, & &
\end{empheq}
\end{subequations}
where $w=(\A_h-\lambda^*)^{-1}f$ and $0<\rho<1$.
\end{lemma}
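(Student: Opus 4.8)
The plan is to obtain the first bound from the basic energy identity and the second by solving the equation for $\Delta w$, the only substantial input being the concentration of $w$ near $\Gamma=\{V=V(x_0)\}$ provided by Lemma \ref{lem:Gamma} (whose hypotheses --- that $f$ satisfy \eqref{eq:90} and $|\lambda^*|\le C\varepsilon$ --- are in force here).

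For (\ref{eq:101}a), I would take the inner product of $(\A_h-\lambda^*)w=f$ with $w$ in $L^2(\Omega,\C)$. Since $w\in D(\A_h)$ satisfies the Dirichlet condition we have $\langle-h^2\Delta w,w\rangle=h^2\|\nabla w\|_2^2$, while $i\langle(V-V(x_0))w,w\rangle$ is purely imaginary because $V$ is real-valued; hence the real part of the identity reads
\begin{displaymath}
  h^2\|\nabla w\|_2^2=\Re\lambda^*\|w\|_2^2+\Re\langle f,w\rangle\le C\varepsilon\|w\|_2^2+\|f\|_2\|w\|_2\,.
\end{displaymath}
By \eqref{eq:137} we have $h^2=c^4\varepsilon^3/\alpha^3$, so that $h^{-2}$ is a fixed constant times $\varepsilon^{-3}$, and therefore $\|\nabla w\|_2^2\le C\varepsilon^{-2}\|w\|_2^2+C\varepsilon^{-3}\|f\|_2\|w\|_2$; writing the last term as $(\varepsilon^{-1}\|w\|_2)(\varepsilon^{-2}\|f\|_2)$ and applying Young's inequality gives $\|\nabla w\|_2^2\le C\varepsilon^{-2}\|w\|_2^2+C\varepsilon^{-4}\|f\|_2^2$, whence (\ref{eq:101}a) upon taking square roots.

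For (\ref{eq:101}b), I would rewrite the equation as $-h^2\Delta w=f-i(V-V(x_0))w+\lambda^*w$, so that $\|\Delta w\|_2\le h^{-2}(\|f\|_2+\|(V-V(x_0))w\|_2+|\lambda^*|\,\|w\|_2)$. The heart of the matter is to show
\begin{displaymath}
  \|(V-V(x_0))w\|_2\le C_\rho\,\varepsilon^\rho\|w\|_2+Ce^{-\gamma_2\varepsilon^{-3(1-\rho)/2}}\|f\|_2\,.
\end{displaymath}
To this end I would fix the integer $n_0=\lceil1+1/\rho\rceil$, so that $\varepsilon^{n_0\rho-1}\le\varepsilon^\rho$, and split $\Omega$ into the shrinking neighbourhood $\{|V-V(x_0)|<2^{n_0}\varepsilon^\rho\}$ of $\Gamma$, on which $|V-V(x_0)|\le2^{n_0}\varepsilon^\rho$, and its complement. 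On the complement one has $\mathbf 1_{\{|V-V(x_0)|\ge2^{n_0}\varepsilon^\rho\}}|w|\le|\tilde{\chi}_{\varepsilon,n_0}^+w|+|\tilde{\chi}_{\varepsilon,n_0}^-w|$, because $\tilde{\chi}_{\varepsilon,n_0}^+$ (resp.\ $\tilde{\chi}_{\varepsilon,n_0}^-$) equals $1$ on $\Omega_+\cap\{V-V(x_0)\ge2^{n_0}\varepsilon^\rho\}$ (resp.\ on $\Omega_-\cap\{V(x_0)-V\ge2^{n_0}\varepsilon^\rho\}$); bounding $|V-V(x_0)|$ by its supremum over $\bar\Omega$ and invoking (\ref{eq:94}a) with $n=n_0$ controls the complementary part by $C(\varepsilon^{n_0\rho-1}\|w\|_2+e^{-\gamma_2\varepsilon^{-3(1-\rho)/2}}\|f\|_2)$, and the displayed bound follows since $\varepsilon^{n_0\rho-1}\le\varepsilon^\rho$. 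As $\rho<1$ forces $|\lambda^*|\le C\varepsilon\le C\varepsilon^\rho$, and as the exponentially small term is dominated by $\varepsilon^{-3}\|f\|_2$ once multiplied by $h^{-2}$, we obtain $\|\Delta w\|_2\le C\varepsilon^{-(3-\rho)}\|w\|_2+C\varepsilon^{-3}\|f\|_2$; then (\ref{eq:101}b) follows from the standard $H^2$ elliptic estimate $\|D^2w\|_2\le C(\|\Delta w\|_2+\|w\|_2)$ for the Dirichlet Laplacian on the $C^3$ (hence $C^{1,1}$) domain $\Omega$, together with $\|w\|_2\le\varepsilon^{-(3-\rho)}\|w\|_2$.

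I expect the estimate of $\|(V-V(x_0))w\|_2$ to be the only delicate point: the level $n_0$ in Lemma \ref{lem:Gamma} must be taken large --- but fixed, depending only on $\rho$ --- so that the decay rate $\varepsilon^{n_0\rho-1}$ of $w$ away from $\Gamma$ absorbs the loss $h^{-2}\asymp\varepsilon^{-3}$ incurred when solving for $\Delta w$, while on the retained neighbourhood $|V-V(x_0)|$ is itself $\OO(\varepsilon^\rho)$, giving a contribution of size $\OO(\varepsilon^\rho)\|w\|_2$; balancing these two effects is precisely what produces the exponent $3-\rho$ in (\ref{eq:101}b). Everything else is a routine energy estimate and elliptic regularity.
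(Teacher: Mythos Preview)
Your proof is correct and follows essentially the same route as the paper: the gradient bound comes from the real part of $\langle w,(\A_h-\lambda^*)w\rangle$, and the Hessian bound from solving for $\Delta w$, controlling $\|(V-V(x_0))w\|_2$ via the splitting $1=\zeta^2+(\tilde\chi^+_{\varepsilon,n})^2+(\tilde\chi^-_{\varepsilon,n})^2$ together with (\ref{eq:94}a) for $n$ large (but fixed), then invoking $H^2$ elliptic regularity. Your explicit choice $n_0=\lceil1+1/\rho\rceil$ and the use of Young's inequality just make precise what the paper leaves implicit.
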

\begin{proof}
  As
  \begin{displaymath}
    \|\nabla w\|_2^2 = \frac{\alpha}{\varepsilon^3c^4}(\lambda^*\|w\|_2^2 + \Re\langle w,f\rangle)\,,
  \end{displaymath}
we readily obtain (\ref{eq:101}a). To prove (\ref{eq:101}b) we first
note that
\begin{equation}
\label{eq:102}
  \|\Delta w\|_2 \leq \frac{C}{\varepsilon^3}(\|(V-V(x_0))w\|_2 + \lambda^*\|w\|_2 + \|f\|_2)
\end{equation}
Let 
\begin{displaymath}
\zeta^2=1-(\tilde{\chi}_{\varepsilon,n}^-)^2-(\tilde{\chi}_{\varepsilon,n}^+)^2\,.
\end{displaymath}
By \eqref{eq:94} we have, for sufficiently large $n$,
\begin{multline*}
 \|(V-V(x_0))w\|_2 \leq C(\|\tilde\chi_{\varepsilon,n}^-w\|_2+\|\tilde\chi_{\varepsilon,n}^+w\|_2)+\|\zeta(V-V(x_0))w\|_2 \\
 \leq  C(\varepsilon^{n\rho-1}\|w\|_2+e^{-\gamma_2\varepsilon^{-\frac{3}{2}(1-\rho)}}\|f\|_2 + \varepsilon^\rho\|w\|_2) \leq C(\varepsilon^\rho\|w\|_2+e^{-\gamma_2\varepsilon^{-\frac{3}{2}(1-\rho)}}\|f\|_2)\,,
\end{multline*}
which, when substituted into \eqref{eq:102}, yields (\ref{eq:101}) with the aid of
standard elliptic estimates.
\end{proof}

Lemmas \ref{lem:3.2} and \ref{lem:Gamma} can now be used to estimate
$(\A_h-\lambda^*)^{-1}f$ in the close vicinity of $x_0$ where $\lambda^*\in\partial
B(\Lambda_0,(c^2r\varepsilon^{3/2}/\alpha))$, $r\in(0,1)$ being chosen so that
 $\partial B(\Lambda_0,(c^2r\varepsilon^{3/2}/\alpha))\subset\rho(\A_h)$, where
\begin{equation}
\label{eq:140}
   \Lambda_0= \frac{\varepsilon c^2}{\alpha} (\lambda_0+\varepsilon^{1/2}\lambda_2) \,.
 \end{equation}

\begin{lemma}
  Let $f\in L^\infty(\Omega,\C)$ satisfy \eqref{eq:90}, and $7/8<\rho<1$. Let
  $w=(\A_h-\lambda^*)^{-1}f^*$  and
  $\zeta_0$ be given by
  \begin{equation}
\label{eq:103}
      \zeta_0^*(\varepsilon,\rho)=[1- (\tilde{\chi}_{\varepsilon,n}^-)^2
      -(\tilde{\chi}_{\varepsilon,n}^+)^2]{\mathbf 1}_{B(x_0,\delta)\cap\Omega}\,,
  \end{equation}
where $\delta>0$ is so chosen so that $B(x_0,\delta)\cap\Gamma=\{x_0\}$.
Then, 
   \begin{equation}
    \label{eq:104}
\|\zeta_0^*w^*\|_2\leq \frac{C}{r}(\varepsilon^{-3/2}\|f\|_2+\varepsilon^{1/8}\|w^*\|_2)\,.
  \end{equation}
\end{lemma}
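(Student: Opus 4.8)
The plan is to localize the problem near $x_0$ and transport it to the model operator $\B_\varepsilon$ via the rescaling \eqref{eq:39}, then use the resolvent bound of Lemma~\ref{lem:3.2}. First, a geometric observation: since $c,\alpha>0$ and $B(x_0,\delta)\cap\Gamma=\{x_0\}$, for $\delta$ small $V>V(x_0)$ on $\big(B(x_0,\delta)\cap\bar\Omega\big)\setminus\{x_0\}$, so $\chi_{\varepsilon,n}^-\equiv1$ (hence $\tilde\chi_{\varepsilon,n}^-\equiv0$) on $B(x_0,\delta)\cap\Omega$; therefore $\zeta_0^*=(\chi_{\varepsilon,n}^+)^2$ there, and $\mathrm{supp}\,\zeta_0^*\subset\{V-V(x_0)\le2^n\varepsilon^\rho\}\cap\Omega$, which in the rescaled variables is contained in $\{0\le\tau\lesssim\varepsilon^{\rho-1},\ |\xi|\lesssim\varepsilon^{\rho/2-3/4}\}$ — both only small negative powers of $\varepsilon$ since $\rho>7/8$, in particular well inside the region where \eqref{eq:40} and \eqref{eq:estR} hold. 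Fix a cutoff $\chi$ equal to $1$ on $\mathrm{supp}\,\zeta_0^*$ and supported in a comparable set, and put $v=\chi w^*$. Using $\tfrac{\alpha}{\varepsilon c^2}\A_h=\B_\varepsilon+\varepsilon^{3/4}i\sigma\xi\tau+R$ (up to an inessential constant), the identity $(\A_h-\lambda^*)w^*=f^*$ becomes
\[
 \big(\B_\varepsilon+\varepsilon^{3/4}i\sigma\xi\tau-\tilde\lambda\big)v=\tfrac{\alpha}{\varepsilon c^2}\chi f^*+\Big[\tfrac{\alpha}{\varepsilon c^2}\A_h,\chi\Big]w^*-Rv,\qquad\tilde\lambda:=\tfrac{\alpha}{\varepsilon c^2}\lambda^*,
\]
and, as $\lambda^*\in\partial B(\Lambda_0,c^2r\varepsilon^{3/2}/\alpha)$ with $\Lambda_0=\tfrac{\varepsilon c^2}{\alpha}(\lambda_0+\varepsilon^{1/2}\lambda_2)$, the scaled parameter $\tilde\lambda$ lies on the circle $|\tilde\lambda-\lambda_0-\varepsilon^{1/2}\lambda_2|=r\varepsilon^{1/2}$ of Lemma~\ref{lem:3.2}. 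The proof of that lemma applies verbatim with $\B_\varepsilon$ replaced by $\B_\varepsilon+\varepsilon^{3/4}i\sigma\xi\tau$: after projection onto the Airy ground state $v_0(\tau)$ the extra term acts on the $\xi$-profile as $\varepsilon^{3/4}i\gamma\xi$ (with $\gamma$ as in the $\OO(\varepsilon^{3/4})$ balance of Section~\ref{sec:2}), i.e.\ it merely translates the harmonic oscillator by $O(\varepsilon^{1/4})$ and moves its eigenvalues by $O(\varepsilon)$. Hence $\|v\|_2\le\tfrac{C}{r}\varepsilon^{-1/2}\|\,\text{(right-hand side)}\,\|_2$, and by the Corollary (estimate \eqref{eq:85}) only the part of this resolvent carrying the factor $r^{-1}$ lives on $v_0(\tau)\otimes(\text{Gaussian in }\xi)$.

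I would then estimate the three groups of terms on the right. By \eqref{eq:90}, $\|\chi f^*\|_2\le C\|f\|_2$, so the source contributes $\tfrac{C}{r}\varepsilon^{-3/2}\|f\|_2$, the first term of \eqref{eq:104}. The commutator $\big[\tfrac{\alpha}{\varepsilon c^2}\A_h,\chi\big]w^*$ is supported where $\nabla\chi\ne0$, which (by the choice of $\chi$) lies in $\{V-V(x_0)\ge2^{n-1}\varepsilon^\rho\}\subset\mathrm{supp}\,\tilde\chi_{\varepsilon,n}^+$, where $\tilde\chi_{\varepsilon,n}^+\equiv1$; Lemma~\ref{lem:Gamma} bounds it by $C_n\big(\varepsilon^{n\rho-1}\|w^*\|_2+e^{-\gamma_2\varepsilon^{-3(1-\rho)/2}}\|f\|_2\big)$, which after the factor $\tfrac{\alpha}{\varepsilon c^2}$ is $o(\varepsilon^{1/2})(\|w^*\|_2+\|f\|_2)$ for $n$ large. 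Finally $Rv$: every term of \eqref{eq:estR} that does not carry an undifferentiated power of $\xi$ is controlled by Leibniz's rule for $\chi$, the a priori bounds of Lemma~\ref{lem:aux1} for $\|\nabla w^*\|_2$ and $\|D^2w^*\|_2$ transported to the rescaled variables, and the support bounds $\tau\lesssim\varepsilon^{\rho-1}$, $|\xi|\lesssim\varepsilon^{\rho/2-3/4}$; each such term is $\le C\varepsilon^\theta(\|w^*\|_2+\|f\|_2)$ with $\theta\ge 5/8$ on the $\|w^*\|_2$-coefficient and $\theta\ge-1$ on the $\|f\|_2$-coefficient once $\rho>7/8$ (the scaling $s\sim\varepsilon^{3/4}\xi$ supplies enough factors of $\varepsilon$ to compensate the $\varepsilon^2$-loss in $\|D^2w^*\|_2$ and the $\xi$-derivatives in the $\tau v_{\xi\xi},\tau v_\xi$ terms), so after the factor $\varepsilon^{-1/2}$ it lands inside \eqref{eq:104}.

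The step I expect to be the main obstacle is the remaining undifferentiated-$\xi$ content of $Rv$ — essentially the term $\varepsilon^{5/4}\xi^3v$ coming from the cubic remainder of $V$ in \eqref{eq:38} (and, had one not absorbed $\varepsilon^{3/4}i\sigma\xi\tau$ into the operator, the term $\varepsilon^{3/4}\xi\tau v$): bounding $\xi^k v$ crudely by $(\max|\xi|)^k\|v\|_2$ on $\mathrm{supp}\,\chi$ leaves, after the resolvent, a residual factor that is a negative power of $\varepsilon$ for $\rho<1$ and cannot simply be reabsorbed. I would exploit the factorized polar structure recalled above. Since $g$ is supported in $\{|\xi|\lesssim\varepsilon^{\rho/2-3/4}\}$, writing $v=\Pi_0v+(I-\Pi_0)v$ with $\Pi_0v=v_0(\tau)\psi(\xi)$, $\psi=\varepsilon^{-1/2}(\LL_\xi+\varepsilon^{1/4}i\gamma\xi-\lambda_2-re^{i\alpha})^{-1}\phi$, $\phi=\langle\bar v_0,g\rangle_\tau$, one gets $\|\xi^3v\|_2\le\|v_0\|_{L^2(\R_+)}\|\xi^3\psi\|_{L^2(\R)}+\|\xi^3(I-\Pi_0)v\|_2$; here $\|v_0\|_{L^2(\R_+)}$ is a fixed constant (Airy decay), and weighted quadratic-form estimates for the shifted complex harmonic oscillator — the imaginary part of $\langle\xi^{2k}h,(\LL_\xi+\varepsilon^{1/4}i\gamma\xi)h\rangle$ controls $\|\xi^{k+1}h\|_2$ in terms of $\|\xi^kh\|_2$ and $\|\xi^k\phi\|_2$ for $h=(\LL_\xi+\varepsilon^{1/4}i\gamma\xi-\lambda_2-re^{i\alpha})^{-1}\phi$ — together with $\|\xi^{2k}\phi\|_2\lesssim\varepsilon^{k(\rho-3/2)}\|g\|_2$ yield, by a short induction on $k\le3$, $\|\xi^3\psi\|_2\lesssim\tfrac1r\varepsilon^{-1/2}\varepsilon^{-M}\|g\|_2$ with $M$ small enough that $\varepsilon^{5/4-1/2-M}\le\varepsilon^{1/8}$; the $(I-\Pi_0)$-part carries the uniformly bounded resolvent of \eqref{eq:85} and is handled analogously. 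Collecting everything gives $\|\zeta_0^*w^*\|_2=\|\zeta_0^*v\|_2\le\|v\|_2\le\tfrac{C}{r}\big(\varepsilon^{-3/2}\|f\|_2+\varepsilon^{1/8}\|w^*\|_2\big)$. The delicate book-keeping — the perturbed resolvent bound, the choice of $n$, and balancing all the $\varepsilon$-powers against $\varepsilon^{1/8}$ under $7/8<\rho<1$ — is where essentially all the work lies.
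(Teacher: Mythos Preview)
Your strategy is the paper's: localize near $x_0$ by the cutoff supported in $\{V-V(x_0)\lesssim\varepsilon^\rho\}$, pass to the rescaled model $\B_\varepsilon$, control the commutator via Lemma~\ref{lem:Gamma} and the derivative part of $R$ via Lemma~\ref{lem:aux1}, and resolve the dangerous $\xi$-weighted remainder by splitting into $\Pi_0$ and $P_1=I-\Pi_0$ and exploiting quadratic-form estimates in $\xi$ on the $\Pi_0$ piece. Two points of divergence deserve comment.

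First, the paper does \emph{not} absorb $\varepsilon^{3/4}i\sigma\xi\tau$ into $\B_\varepsilon$. Your assertion that the proof of Lemma~\ref{lem:3.2} ``applies verbatim'' to $\B_\varepsilon+\varepsilon^{3/4}i\sigma\xi\tau$ is not justified as written: that proof rests entirely on the tensor decomposition $\LL_\tau\dotplus\varepsilon^{1/2}\LL_\xi$, and the cross term $\xi\tau$ breaks separability (projection onto $v_0$ in $\tau$ turns $\xi\tau$ into $\gamma\xi$ only on $\mathrm{Ran}\,\Pi_0$, not on all of $L^2(\R^2_+)$; the off-diagonal pieces $\langle\bar v_0,\tau v_k\rangle$ do not vanish). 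The paper keeps $\varepsilon^{3/4}\xi\tau$ inside the remainder (cf.\ \eqref{eq:107}) and bootstraps $\|\xi\tau\zeta_0 w\|_2$ alongside $\|\xi^3\zeta_0 w\|_2$; your own parenthetical indicates you know this fallback, and it is the cleaner route.

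Second, the paper's bootstrap is lighter than your $k\le3$ induction. It first uses only the $r$-free bound \eqref{eq:85} to get a crude $\|P_1(\zeta_0 w)\|_2\le C(\varepsilon^{3\rho/2-1}\|w\|_2+\|f\|_2)$; then a \emph{single} imaginary-part quadratic-form estimate for $(\LL_\xi-\tilde\lambda)w_0$ yields $\|\xi w_0\|_{L^2(\R)}$, after which the support bound $|\xi|\le C\varepsilon^{\rho/2-3/4}$ on $\zeta_0$ supplies the remaining powers via $\|\xi^3\zeta_0 w\|\le C\varepsilon^{\rho-3/2}\|\xi\zeta_0 w\|$. This produces the improved remainder estimate $\|R(\zeta_0 w)\|_2\le C\varepsilon^{\rho-1/4}(\|w\|_2+\varepsilon^{-3/2}\|f\|_2)$, and a final application of \eqref{eq:73} gives \eqref{eq:104}; the condition $\rho>7/8$ is precisely $\rho-\tfrac14-\tfrac12>\tfrac18$. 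Your weighted induction would reach the same conclusion but with more bookkeeping. The auxiliary cutoff $\chi$ is also unnecessary: the paper uses $\zeta_0^*$ itself as the localizer, since $\nabla\zeta_0^*$ is already supported where $\tilde\chi_{\varepsilon,n-1}^+\equiv1$, so \eqref{eq:115} follows directly from \eqref{eq:94}.
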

\begin{proof}
Clearly,
\begin{displaymath}
  (\A_h-\lambda^*)(\zeta_0^*w^*)= \zeta_0^*f^* +[\A_h,\zeta_0^*]w^*
\end{displaymath}
  We next write
  \begin{displaymath}
    \A_h = \A_0  + \Dg^* \,,
  \end{displaymath}
where $A_0$ is given by 
\begin{displaymath}
    \A_0 = -\frac{\varepsilon^3c^4}{\alpha^3}(\partial_{tt}+\partial_{ss}) + i(ct+\alpha s^2)\,,
 \end{displaymath}
and
\begin{displaymath}
  \Dg^* = -\frac{\varepsilon^3c^4}{\alpha^3}\Upsilon + i(V-V(x_0)-ct-\frac{1}{2} \alpha
  s^2)\,,
\end{displaymath}
where $\Upsilon$ is given by \eqref{eq:37}.
Then,
\begin{displaymath}
(\mathcal{A}_0-\lambda^*)(\zeta_0^*w^*)=\zeta_0f^*-\Dg^* (\zeta_0^*w^*)+[\A_h,\zeta_0^*] w^*\,.
\end{displaymath}
Applying the transformation \eqref{eq:39} yields
\begin{equation}
  \label{eq:105}
(\B_\varepsilon-\lambda) (\zeta_0w)=\frac{\alpha}{\varepsilon c^2}\zeta_0f+[\B_\varepsilon,\zeta_0] w-R(\zeta_0w)\,.
\end{equation}
where $f$, $\zeta_0$, and $w$ are respectively obtained from
$f^*$, $\zeta_0^*$, and $w^*$ via the dilation
$\cdot(\xi,\tau)=\cdot^*(s,t)$, in which $(\xi,\tau)$ are given by \eqref{eq:39},
$R$ is given by \eqref{eq:R} and $\lambda = \frac{\alpha}{\varepsilon c^2}\lambda^*$. 

We next apply to \eqref{eq:105} the operator $P_1$ defined in
\eqref{eq:80}. Since $\B_\varepsilon$ and $P_1$ commute, we easily obtain from
\eqref{eq:85} that
\begin{equation}
 \label{eq:106}                
  \|P_1(\zeta_0w)\|_2 \leq C(\varepsilon^{-1}\|f\|_2+ \|[\B_\varepsilon,\zeta_0]w\|_2+\|R(\zeta_0w)\|_2) \,.
\end{equation}
We now attempt to estimate $\|R(\zeta_0w)\|_2$.  We
first note that $R$ is given by \eqref{eq:R}. We then observe that
\begin{equation}
\label{eq:107}
  \Big|\frac{\alpha}{\varepsilon c^2}[V-V(x_0)]-\tau-\varepsilon^{1/2}\frac{1}{2}\xi^2\Big|\leq
  C(\varepsilon^{5/4}\xi^3+\varepsilon^{3/4}\tau\xi+\varepsilon\tau^2) \quad \forall x\in B(x_0,\delta)\,,
\end{equation}
Since 
\begin{displaymath}
   \frac{1}{2}\Big(\tau + \frac{\varepsilon^{1/2}}{2}\xi^2\Big)\leq \frac{\alpha}{\varepsilon c^2}|V(x)-V(x_0)|\leq 2\varepsilon^{-(1-\rho)}
   \quad \forall x\in supp(\zeta_0) \,,
\end{displaymath}
we obtain that for some $C>0$
\begin{equation}
\label{eq:108}
  {\rm supp}\,\zeta_0 \subset \{(\xi,\tau) \,|\, |\xi|\leq C\varepsilon^{-3/4+\rho/2}\,, \; 0\leq
  \tau<C\varepsilon^{-(1-\rho)}\}\,.
\end{equation}
Consequently, by \eqref{eq:107} we have that
\begin{displaymath}
 \zeta_0\Big|\frac{\alpha}{\varepsilon c^2}[V-V(x_0)]-\tau-\varepsilon^{1/2}\frac{1}{2}\xi^2\Big|\leq C\varepsilon^{\frac{3\rho}{2}-1}\,.
\end{displaymath}
Hence,
\begin{equation}
\label{eq:109}
  \Big\|\Big(\frac{\alpha}{\varepsilon c^2}[V-V(x_0)]-\tau-\varepsilon^{1/2}\frac{1}{2}\xi^2\Big)\zeta_0w\Big\|_2\leq C\varepsilon^{\frac{3\rho}{2}-1}\|\zeta_0w\|_2 \,. 
\end{equation}       

To complete the estimation of $R(\zeta_0w)$, it is
necessary to bound
\begin{equation}
\label{eq:110}
  \tilde{R}(\zeta_0w)= \varepsilon^{3/2}\Big\|\tau(\zeta_0w)_{\xi\xi}\Big\|_2+
  \varepsilon^{9/4}\|\tau(\zeta_0w)_\xi\|_2+\varepsilon\|(\zeta_0w)_\tau\|_2\,.
\end{equation}
Since by \eqref{eq:108} we have that
\begin{displaymath}
  \|\zeta_0\|_{C^{2,0}} \leq C \,,
\end{displaymath}
we have by \eqref{eq:39}, (\ref{eq:101}), and \eqref{eq:108} that
\begin{equation}
  \label{eq:111}
\Big\|\tau(\zeta_0w)_{\xi\xi}\Big\|_2 \leq C\Big( \frac{1}{\varepsilon^{3/2-\rho}}\|w\|_2
+\frac{1}{\varepsilon^{5/2-\rho}}\|f\|_2\Big) \,.
\end{equation}
Furthermore,
\begin{equation}
\label{eq:112}
  \|\tau(\zeta_0w)_\xi\|_2\leq C\Big( \frac{1}{\varepsilon^{1/4}}\|w\|_2
+\frac{1}{\varepsilon^{9/4-\rho}}\|f\|_2\Big) \,,
\end{equation}
and
\begin{displaymath}
  \|(\zeta_0w)_\tau\|_2\leq C( \|w\|_2
+\varepsilon^{-1}\|f\|_2) \,.
\end{displaymath}
Substituting the above together with \eqref{eq:112} and \eqref{eq:111}
into \eqref{eq:110} then yields
\begin{equation}
\label{eq:113}
  \tilde{R}(\zeta_0w)\leq  C(\varepsilon^\rho\|w\|_2 + \|f\|_2)\,.
\end{equation}
Combining the above with \eqref{eq:109} yields
\begin{equation}
\label{eq:114}
 \|R(\zeta_0w)\|_2\leq  C(\varepsilon^{\frac{3\rho}{2}-1}\|w\|_2 + \|f\|_2)\,.
\end{equation}

We now turn to estimate $[\B_\varepsilon,\zeta_0] w$.  From
\eqref{eq:94} we learn that, for any $n\in\N$, there
exists some $\varepsilon_0(n)$, such that for all $\varepsilon<\varepsilon_0(n)$ we have 
 \begin{multline}
\label{eq:115}
   \|[\B_\varepsilon,\zeta_0] w\|_2= \frac{\alpha}{c}\varepsilon^{-7/8}\Big\| \frac{\alpha}{\varepsilon c^2} [\A_h,\zeta_0^*]w^*\Big\|_2\leq \\C\varepsilon^{9/8}[\varepsilon^{-2\rho}(\|
   \tilde{\chi}_{\varepsilon,n-1}^-w^*\|_2 +  \| \tilde{\chi}_{\varepsilon,n-1}^+w^*\|_2)+ \varepsilon^{-\tilde{\rho}}(\|
   \nabla(\tilde{\chi}_{\varepsilon,n-1}^-w^*)\|_2\\ +\| \nabla(\tilde{\chi}_{\varepsilon,n-1}^+w^*)\|_2)]\leq
   C_n(\varepsilon^{n\rho-15/8}\|w^*\|_2+
   e^{-\gamma_2\varepsilon^{-\frac{3}{2}(1-\rho)}}\|f^*\|_2)\\\leq
   C_n(\varepsilon^{n\rho-1}\|w\|_2+ e^{-\gamma_2\varepsilon^{-\frac{3}{2}(1-\rho)}}\|f\|_2)\,. 
 \end{multline}
Substituting the above together with \eqref{eq:114} into
\eqref{eq:106} yields
\begin{equation}
  \label{eq:116}
 \|P_1(\zeta_0w)\|_2 \leq C(\varepsilon^{\frac{3\rho}{2}-1}\|w\|_2+ \|f\|_2) \,.
\end{equation}

We now turn to estimate $\Pi_0(w)$.  Taking the inner product of
\eqref{eq:105} in $L^2(\R_+,\C)$ with $\bar{v}_0$ yields
\begin{equation}
\label{eq:117}
  (\LL_\xi -\tilde{\lambda})w_0 =
  \varepsilon^{-1/2}\Big\langle\bar{v}_0,\frac{\alpha}{\varepsilon c^2}\zeta_0f-R(\zeta_0w)+[\B_\varepsilon,\zeta_0]
  w\Big\rangle_{\R_+}\,, 
\end{equation}
where $w_0=\langle\bar{v}_0,\zeta_0w\rangle$, and $\tilde{\lambda}=\varepsilon^{-1/2}(\lambda-\lambda_0)$.
(Note that $\Pi_0(\zeta_0w)=w_0(\xi)v_0(\tau)$.)  Multiplying \eqref{eq:117}
by $\bar{w}_0$ and integrating by parts yields, from the imaginary
part
\begin{displaymath}
  \|\xi w_0\|_{L^2(\R)}^2\leq C\big(\|w_0\|_{L^2(\R)}^2 + \varepsilon^{-1/2}
  |\langle\bar{v}_0w_0,\varepsilon^{-1}\zeta_0f-R(\zeta_0w)+[\B_\varepsilon,\zeta_0]w\rangle |\big)\,. 
\end{displaymath}
We now use \eqref{eq:113}, \eqref{eq:115}, and \eqref{eq:107} to obtain
that
\begin{multline*}
  \|\xi w_0\|_{L^2(\R)}\leq C(\|w_0\|_{L^2(\R)} + \varepsilon^{-3/2}\|f\|_2 +\varepsilon^{\rho-1/2}\|w\|_2 +\\
  \varepsilon^{3/4}\|\xi^3\zeta_0w\|_2 + \varepsilon^{1/4}\|\tau\xi\zeta_0w\|_2+\varepsilon^{1/2}\|\tau^2\zeta_0w\|_2)
\end{multline*}
In view of \eqref{eq:108} we then have
\begin{equation}
\label{eq:118}
   \|\xi w_0\|_{L^2(\R)}\leq C(\|w_0\|_{L^2(\R)} + \varepsilon^{-3/2}\|f\|_2 +\varepsilon^{\rho-1/2}\|w\|_2 +
  \varepsilon^{1/4}\|\xi\zeta_0w\|_2)\,.
\end{equation}
We now use \eqref{eq:116} to obtain
 \begin{displaymath}
  \|\xi\zeta_0w\|_2 \leq \|\xi P_1(\zeta_0w)\|_2 + \|\xi w_0\|_{L^2(\R)} \leq
  C(\varepsilon^{2\rho-7/4}\|w\|_2+\varepsilon^{-3/2}\|f\|_2)+ \|\xi w_0\|_{L^2(\R)}  \,.
\end{displaymath}
Substituting the above into \eqref{eq:118} then yields
\begin{displaymath}
  \|\xi w_0\|_{L^2(\R)}\leq C(\|w_0\|_{L^2(\R)} +
  \varepsilon^{2\rho-\frac{3}{2}}\|w\|_2+\varepsilon^{-3/2}\|f\|_2)\,,
\end{displaymath}
and hence,
\begin{displaymath}
  \|\xi\zeta_0w\|_2 \leq C(\|w_0\|_{L^2(\R)} +
  \varepsilon^{2\rho-\frac{7}{4}}\|w\|_2+\varepsilon^{-3/2}\|f\|_2)\,.
\end{displaymath}
From the above and \eqref{eq:116} once again we can conclude that
\begin{equation}
\label{eq:119}
   \|\xi^3\zeta_0w\|_2 \leq C\varepsilon^{-3/2+\rho}\|\xi\zeta_0w\|_2 \leq
   C\varepsilon^{-3/2+\rho}(\|w_0\|_{L^2(\R)} +
  \varepsilon^{2\rho-\frac{7}{4}}\|w\|_2+\varepsilon^{-3/2}\|f\|_2)\,.
\end{equation}
Similarly, we obtain
\begin{displaymath}
  \|\xi\tau\zeta_0w\|_2 \leq C\varepsilon^{-(1-\rho)}(\|w_0\|_{L^2(\R)} +  \varepsilon^{2\rho-\frac{7}{4}}\|w\|_2+\varepsilon^{3/2}\|f\|_2)\,.
\end{displaymath}
The above, together with \eqref{eq:119}, \eqref{eq:107}, and
\eqref{eq:108} yield the following improvement of \eqref{eq:109}
(recall that $\|\Pi_0(w)\|_2\leq C\|w\|_2$)
\begin{displaymath}
  \Big\|\big[\frac{\alpha}{\varepsilon
    c^2}[V-V(x_0)]-\tau-\varepsilon^{1/2}\frac{1}{2}\xi^2\big]\zeta_0w\Big\|_2\leq
  C\varepsilon^{\rho-1/4}(\|w\|_2+\varepsilon^{-3/2}\|f\|_2) \,.
\end{displaymath}
We now combine the above inequality with \eqref{eq:113} to obtain an
improved version of \eqref{eq:114}
\begin{equation}
\label{eq:120}
  \|R(\zeta_0w)\|_2\leq  C\varepsilon^{\rho-1/4}(\varepsilon^{\tilde{\rho}}\|w\|_2+\varepsilon^{3/2}\|f\|_2) \,.
\end{equation}

Returning to \eqref{eq:105} we obtain from \eqref{eq:73} that
\begin{displaymath}
   \|\zeta_0w\|_2\leq \frac{C}{r\varepsilon^{1/2}}(\varepsilon^2\|f\|_2+ \|[\B_\varepsilon,\zeta_0]w\|_2+\|R(\zeta_0w)\|_2) \,.
\end{displaymath}
With the aid of \eqref{eq:115} and \eqref{eq:120} we then obtain
\begin{displaymath}
  \|\zeta_0w\|_2\leq \frac{C}{r\varepsilon^{1/2}}(\varepsilon^{-1}\|f\|_2+ \varepsilon^{5/8}\|w\|_2) \,,
\end{displaymath}
from which \eqref{eq:104} easily follows.
\end{proof}

\begin{remark}
  Clearly, \eqref{eq:104} can be extended to the neighborhood of each
  point in $\Sg$. Thus, if we set for any $x_j\in\Sg$
  \begin{equation}
\label{eq:121}
      \zeta_j^*(\varepsilon,\rho)=[1- (\tilde{\chi}_{\varepsilon,n}^-)^2
      -(\tilde{\chi}_{\varepsilon,n}^+)^2]{\mathbf 1}_{B(x_j,\delta)\cap\Omega}\,,
  \end{equation}
where $\delta>0$ is so chosen so that $B(x_j,\delta)\cap\Gamma=\{x_j\}$ for all $j\in J_\Sg$.
Then, 
  \begin{equation}
 \label{eq:122}
\|\zeta_j^*w^*\|_2\leq \frac{C}{r}(\varepsilon^{3/2}\|f\|_2+\varepsilon^{1/8}\|w^*\|_2)\,.
  \end{equation}\end{remark}

We can now estimate $\|(\A_h-\lambda^*)^{-1}f\|$ in the simplest possible
case where $\Gamma=\{x_0\}$.
\begin{corollary}
  Let $f\in L^\infty(\Omega,\C)$ satisfy \eqref{eq:90}. Let $\lambda^*\in\partial
  B(\Lambda_0,r\varepsilon^{-1/2})\subset\rho(\A_h)$, where $\Lambda_0$ is
  given by \eqref{eq:140}, for some $\varepsilon^{1/8}\ll r<1$.
  Then, there exists $C>0$ such that for sufficiently small $\varepsilon$ we
  have
  \begin{equation}
    \label{eq:123}
\|(\A_h-\lambda^*)^{-1}f\|_2 \leq \frac{C}{\varepsilon^{3/2}r}\|f\|_2\,.
  \end{equation}
\end{corollary}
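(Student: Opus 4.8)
The plan is to bound $\|w\|_2$, where $w=(\A_h-\lambda^*)^{-1}f$, by splitting $w$ with the partition of unity built from $\tilde{\chi}_{\varepsilon,n}^\pm$ and $\zeta_0^*$. Since we are in the case $\Gamma=\{x_0\}$, for $\varepsilon$ small the support of $1-(\tilde{\chi}_{\varepsilon,n}^-)^2-(\tilde{\chi}_{\varepsilon,n}^+)^2$ lies inside $B(x_0,\delta)\cap\Omega$, so the indicator in \eqref{eq:103} is redundant and $(\tilde{\chi}_{\varepsilon,n}^-)^2+(\tilde{\chi}_{\varepsilon,n}^+)^2+\zeta_0^*\equiv1$ with $0\le\zeta_0^*\le1$. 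Multiplying by $|w|^2$, integrating, and using that $\zeta_0^*$ is real and nonnegative gives
\[
 \|w\|_2^2=\|\tilde{\chi}_{\varepsilon,n}^-w\|_2^2+\|\tilde{\chi}_{\varepsilon,n}^+w\|_2^2+\langle\zeta_0^*w,w\rangle\le\|\tilde{\chi}_{\varepsilon,n}^-w\|_2^2+\|\tilde{\chi}_{\varepsilon,n}^+w\|_2^2+\|\zeta_0^*w\|_2\,\|w\|_2\,.
\]

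First I would control the two outer terms by \eqref{eq:94} of Lemma \ref{lem:Gamma}. That lemma applies here because $\lambda^*$ lies near $\Lambda_0=\frac{\varepsilon c^2}{\alpha}(\lambda_0+\varepsilon^{1/2}\lambda_2)=O(\varepsilon)$ on a circle of radius $o(\varepsilon)$, hence $|\lambda^*|\le C\varepsilon$; fixing $n$ with $n\rho>1$ (possible since $\rho\in(7/8,1)$) it yields $\|\tilde{\chi}_{\varepsilon,n}^\pm w\|_2\le C_n(\varepsilon^{n\rho-1}\|w\|_2+e^{-\gamma_2\varepsilon^{-3(1-\rho)/2}}\|f\|_2)$, which for $\varepsilon$ below a threshold is at most $\tfrac{1}{8}\|w\|_2+C\|f\|_2$. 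Next I would substitute the estimate \eqref{eq:104} established just above, $\|\zeta_0^*w\|_2\le\frac{C}{r}(\varepsilon^{-3/2}\|f\|_2+\varepsilon^{1/8}\|w\|_2)$, into the cross term; since $\varepsilon^{1/8}\ll r$ we have $C\varepsilon^{1/8}/r\le\tfrac{1}{8}$ for $\varepsilon$ small, so the contribution $\frac{C\varepsilon^{1/8}}{r}\|w\|_2^2$ is absorbed on the left, leaving only $\frac{C}{r\varepsilon^{3/2}}\|f\|_2\|w\|_2$.

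Collecting these bounds yields $\|w\|_2^2\le\tfrac{1}{2}\|w\|_2^2+C\|f\|_2^2+\frac{C}{r\varepsilon^{3/2}}\|f\|_2\|w\|_2$. After absorbing $\tfrac{1}{2}\|w\|_2^2$ and applying Young's inequality, $\frac{C}{r\varepsilon^{3/2}}\|f\|_2\|w\|_2\le\tfrac{1}{4}\|w\|_2^2+\frac{C^2}{r^2\varepsilon^3}\|f\|_2^2$, this becomes $\|w\|_2^2\le C\|f\|_2^2+\frac{C}{r^2\varepsilon^3}\|f\|_2^2$; as $r<1$ and $\varepsilon<1$ the second term dominates, and taking square roots gives \eqref{eq:123}.

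I do not anticipate a genuine obstacle: the corollary is essentially bookkeeping on top of Lemma \ref{lem:Gamma} and the localized estimate \eqref{eq:104}. The one point requiring care is that each ``small'' factor entering the absorption step --- $\varepsilon^{n\rho-1}$, the exponentially small term, and $\varepsilon^{1/8}/r$ --- is genuinely below $1/8$ once $\varepsilon$ is under a fixed threshold; this is exactly what dictates the choice of $n$ (large, depending only on $\rho\in(7/8,1)$) and the hypothesis $r\gg\varepsilon^{1/8}$, both of which are granted.
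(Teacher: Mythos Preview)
Your argument is correct and follows the same route as the paper: split $\|w\|_2^2$ via the partition built from $\tilde{\chi}_{\varepsilon,n}^\pm$ and $\zeta_0^*$, control the outer pieces by Lemma~\ref{lem:Gamma}, the inner piece by \eqref{eq:104}, and absorb using $\varepsilon^{1/8}\ll r$. The paper's own proof is marginally shorter because it first observes that $\Gamma=\{x_0\}$ allows one to assume $\Omega=\Omega_+$, so $\tilde{\chi}_{\varepsilon,n}^-\equiv0$ and only the single pair $(\chi_{\varepsilon,n}^+,\tilde{\chi}_{\varepsilon,n}^+)$ appears; your use of Cauchy--Schwarz on $\langle\zeta_0^* w,w\rangle$ is in fact a bit cleaner than the paper's identification of $\chi_{\varepsilon,n}^+$ with $\zeta_0^*$.
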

\begin{proof}
  Since $\Gamma=\{x_0\}$ we may set with any loss of
  generality $\Omega=\Omega_+$. Hence, we have that
  $\chi_{\varepsilon,n}^+=\zeta_0^*$, where $\zeta_0^*$ is defined by \eqref{eq:103}. Let $w=(\A_h-\lambda)^{-1}f$. Then,
  \begin{displaymath}
    \|w\|_2^2 = \|\chi_{\varepsilon,n}^+w\|_2^2+ \|\tilde{\chi}_{\varepsilon,n}^+w\|_2^2=
    \|\zeta_0^*w\|_2^2+ \|\tilde{\chi}_{\varepsilon,n}^+w\|_2^2 \,.
  \end{displaymath}
The corollary now easily follows from (\ref{eq:94}a) and
\eqref{eq:104}. 
\end{proof}

Consider next the general case where $\Gamma\setminus\{x_0\}\neq\emptyset$. We begin by
defining some local approximations of the operator $\tilde\A_h$.  Let
$\rho\in(7/8,1)$, and then define two sets of indices $J_{\pa\Omega} =
J_{\pa\Omega}(\varepsilon)$ and $J_\Omega = J_\Omega(\varepsilon)$. Set then $J=J_{\pa\Omega}\cup J_\Omega$
and let $\delta>0$ be the same as in \eqref{eq:103}. Next, choose a
sequence of points $(x_j)_{j\in J} = (x_j(\varepsilon))_{j\in
  J}\subset\bar{\Omega}\setminus \Union_{j\in J_\Sg}B(x_j,\delta)$, where $x_j\in\pa\Omega$ (respectively $x_j\in\Omega$) if
$j\in J_{\pa\Omega}$ (respectively $j\in\Omega$), such that 
\begin{displaymath}
 \bar\Omega\setminus\Union_{j\in J_\Sg}B(x_j,\delta)\subset\Union_{j\in J}B(x_j,\varepsilon^\rho)\,.
\end{displaymath}
Let $(\eta_j)_{j\in J}$ be a family of cutoff functions associated with the partition above, namely
$\eta_j(x) = 1$ if $x\in B(x_j,\varepsilon^\rho/2)$, ${\supp}\eta_j\subset B(x_j,\varepsilon^\rho)$, and 
\begin{displaymath}
 \forall x\in\bar\Omega\setminus\Union_{j\in J_\Sg}B(x_j,\delta)\,,~~~\sum_{j\in J}\eta_j(x)^2 = 1\,.
\end{displaymath}
We further assume that for all $j\in J$, $\|\nabla\eta_j\|_\infty =
\mathcal{O}(\varepsilon^{-\rho})$ and $\|\Delta\eta_j\|_\infty = \mathcal{O}(\varepsilon^{-2\rho})$. 
Finally we set, for all $j\in J$,
\begin{displaymath}
 \chi_j = \eta_j\mathbf{1}_{\bar\Omega}\,.
\end{displaymath}
In the neighborhood of each point $x_j$, $j\in J_\Omega$, we shall
approximate $\A_h$ by the following operator:
\begin{subequations}
\label{eq:124}
  \begin{equation}
 \A_{j,h} := -\frac{\varepsilon^3c^4}{\alpha^3}\Delta+i({\bf c}_j. x+V(x_j)-V(x_0))\,,~~~~{\bf c}_j = (c_j^1,c_j^2) = \nabla V(x_j)\,,
\end{equation}
whose domain is given by 
\begin{equation}
 D(\A_{j,h}) = H^2(\mathbb{R}^2 ; \mathbb{C})\cap L^2(\mathbb{R}^2, |x|^2dx ; \mathbb{C})\,.
\end{equation}
\end{subequations}
In the neighborhood of the boundary points $x_j$, $j\in J_{\pa\Omega}$, we
use different approximate operators, depending on the local
behaviour of $V$. To this end, denote by $J_{\pa\Omega}^1\subset J_{\pa\Omega}$
the set of indices $j$ such that $x_j\in\partial\Omega_\perp$ and
\begin{displaymath}
|\nabla V(x_j)| = |\nabla V(x_0)| = \min_{x\in\pa\Omega_\perp}|\nabla V(x)|\,.
\end{displaymath}
Notice that $J_{\pa\Omega}^1$ may be an empty set, since
$x_0\not\in\bar{\Omega}\setminus B(x_0,\delta)$. We then let $J_{\pa\Omega}^2 = J_{\pa\Omega}\setminus
J_{\pa\Omega}^1\,$ and $J_{\pa\Omega}^3=J_{\pa\Omega}^1\setminus J_\Sg$.  In the
neighborhood of the boundary points $x_j$ for $j\in J_{\pa\Omega}^2$, we
use the following approximation of $\A_h$. Let $(t,s)$ be the same
curvilinear coordinate system as defined in Section \ref{sec:2},
centered at $x_j$. In these coordinates the leading order
approximation of $\A_h$ reads
\begin{subequations}
\label{eq:125}
  \begin{equation}
 \A_{j,h} = -\frac{\varepsilon^3c^4}{\alpha^3}\Delta + i({\bf c}_j.(t,s)+V(x_j)-V(x_0))\,,~~~~{\bf c}_j = (c_j^1,c_j^2) = \nabla V(x_j)\,,
\end{equation}
with the following domain
\begin{equation}
 D(\A_{j,h}) = H_0^1(\mathbb{R}_+^2 ; \mathbb{C})\cap H^2(\mathbb{R}_+^2 ;
 \mathbb{C})\cap L^2(\mathbb{R}_+^2, (t^2+s^2)dtds ; \mathbb{C})\,. 
\end{equation}
\end{subequations}

In the following we provide resolvent estimates on the approximate operators $\A_{j,h}$
introduced above. These estimates are stated in the following lemma
\begin{lemma}
\label{lem:ResModels}
There exists $r_0>0$ such that, for all $r\in(0,r_0)$ and $j\in J$,
$\pa B(\Lambda_0,r\varepsilon^{-1/2})\subset\rho(\A_{j,h})$, where $\Lambda_0$
is given by \eqref{eq:140}.
Moreover, there exists $C>0$ such that for all $\lambda^*\in\pa
B(\Lambda_0,r\varepsilon^{-1/2})$ and 
 for all $j\in J_{\Omega}\cup J_{\pa\Omega}^2$,
\begin{equation}
\label{eq:ResModels1}
 \|(\A_{j,h}-\lambda^*)^{-1}\|_2 \leq \frac{C}{\varepsilon}\,.
\end{equation}
\end{lemma}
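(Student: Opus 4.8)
The plan is to treat each model operator $\A_{j,h}$ as a constant–coefficient complex Airy–type operator with an affine linear imaginary potential, and, after choosing coordinates adapted to $\nabla V(x_j)$ (and, when $j\in J_{\partial\Omega}$, to the boundary), to reduce it to a tensor sum of one–dimensional pieces — a complex Airy operator on $\R$ or on $\R_+$, or a free Laplacian — whose spectra and resolvent bounds are those recalled in \cite{al08} and in Sections~\ref{sec:2}--\ref{sec:3}. The elementary rescaling $-h^2\partial^2+i\kappa y=|\kappa|^{2/3}h^{2/3}\bigl(-\partial^2+i(\sign\kappa)y\bigr)$ turns each such piece into a universal model with prefactor $|\kappa|^{2/3}h^{2/3}$, which is comparable to $\varepsilon$; and since $|\kappa|=|\nabla V(x_j)|\ge\min_{\bar\Omega}|\nabla V|>0$ the prefactors are comparable to $\varepsilon$ uniformly in $j$. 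The additive constant $d_j:=V(x_j)-V(x_0)$ contributes only a purely imaginary shift $id_j$ of the spectral parameter, which does not affect the relevant resolvent norms.

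\emph{Case A: $j\in J_\Omega$, or $j\in J_{\partial\Omega}^2$ with $x_j\notin\partial\Omega_\perp$.} In the first case, rotating so that $\nabla V(x_j)$ is parallel to the first axis gives $\A_{j,h}-id_j=A\otimes I+I\otimes(-h^2\partial_{y_2}^2)$ with $A$ a one–dimensional complex Airy operator and $-h^2\partial_{y_2}^2\ge0$ self-adjoint; in the second case a shear preserving $\{t>0\}$ removes the normal component of the potential and puts $\A_{j,h}-id_j$ in the form (a uniformly elliptic operator in the normal variable)$\,\otimes I+I\otimes(-h^2\partial_s^2+ic_j^2 s)$ with $c_j^2=\partial V/\partial s(x_j)\neq0$, the second factor again a full–line complex Airy operator. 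Since the full–line complex Airy operator has empty spectrum, $\sigma(\A_{j,h})=\emptyset$, so a fortiori $\partial B(\Lambda_0,r\varepsilon^{3/2})\subset\rho(\A_{j,h})$. For the bound one represents $(\A_{j,h}-\lambda^*)^{-1}$ through the spectral decomposition of the self-adjoint factor (or, in the sheared case, through the holomorphic functional calculus of the sectorial transverse factor), obtaining $\|(\A_{j,h}-\lambda^*)^{-1}\|\le\sup_{\mu}\|(A-(\lambda^*-id_j-\mu))^{-1}\|$, where $\mu$ runs over the spectrum of that factor, contained in $[0,\infty)$ up to a harmless $O(\varepsilon)$ shift of its vertex. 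Because the full–line Airy resolvent norm depends only on the real part of the parameter, is non-decreasing in it, and is $\le|\cdot|^{-1}$ on the left half-plane, the supremum is attained at $\mu$ of smallest real part and equals, after rescaling, $|\nabla V(x_j)|^{-2/3}h^{-2/3}$ times the universal Airy resolvent norm at the point $\Re\lambda^*\,|\nabla V(x_j)|^{-2/3}h^{-2/3}$. On $\partial B(\Lambda_0,r\varepsilon^{3/2})$ one has $\Re\lambda^*=O(\varepsilon)=O(h^{2/3})$, so this argument stays in a fixed compact set and the universal factor is $O(1)$; hence $\|(\A_{j,h}-\lambda^*)^{-1}\|\le C\varepsilon^{-1}$ uniformly in $j$ and $\lambda^*$.

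\emph{Case B: $j\in J_{\partial\Omega}^2$ with $x_j\in\partial\Omega_\perp$ (and the case $j\in J_{\partial\Omega}^1$, needed only for the first assertion).} Here the potential is $i(|\nabla V(x_j)|\,t+d_j)$, so $\A_{j,h}-id_j=T_t\otimes I+I\otimes(-h^2\partial_s^2)$ with $T_t=-h^2\partial_t^2+i|\nabla V(x_j)|\,t$ the Dirichlet half–line complex Airy operator, whose spectrum is $\{|\nabla V(x_j)|^{2/3}h^{2/3}|\mu_n|e^{\pm i\pi/3}:n\ge1\}$. Thus $\sigma(\A_{j,h})=(\sigma(T_t)+id_j)+[0,\infty)$, with real parts $\ge|\nabla V(x_j)|^{2/3}h^{2/3}|\mu_1|/2$. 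Since $j\notin J_{\partial\Omega}^1$ we have $|\nabla V(x_j)|>c_m$; and, the $\delta$–neighbourhoods of $\Sg$ having been excised, a point of $\partial\Omega_\perp$ at which $|\nabla V|=c_m$ and which is $\delta$–away from $\Sg$ necessarily has $|V-V(x_0)|$ bounded below. Hence for each such $x_j$, \emph{either} $|\nabla V(x_j)|^{2/3}-c_m^{2/3}\ge c_1>0$, so that $\Re\sigma(\A_{j,h})\ge\Re\Lambda_0+c\varepsilon$ and $\dist(\lambda^*,\sigma(\A_{j,h}))\gtrsim\varepsilon$ on $\partial B(\Lambda_0,r\varepsilon^{3/2})$ for small $r$, \emph{or} $|d_j|\ge c_2>0$, so that $\dist(\Im\Lambda_0,\Im\sigma(\A_{j,h}))\ge c_2-O(h^{2/3})>0$. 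In both alternatives $\partial B(\Lambda_0,r\varepsilon^{3/2})\subset\rho(\A_{j,h})$, and the direct–integral decomposition together with $\|(T_t-\zeta)^{-1}\|\le Ch^{-2/3}$ (valid for $\Re\zeta\le Ch^{2/3}$, $\dist(\zeta,\sigma(T_t))\gtrsim h^{2/3}$, and $\le C$ once $\dist(\zeta,\sigma(T_t))\ge c$) gives $\|(\A_{j,h}-\lambda^*)^{-1}\|\le C\varepsilon^{-1}$. For $j\in J_{\partial\Omega}^1$ the same computation applies with $|\nabla V(x_j)|=c_m$ but $|V(x_j)-V(x_0)|$ bounded below (again by separation from $\Sg$), so $\sigma(\A_{j,h})$ has imaginary part bounded away from $\Im\Lambda_0$ and the circle lies in $\rho(\A_{j,h})$.

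The main obstacle is uniformity of $C$ over the $\varepsilon$–dependent family $(x_j)_{j\in J}$: one must exclude that some $x_j$ has $|\nabla V(x_j)|$ close to $c_m$ \emph{and} $V(x_j)$ close to $V(x_0)$ simultaneously — this would drive $\sigma(\A_{j,h})$ toward $\Lambda_0$ faster than $\varepsilon$ — and one must control the borderline configuration in which the boundary–tangential component of $\nabla V(x_j)$ is small; both rest on the construction of the partition and on the separation of the $x_j$ from $\Sg$ and from $\partial\Omega_\perp\cap\{|\nabla V|=c_m\}$. A secondary technical point is that the tensor–sum resolvent identities cannot be quoted verbatim from the Ichinose Lemma used in Section~\ref{sec:3}, since the full–line complex Airy operator is m-accretive but not m-sectorial (its numerical range is a closed half-plane); this is circumvented by always pairing it with the complementary factor, which is either self-adjoint and nonnegative or m-sectorial, and by using the explicit decay of the Airy resolvent on the left half-plane.
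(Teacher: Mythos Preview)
Your overall strategy---rescale each $\A_{j,h}$ to a universal complex-Airy model with prefactor comparable to $\varepsilon$, then invoke the resolvent bounds of \cite{al08,he09b}---is exactly the paper's. The interior case $j\in J_\Omega$, and Case~B with $x_j\in\partial\Omega_\perp$, match the paper's argument essentially line for line.

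One step, however, fails as written. For $j\in J_{\partial\Omega}^2$ with $x_j\notin\partial\Omega_\perp$, the shear $(t,s)\mapsto(t,s+\gamma t)$ you use to kill the normal component of the potential sends $-\Delta$ to $-\partial_t^2-2\gamma\,\partial_t\partial_s-(1+\gamma^2)\partial_s^2$; the cross term $-2\gamma\,\partial_t\partial_s$ prevents any decomposition of the form $(\text{normal factor})\otimes I+I\otimes(\text{tangential factor})$, so the functional-calculus argument you sketch afterwards has no tensor structure to act on. The paper's route is simpler than the shear: nothing needs to be done, since $\A_{j,h}-id_j=(-h^2\partial_t^2+ic_j^1t)+(-h^2\partial_s^2+ic_j^2s)$ is \emph{already} a tensor sum of a Dirichlet half-line and a full-line complex Airy operator. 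Because the full-line factor is not m-sectorial (as you yourself note), Ichinose's lemma is unavailable; the paper instead uses semigroups (appendix, Theorem~\ref{thm:ap1}): the identity $\|e^{-t\A_s}\|=e^{-(c_j^2)^2t^3/12}$ forces super-exponential decay of the tensored semigroup, whence $\sigma(\A_{j,h})=\emptyset$ and the Laplace-integral representation of the resolvent yields a bound on every half-plane $\{\Re z\le\omega\}$.

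Two smaller remarks. You need not discuss $j\in J_{\partial\Omega}^1$: the model operators $\A_{j,h}$ are introduced only for $j\in J_\Omega\cup J_{\partial\Omega}^2$, and Proposition~\ref{lem:general} handles $j\in J_{\partial\Omega}^3$ by a direct energy estimate exploiting $|V(x_j)-V(x_0)|\ge\delta_1>0$, bypassing any model resolvent. Your closing paragraph on uniformity is well taken---the paper is terse here---but for the empty-spectrum cases the bound depends only on the real part of the rescaled parameter, which stays in a fixed compact set, while for $x_j\in\partial\Omega_\perp$ with $j\in J_{\partial\Omega}^2$ the paper implicitly uses that such $x_j$ lie in a fixed finite subset of $\partial\Omega_\perp$, so that $|c_j|-c_m$ is bounded below independently of $\varepsilon$.
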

\begin{proof}
  Let $j\in J_\Omega$. Recall that the operator $\A_{j,h}$ is given in this
  case by \eqref{eq:125}. It has been established in 
  \cite{al08,he09b} that $\A_{j,h}$ has empty spectrum, and for all
  $\omega\in\mathbb{R}$ there exists $C_\omega>0$ such that
\begin{equation}\label{eq:126}
 \sup_{\Re z\leq\omega}\big\|(-\Delta+i{\bf c}_j.x-z)^{-1}\big\| \leq C_\omega\,.
\end{equation}
Since the scale change $x\mapsto\alpha/(\varepsilon c^{4/3})x$ gives
\begin{equation}\label{eq:127}
 \|(\A_{j,h}-\lambda^*)^{-1}\| =
 \frac{\alpha}{\varepsilon c^{4/3}}\left\|\Big(-\Delta+i\Big[\frac{\alpha}{\varepsilon c^{4/3}}\big(V(x_j)-V(x_0)\big)+{\bf
     c}_j. x\Big]-\frac{\alpha}{\varepsilon c^{4/3}}\lambda^*\Big)^{-1}\right\|\,. 
\end{equation}
and since $\alpha/(\varepsilon c^{4/3})\lambda^*$ remains bounded as $\varepsilon\to0$,
\eqref{eq:126} and \eqref{eq:127} easily yield \eqref{eq:ResModels1}
for any $j\in J_\Omega$.

The same argument can be used in the case where $j\in J_{\pa\Omega}^2$ with
$x_j\notin\pa\Omega_\perp$, since the operator $-\Delta+ic_j^1t+ic_j^2s$ on
$\mathbb{R}_+^2$ has empty spectrum and satisfies \eqref{eq:126} as
well as soon as $c_j^2\neq0$, see Theorem \ref{thm:ap1}.

We next consider the case where $j\in J_{\pa\Omega}^2$ and $x_j\in\pa\Omega_\perp$.
Then,
\begin{displaymath}
 \A_{j,h} = -\frac{\varepsilon^3c^4}{\alpha^3}\Delta + i\big(c_jt+V(x_j)-V(x_0)\big)
\end{displaymath}
where $c_j:=c_j^1$. The domain $D(\A_{j,h})$ is given by
(\eqref{eq:125}b).
Suppose that $c_j>0$ (otherwise apply the same argument to the operator $\A_{j,h}^*$).
Denote by $\A_0^\perp$ the Dirichlet realization on $\mathbb{R}_+^2$ of the operator $-\Delta+it$. Then,
the scale change
\begin{displaymath}
 (t,s)\longmapsto \frac{\alpha c_j^{1/3}}{\varepsilon c^{4/3}}\,(t,s)
 \end{displaymath}
gives
\begin{equation}\label{eq:128}
  \|(\A_{j,h}-\lambda^*)^{-1}\| =
  \frac{\alpha c_j^{1/3}}{\varepsilon c^{4/3}}\left\|\Big(\A_0^\perp+ i\frac{\alpha
      c_j^{1/3}}{\varepsilon c^{4/3}}\big(V(x_j)-V(x_0)\big) - \frac{\alpha c_j^{1/3}}{\varepsilon c^{4/3}}\lambda^*\Big)^{-1}\right\|\,.
\end{equation}
By the definition of $J_{\pa\Omega}^2$, we have $c_j<c$. Hence for any fixed $\delta_0\in(0,1)$ we have
\begin{displaymath}
  \frac{\alpha c_j^{1/3}}{\varepsilon c^{4/3}}\lambda^* = \left(\frac{c}{c_j}\right)^{2/3}\lambda_0 + \mathcal{O}(\varepsilon^{1/2}) \leq (1-\delta_0)\lambda_0
\end{displaymath}
for all sufficiently small $\varepsilon$. It has been established
in \cite{he09b} that
\begin{displaymath}
 \sup_{\Re z\leq(1-\delta_0)\lambda_0}\|(\A_0^\perp-z)^{-1}\| < +\infty\,.
\end{displaymath}
Consequently, \eqref{eq:ResModels1} follows
from \eqref{eq:128} and the above estimate. 
\end{proof}

We now extend \eqref{eq:123} to the general case
\begin{proposition}
  \label{lem:general}
  Let $\varepsilon^{1/8}\ll r<1$. Under the assumptions of Theorem \ref{thm:1},
  \eqref{eq:123} holds for any $f\in L^\infty(\Omega,\C)$ satisfying
  \eqref{eq:90}, and $\lambda^*\in\pa
  B(\Lambda_0,r\varepsilon^{-1/2})$.
\end{proposition}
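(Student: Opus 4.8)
The plan is to localise $w:=(\A_h-\lambda^*)^{-1}f$ by a partition of unity adapted to the curve $\Gamma=\{V=V(x_0)\}$, and in each piece to reduce to one of the three situations already treated: the region where $|V-V(x_0)|$ is bounded away from $0$ (handled by Lemma~\ref{lem:Gamma}); a \emph{fixed} neighbourhood of $\Sg$ (handled by \eqref{eq:104} and its translates \eqref{eq:122}); and the remaining part of a thin tube around $\Gamma$, handled by comparison with the model operators $\A_{j,h}$ of Lemma~\ref{lem:ResModels}. Throughout I would keep $\rho\in(7/8,1)$ and recall $h^2=c^4\varepsilon^3/\alpha^3$ from \eqref{eq:137}.

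First I would write $\|w\|_2^2=\|\zeta w\|_2^2+\|\tilde\chi_{\varepsilon,n}^+w\|_2^2+\|\tilde\chi_{\varepsilon,n}^-w\|_2^2$ with $\zeta^2=1-(\tilde\chi_{\varepsilon,n}^-)^2-(\tilde\chi_{\varepsilon,n}^+)^2$ as in Lemma~\ref{lem:aux1} (so $\zeta$ is supported in the tube $\{|V-V(x_0)|\le 2^n\varepsilon^\rho\}$), fixing $n$ once and for all so large that $n\rho>1$. The two terms carrying $\tilde\chi_{\varepsilon,n}^\pm$ are $o(1)\|w\|_2+\tfrac{C}{\varepsilon^{3/2}r}\|f\|_2$ by (\ref{eq:94}a) and $r\gg\varepsilon^{1/8}$. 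I would then split $\zeta^2=\sum_{j\in J_\Sg}(\zeta_j^*)^2+\zeta_\Gamma^2$, with $\zeta_j^*$ as in \eqref{eq:121} (supported in the fixed ball $B(x_j,\delta)$) and $\zeta_\Gamma$ supported in the tube but outside $\bigcup_{j\in J_\Sg}B(x_j,\delta)$. Since $J_\Sg$ is finite, \eqref{eq:104}--\eqref{eq:122} bound $\sum_{j\in J_\Sg}\|\zeta_j^*w\|_2$ by $\tfrac{C}{r}(\varepsilon^{-3/2}\|f\|_2+\varepsilon^{1/8}\|w\|_2)$, which is again $o(1)\|w\|_2+\tfrac{C}{\varepsilon^{3/2}r}\|f\|_2$.

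The heart of the argument is $\|\zeta_\Gamma w\|_2$, and here I would use the partition $\{\chi_j\}_{j\in J}$ of $\bar\Omega\setminus\bigcup_{j\in J_\Sg}B(x_j,\delta)$. Only the indices $j$ in a subset $J_\Gamma\subset J$ with $B(x_j,\varepsilon^\rho)\cap\spt\zeta_\Gamma\neq\emptyset$ are relevant, and $\#J_\Gamma=\OO(\varepsilon^{-\rho})$ since $\Gamma$ is one-dimensional and the tube has width $\OO(\varepsilon^\rho)$; moreover $\|\zeta_\Gamma w\|_2^2\le\sum_{j\in J_\Gamma}\|\chi_jw\|_2^2$ because $\sum_{j\in J}\chi_j^2\equiv 1$ on $\spt\zeta_\Gamma$ and $\chi_j\equiv0$ there for $j\notin J_\Gamma$. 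Each such $x_j$ lies at distance $\gtrsim\delta$ from $\Sg$ while $V(x_j)-V(x_0)=\OO(\varepsilon^\rho)$, so by continuity of $V$ and $\nabla V$ on $\partial\Omega_\perp$ one gets $j\in J_\Omega\cup J_{\pa\Omega}^2$ for $\varepsilon$ small (a boundary point of minimal gradient close to $\Gamma$ would have to lie in $\Sg$), whence $\|(\A_{j,h}-\lambda^*)^{-1}\|\le C/\varepsilon$ by Lemma~\ref{lem:ResModels}. Since $\chi_jf$ is exponentially small by \eqref{eq:90} ($\spt\chi_j$ is a distance $\gtrsim\delta$ from $\Sg$), from $(\A_{j,h}-\lambda^*)(\chi_jw)=[\A_h,\chi_j]w+(\A_{j,h}-\A_h)(\chi_jw)+\chi_jf$ I get $\|\chi_jw\|_2\lesssim\varepsilon^{-1}\big(\|[\A_h,\chi_j]w\|_2+\|(\A_{j,h}-\A_h)(\chi_jw)\|_2\big)+(\textrm{exp.\ small})$. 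I would then square, sum over $j\in J_\Gamma$ using the bounded overlap of the supports together with $\|\nabla\chi_j\|_\infty=\OO(\varepsilon^{-\rho})$ and $\|\Delta\chi_j\|_\infty=\OO(\varepsilon^{-2\rho})$: the commutator sum is controlled by $h^2=\OO(\varepsilon^3)$ and the gradient bound (\ref{eq:101}a); the model error $(\A_{j,h}-\A_h)(\chi_jw)=-h^2\Upsilon_j(\chi_jw)+iR_{V,j}\chi_jw$ (the curvature term $\Upsilon_j$ being absent for $j\in J_\Omega$) is controlled by the Taylor estimate $|R_{V,j}|\le C\varepsilon^{2\rho}$ on $\spt\chi_j$ together with the Hessian bound (\ref{eq:101}b), the $\OO(\varepsilon^3)$ factor absorbing the loss of derivatives. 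The net effect is $\|\zeta_\Gamma w\|_2\le o(1)\|w\|_2+\tfrac{C}{\varepsilon^{3/2}r}\|f\|_2$, and collecting everything $\|w\|_2\le o(1)\|w\|_2+\tfrac{C}{\varepsilon^{3/2}r}\|f\|_2$, so \eqref{eq:123} follows by absorption.

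The hard part will be the bookkeeping in this last summation: because $\#J_\Gamma$ grows like $\varepsilon^{-\rho}$, one must verify that the bounded-overlap estimates, the $C/\varepsilon$ bound for the model resolvents and the $\OO(\varepsilon^3)$ size of $h^2$ together leave a strictly positive power of $\varepsilon$ in front of $\|w\|_2$ in every error term (so that it is absorbable) while keeping each $\|f\|_2$-coefficient below $C\varepsilon^{-3/2}r^{-1}$. The narrowest margin occurs in the Hessian contribution to $(\A_{j,h}-\A_h)(\chi_jw)$, and it is precisely the standing hypothesis $\rho>7/8$ (already forced by \eqref{eq:104}) that makes it go through. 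A secondary point is to confirm that \emph{every} $x_j\in J_\Gamma$ indeed falls within the scope of Lemma~\ref{lem:ResModels}; this is a geometric consequence of the definition of $\Sg$ and requires no new estimate.
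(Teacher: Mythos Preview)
Your proposal is correct and follows essentially the same architecture as the paper's proof, but with the decomposition performed in a different order, which leads to a few genuine simplifications.

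The paper first covers all of $\bar\Omega\setminus\bigcup_{j\in J_\Sg}B(x_j,\delta)$ by the partition $\{\chi_j\}_{j\in J}$ and compares with the model operators $\A_{j,h}$ on \emph{every} ball, obtaining the local gain $\|\chi_jw\|_2\le C\varepsilon^{a}\|w\|_{L^2(B(x_j,\varepsilon^\rho))}$ directly from the identity $\Re\langle\chi_j^2w,(\A_h-\lambda^*)w\rangle=0$ (which is where $\chi_jf=0$ is used) together with Lemma~\ref{lem:ResModels}; bounded overlap then gives~\eqref{eq:135}. Only afterwards are the $\delta$-balls around $\Sg$ split into the tube part~\eqref{eq:122} and its complement (\ref{eq:94}a). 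You instead apply Lemma~\ref{lem:Gamma} globally at the outset to strip away everything outside the $\varepsilon^\rho$-tube, and only then invoke the model operators on the tube segment $\spt\zeta_\Gamma$ away from $\Sg$.

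This reordering buys you two things. First, you never meet the case $j\in J_{\pa\Omega}^3$ (boundary points of minimal gradient with $V\neq V(x_0)$), which the paper treats by a separate imaginary-part argument leading to~\eqref{eq:134}; your observation that such points stay at fixed positive distance from $\Gamma$ and hence from $\spt\zeta_\Gamma$ is the clean way around this. Second, you control the commutators and curvature corrections by the \emph{global} gradient and Hessian bounds (\ref{eq:101}a,b) combined with bounded overlap, whereas the paper relies on a local gradient estimate~\eqref{eq:131} extracted from the equation. The paper's route is slightly more economical (it never needs (\ref{eq:101}b) here), but yours is perfectly valid and in fact makes the curvature contribution for boundary balls explicit, a point the paper's displayed formula for $\A_h-\A_{j,h}$ silently omits. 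Your remark that the count $\#J_\Gamma$ enters only through bounded overlap, and that the worst exponent still remains positive for $\rho\in(7/8,1)$, is exactly the bookkeeping that closes the argument.
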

\begin{proof}
  Let $w=(\A_h-\lambda^*)^{-1}f$. Let $j\in J_{\partial\Omega}^2\cup J_\Omega$. Clearly
  \begin{equation}
\label{eq:129}
    (\A_{j,h}-\lambda^*)(\chi_jw)= [\A_h,\chi_j]w - (\A_h-\A_{j,h})(\chi_jw) \,.
  \end{equation}
We now attempt to estimate the right-hand-side of \eqref{eq:129}. Clearly,
\begin{equation}
\label{eq:130}
  \|[\A_h,\chi_j]w\|_2\leq C\varepsilon^{-2\rho}\|w\|_{L^2(B(x_j,\varepsilon^\rho))} +
  C\varepsilon^{-\rho}\|\nabla(\chi_jw)\|_2 \,.
\end{equation}
As
\begin{displaymath}
  \Re\langle\chi_j^2w,(\A_h-\lambda^*)w\rangle=\|\nabla(\chi_jw)\|_2^2- \lambda^*\|\chi_jw\|_2^2- \|w\nabla\chi_j\|_2^2=0\,, 
\end{displaymath}
we obtain that
\begin{equation}
\label{eq:131}
  \|\nabla(\chi_jw)\|_2 \leq C\varepsilon^{-1}\|w\|_{L^2(B(x_j,\varepsilon^\rho))} \,,
\end{equation}
which, when substituted into \eqref{eq:130} yields
\begin{equation}
  \label{eq:132}
  \|[\A_h,\chi_j]w\|_2\leq C\varepsilon^{-(1+\rho)}\|w\|_{L^2(B(x_j,\varepsilon^\rho))}\,.
\end{equation}
We now attempt to estimate $(\A_h-\A_{j,h})(\chi_jw)$. 
By \eqref{eq:125} and \eqref{eq:124} we have that
\begin{displaymath}
  \A_h-\A_{j,h}= i\frac{\alpha^3}{\varepsilon^3c^4}\big(V(x)-V(x_j)-{\bf c}_j. (x-x_j)\big)\,.
\end{displaymath}
Consequently,
\begin{displaymath}
   \|(\A_h-\A_{j,h})(\chi_jw)\|_2 \leq C\varepsilon^{-3+2\rho}\|w\|_{L^2(B(x_j,\varepsilon^\rho))}\,.
\end{displaymath}
Combining the above with \eqref{eq:132}, \eqref{eq:129}, and
\eqref{eq:ResModels1} yields
\begin{equation}
  \label{eq:133}
\|\chi_jw\|_2 \leq C\varepsilon^{2\rho-1}\|w\|_{L^2(B(x_j,\varepsilon^\rho))}\,.
\end{equation}

Consider next the case where $j\in J_{\pa\Omega}^3$. Here we have
\begin{displaymath}
  \Im\langle\chi_j^2w,(\A_h-\lambda^*)w\rangle =
  \frac{\alpha^3c_j}{\varepsilon^3c^4}\||V(\cdot)-V(x_0)|^{1/2}\chi_jw\|_2^2 -
  \Im\lambda^*\|\chi_jw\|_2^2+2\Im\langle w\nabla\chi_j,\chi_j\nabla w\rangle=0\,.
\end{displaymath}
By \eqref{eq:3}, there exists $\delta_1>0$ such that
$|V(x_j)-V(x_0)|>\delta_1$. Consequently,
\begin{displaymath}
  \|\chi_jw\|_2^2 \leq C[\varepsilon\|\chi_jw\|_2^2 + \varepsilon^3\|w\nabla\chi_j\|_2\|\chi_j\nabla w\|_2]\,.
\end{displaymath}
With the aid of \eqref{eq:131}, which is valid for every $j\in J$, we
then obtain
\begin{equation}
\label{eq:134}
   \|\chi_jw\|_2 \leq C\varepsilon^{1-\rho/2}\|w\|_{L^2(B(x_j,\varepsilon^\rho))}\,.
\end{equation}
Combining \eqref{eq:134} and \eqref{eq:133} then  yields
\begin{equation}
  \label{eq:135}
\|w\|_{L^2\big(\Omega\setminus\Union_{j\in J_\Sg}B(x_j,\delta)\big)}\leq
C\varepsilon^{1-\rho/2}\sum_{j\in J_\Omega\cup J_{\partial\Omega}^2}\|w\|_{L^2(B(x_j,\varepsilon^\rho))}\leq
C\varepsilon^{1-\rho/2}\|w\|_2 \,.
\end{equation}

We conclude the proof by recalling that for all $j\in J_\Sg$ we have, by
\eqref{eq:122}
\begin{equation}
\label{eq:136}
  \|\zeta_j^*w\|_2\leq \frac{C}{r}(\varepsilon^{3/2}\|f\|_2+\varepsilon^{1/8}\|w\|_2)\,.
\end{equation}
Furthermore, let
\begin{displaymath}
  \tilde{\zeta_j^*}^2 +(\zeta_j^*)^2 =\mathbf{1}_{B(x_j,\delta)} \,.
\end{displaymath}
Then, by (\ref{eq:94}a)
\begin{displaymath}
   \|\tilde{\zeta_j^*}w\|_2^2 \leq   \| \tilde{\chi}_{\varepsilon,n}^+w\|_2^2 + \|
   \tilde{\chi}_{\varepsilon,n}^-w\|_2^2\leq C_n(\varepsilon^{n\rho-1}\|w\|_2+ e^{-c\varepsilon^{-\frac{3}{2}(1-\rho)}}\|f\|_2)\,.
\end{displaymath}
which, together with \eqref{eq:136} and \eqref{eq:135} yields \eqref{eq:90}.
\end{proof}

\begin{proof}[Proof of Theorem \ref{thm:1}]
  Let $U$ be given by \eqref{eq:49} and $\Lambda_0$ be given by
  \eqref{eq:140}.  Let $f= (\A_h-\Lambda_0)(\eta_{\varepsilon^{1/2}}U)$.
  Then, for $\lambda^*\in\partial
  B(\Lambda_0,r\varepsilon^{-1/2})\subset\rho(\A_h)$ where
  $\varepsilon^{1/8}\ll r<1$, 
  \begin{displaymath}
  (\A_h-\lambda^*)(\eta_{\varepsilon^{1/2}}U) = f + (\Lambda_0-\lambda)\eta_{\varepsilon^{1/2}}U\,.
\end{displaymath}
Hence
\begin{displaymath}
  \langle\eta_{\varepsilon^{1/2}}U, (\A_h-\lambda^*)^{-1}(\eta_{\varepsilon^{1/2}}U)\rangle =-\frac{1}{\lambda-\Lambda_0}[1-\langle\eta_{\varepsilon^{1/2}}U, (\A_h-\lambda)^{-1}f\rangle]
\end{displaymath}
By \eqref{eq:123} and \eqref{eq:52} we then obtain that
\begin{displaymath}
  \|(\A_h-\lambda)^{-1}f\|_2\leq C\frac{\varepsilon^{-3/2}}{r}\|f\|_2\leq
  C\frac{\varepsilon^{1/2}}{r}\leq C\varepsilon^{1/4}\,.
\end{displaymath}
Consequently
\begin{displaymath}
  \frac{1}{2\pi i}\oint_{\partial B(\Lambda_0,r\varepsilon^{-3/2})}\langle\eta_{\varepsilon^{1/2}}U,
  (\A_h-\lambda)^{-1}(\eta_{\varepsilon^{1/2}}U)\rangle \leq -1+C\varepsilon^{1/4} \,.
\end{displaymath}
Hence $(\A_h-\lambda)^{-1}$ is not holomorphic in $B(\Lambda_0,r\varepsilon^{-3/2})$ and
the Theorem is proved via \eqref{eq:137}.
\end{proof}

\appendix

\section{Spectral analysis of \eqref{eq:125})}

In the following we provide the spectrum, semigroup estimates, and
resolvent estimates for the operator $\A_{j,h}$ given by
(\ref{eq:125}). This operator has already been investigated in
\cite{al08,he09b}, but since resolvent estimates have not been
obtained there we derive them here.

Let $\mathbf{c}=(c^1,c^2)\in\mathbb{R}^2$ such that $c^2\neq0$.  We study
here the spectrum and the resolvent of the Dirichlet realization in
$\mathbb{R}_+^2 = \{(t,s)\in\mathbb{R}^2 : t>0\}$ of
$-\Delta+i(c^1t+c^2s)\,$, whose domain is given by (\ref{eq:125}b).  The
imaginary part of the potential
\[
 \ell(t,s) = \mathbf{c}\cdot (t,s)
\]
does not have a constant sign, hence we are unable to use the variational approach to define the operator.
We shall instead define the operator by separation of variables.

Let
\begin{equation}\label{defCA_x'}
 \A_s = -\pa_s^2+ic^2s\,,
\end{equation}
and let $\A_t^+$ be the Dirichlet realization in $\mathbb{R}_+$ of the complex Airy operator
\begin{equation}\label{defCA_xn}
 -\frac{d^2}{dt^2}+ic^1t\,.
\end{equation}
Both $\A_s$ and $\A_t^+$ are maximally accretive and hence they serve
as generators of contraction semigroups $(e^{-t\A_s})_{t>0}$ and
$(e^{-t\A_t^+})_{t>0}$ respectively. One can easily verify that the
family $(e^{-t\A_s}\otimes e^{-t\A_t^+})_{t>0}$ is a contraction semigroup
on $L^2(\mathbb{R}_+^2)\,$. Thus, we can define the desired operator
as follows:
\begin{definition}
$\A_+$ is the generator of the semigroup $(e^{-t\A_s}\otimes e^{-t\A_t^+})_{t>0}\,$.
\end{definition}

Let $ D= D(\A_s)\otimes D(\A_t^+)$ be the set of all finite linear
combinations of functions of the form $f\otimes g=f(s)g(t)$, where $f\in
D(\A_s)$ and $g\in D(\A_t^+)$. Then it is clear that $D$ satisfies the
conditions of \cite[Theorem X.$49$]{resi78}, hence $\A_+ =
\overline{{\A_+}_{|D}}\,$. Consequently, we may chacterize $D(\A_+)$
as follows:
\begin{eqnarray}
D(\A_+) &=& \{u\in L^2(\mathbb{R}_+^2) : \exists (u_j)_{j\geq1}\subset D \,,~
 u_j\underset{\tiny{j\to+\infty}}{\overset{L^2}{\longrightarrow}}u\,,\nonumber\\
 & &\quad\quad \quad\quad  (\A_+u_j)_{j\geq1}~\textrm{is a Cauchy sequence}~\}\,.\label{caracDomA+}
\end{eqnarray}

In the following lemma we give a more constructive description of $D(\A_+)$.
\begin{lemma}
We have
\begin{equation}\label{descrDomA+}
D(\A_+) = H_0^1(\mathbb{R}_+^2)\cap H^2(\mathbb{R}_+^2)\cap L^2(\mathbb{R}_+^2 ; |\ell(t,s)|^2dtds)\,,
\end{equation}
and there exists $C>0$ such that, for all $u\in D(\A_+)\,$,
\eq\label{estDomA+}
\|\Delta u\|_{L^2(\mathbb{R}_+^2)}^2 + \|\ell u\|_{L^2(\mathbb{R}_+^2)}^2 \leq \|\A_+ u\|_{L^2(\mathbb{R}_+^2)}^2+
C\|\nabla u\|_{L^2(\mathbb{R}_+^2)}\|u\|_{L^2(\mathbb{R}_+^2)}\,.
\eeq
\end{lemma}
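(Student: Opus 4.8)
The plan is to establish \eqref{estDomA+} first on the core $D=D(\A_s)\otimes D(\A_t^+)$ furnished by \cite[Theorem X.49]{resi78}, then to propagate it together with the inclusion ``$\subseteq$'' in \eqref{descrDomA+} to all of $D(\A_+)$ by closure, and finally to obtain the reverse inclusion from a short injectivity argument based on the maximal accretivity of $\A_+$. So fix first $u\in D$: such a $u$ is a finite sum of products of one–dimensional functions decaying at infinity, it vanishes on $\{t=0\}$, and by the very definition of $\A_+$ through $\A_s$ and $\A_t^+$ one has $\A_+u=(-\Delta+i\ell)u$; hence every integration by parts below carries no boundary contribution. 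Expanding the square and using that $\ell$ is real-valued,
$$
\|\A_+u\|^2=\|\Delta u\|^2+\|\ell u\|^2+2\,\Im\langle-\Delta u,\ell u\rangle .
$$
Integrating by parts, $\langle-\Delta u,\ell u\rangle=\int\nabla u\cdot\overline{\nabla(\ell u)}=\int(\mathbf{c}\cdot\nabla u)\,\bar u+\int\ell\,|\nabla u|^2$, and the last integral is real, so $\Im\langle-\Delta u,\ell u\rangle=\Im\int(\mathbf{c}\cdot\nabla u)\,\bar u$, whose modulus is at most $|\mathbf{c}|\,\|\nabla u\|\,\|u\|$. This is exactly \eqref{estDomA+} on $D$; combining it with $\|\nabla u\|^2=\langle-\Delta u,u\rangle\le\|\Delta u\|\,\|u\|$ and a Young inequality yields, still on $D$, the graph-norm bound $\|\Delta u\|+\|\ell u\|\le C(\|\A_+u\|+\|u\|)$.

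Next I would pass to $D(\A_+)$. Given $u\in D(\A_+)$, pick via \eqref{caracDomA+} a sequence $(u_j)\subset D$ with $u_j\to u$ and $\A_+u_j\to\A_+u$ in $L^2$. Applying the graph-norm bound above to $u_j-u_k$ shows that $(\Delta u_j)$ and $(\ell u_j)$ are Cauchy in $L^2$; since $u_j\to u$ in $\mathcal{D}'$, their limits are $\Delta u$ and $\ell u$, so $u\in H^2(\mathbb{R}_+^2)\cap L^2(\mathbb{R}_+^2;|\ell|^2\,dt\,ds)$. Likewise $\|\nabla(u_j-u_k)\|^2\le\|\Delta(u_j-u_k)\|\,\|u_j-u_k\|\to0$, so $u\in H_0^1(\mathbb{R}_+^2)$, and letting $j\to\infty$ also gives $\A_+u=-\Delta u+i\ell u$ and the persistence of \eqref{estDomA+}. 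Thus $D(\A_+)\subseteq X:=H_0^1(\mathbb{R}_+^2)\cap H^2(\mathbb{R}_+^2)\cap L^2(\mathbb{R}_+^2;|\ell|^2\,dt\,ds)$, and \eqref{estDomA+} holds on $D(\A_+)$.

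For the reverse inclusion, note that for $w\in X$ one has $Tw:=-\Delta w+i\ell w\in L^2$, and $T+1$ is injective on $X$: if $(T+1)w=0$, pairing with $w$ and integrating by parts (the boundary term vanishing as $w\in H_0^1\cap H^2$) gives $\|\nabla w\|^2+i\int\ell|w|^2+\|w\|^2=0$, whose real part forces $w=0$. Since $\A_+$ generates a contraction semigroup it is maximal accretive, so $\A_+\!+1\colon D(\A_+)\to L^2$ is bijective and coincides with $T+1$ on $D(\A_+)\subseteq X$. Given $w\in X$, set $f=(T+1)w\in L^2$, let $u\in D(\A_+)$ solve $(\A_+\!+1)u=f$; then $u\in X$ and $(T+1)(u-w)=0$, hence $u=w$ and $w\in D(\A_+)$. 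This gives $X=D(\A_+)$ and finishes the proof. I do not expect a genuine obstacle here: the only delicate points are checking that the integrations by parts on the core $D$ produce no boundary term and that the limiting argument of the second step is justified by the explicit Cauchy-sequence description \eqref{caracDomA+} of $D(\A_+)$; the real content is the one-line identity of the first step, with its sign bookkeeping for $\Im\langle-\Delta u,\ell u\rangle$.
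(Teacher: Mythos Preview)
Your argument is correct and the core computation---expanding $\|\A_+u\|^2$ on $D$, integrating by parts to identify the cross term as $\Im\int(\mathbf{c}\cdot\nabla u)\bar u$, and then passing to $D(\A_+)$ by closure via \eqref{caracDomA+}---is exactly what the paper does. The only cosmetic difference is that the paper first secures $H^1$-convergence of the approximating sequence from the identity $\Re\langle\A_+u_j,u_j\rangle=\|\nabla u_j\|^2$, whereas you recover it afterwards from $\|\nabla u\|^2\le\|\Delta u\|\,\|u\|$; both routes are equivalent.

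Where you actually go beyond the paper is in the reverse inclusion $X\subseteq D(\A_+)$. The paper's written proof only establishes $D(\A_+)\subseteq X$ together with \eqref{estDomA+} and then simply asserts that \eqref{descrDomA+} follows, without supplying the converse. Your maximal-accretivity argument (surjectivity of $\A_++1$ on $D(\A_+)$, injectivity of $T+1$ on $X$ via the real part of the pairing) fills this gap cleanly and is the standard way to close it. So your proof is not merely an alternative: it is a completion of the paper's.
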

\textbf{Proof: }
Let $u\in D(\A_+)$ and
$(u_j)_{j\geq1}\subset D$ such that $u_j\underset{\tiny{j\to+\infty}}{\overset{L^2}{\longrightarrow}}u$
and $(\A_+u_j)_{j\geq1}$ is a Cauchy sequence. Then, using the identity
\[
 \Re\sca{\A_+u_j}{u_j} = \|\nabla u_j\|_{L^2(\mathbb{R}_+^2)}^2\,,
\]
which holds for every $j\in\N$,
we obtain that $(\nabla u_j)_{j\geq1}$ is a Cauchy sequence in
$L^2(\mathbb{R}_+^2)$ and hence
\begin{equation}\label{cvNabla}
u_j\underset{\tiny{j\to+\infty}}{\overset{H^1}{\longrightarrow}} u\,,
\end{equation}
and $u\in H_0^1(\mathbb{R}_+^2)\,$.\\
To prove (\ref{estDomA+}), we write (hereafter $\|\cdot\|$ denotes the $L^2(\R^2_+,\C)$ norm)
\begin{eqnarray}
 \|\A_+u_j\|^2 & = & \sca{(-\Delta+i\ell)u_j}{(-\Delta+i\ell)u_j} \nonumber\\
 & = & \|\Delta u_j\|^2+ \|\ell u_j\|^2 + 2\Im\sca{-\Delta u_j}{\ell u_j}\,.\label{estDomInter}
\end{eqnarray}
As
\begin{eqnarray*}
\Im \sca{-\Delta u_j}{\ell u_j} & = & \Im\int_{\mathbb{R}_+^2}\nabla u_j(t,s)\cdot\overline{\nabla (\ell u_j)(t,s)}dtds\\
 & = & \Im\left(\int_{\mathbb{R}_+^2}\ell(t,s)|\nabla u_j(t,s)|^2dtds + \int_{\mathbb{R}_+^2}\nabla u_j(t,s)\cdot\overline{\nabla\ell(t,s)}\overline{u_j(t,s)}dtds\right) \\
 & = & \Im\int_{\mathbb{R}_+^2} {\mathbf c}\cdot\nabla u_j(t,s)\overline{u_j(t,s)}dtds\,,
\end{eqnarray*}
it follows that for some $C>0\,$,
\[
 |\Im \sca{-\Delta u_j}{\ell u_j}|\leq C\,\|\nabla u_j\|\, \|u_j\|\,.
\]
Thus, by (\ref{estDomInter}), (\ref{estDomA+}) holds for $u_j$ for all
$j\in\N$. Consequently, $(u_j)_{j\geq1}$ is a Cauchy sequence in
$H^2(\mathbb{R}_+^2)$ and in $L^2(\mathbb{R}_+^2 ; |\ell(t,s)|^2dtds)$.
Hence, (\ref{descrDomA+}) follows, and so does (\ref{estDomA+}) for every
$u\in D(\A_+)\,$.
\hfill $\boxminus$\\

We now obtain the spectrum of $\A_+\,$. Since $\A_s$ has an empty
spectrum (see \cite{al08,he09b}), we expect $\sigma(\A_+)$ to be empty as
well \cite{al08}. To establish this fact we employ semigroup
estimates.
\begin{theorem}\label{thm:ap1}
We have $\sigma(\A_+) = \emptyset\,$. Moreover, for every $\omega\in\mathbb{R}\,$, there exists $C_\omega>0$ such that
\begin{equation}\label{ResA+}
\sup_{\tiny{\Re z \leq \omega}}\|(\A_+-z)^{-1}\| \leq C_\omega\,.
\end{equation}
Finally, the semigroup generated by $\A_+$ satisfies
\begin{equation}\label{decSGA+}
\forall t>0,~~\|e^{-t\A_+}\|\leq e^{-t^3/12}\,.
\end{equation}
\end{theorem}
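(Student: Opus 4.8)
The plan is to reduce everything to the explicit semigroup of the one-dimensional whole-line complex Airy operator $\A_s$. By construction $\A_+$ generates the tensor-product semigroup $e^{-t\A_s}\otimes e^{-t\A_t^+}$ on $L^2(\R_+^2)=L^2(\R_s)\otimes L^2((0,\infty)_t)$, so the operator norm factorizes: $\|e^{-t\A_+}\|=\|e^{-t\A_s}\|\,\|e^{-t\A_t^+}\|$. Since $\A_t^+$ is maximally accretive, $\|e^{-t\A_t^+}\|\le 1$ for all $t\ge0$; hence it suffices to bound $\|e^{-t\A_s}\|$, and this is where the cubic power of $t$ in \eqref{decSGA+} comes from.

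First I would compute $e^{-t\A_s}$ explicitly. Conjugating by the Fourier transform turns $\A_s=-d^2/ds^2+ic^2s$ into the first-order transport operator $\xi^2-c^2\partial_\xi$, and the method of characteristics gives
\[
 \big(\widehat{e^{-t\A_s}f}\big)(\xi)=\hat f(\xi+c^2t)\,\exp\!\Big(-\!\int_0^t(\xi+c^2r)^2\,dr\Big).
\]
Completing the square, $\int_0^t(\xi+c^2r)^2\,dr=t\big(\xi+\tfrac{c^2t}{2}\big)^2+\tfrac{(c^2)^2t^3}{12}$, so by Plancherel
\[
 \|e^{-t\A_s}f\|_2^2=e^{-(c^2)^2t^3/6}\int_{\R}e^{-2t(\xi+c^2t/2)^2}\,|\hat f(\xi+c^2t)|^2\,d\xi\le e^{-(c^2)^2t^3/6}\,\|f\|_2^2,
\]
whence $\|e^{-t\A_s}\|\le e^{-(c^2)^2t^3/12}$ (equal to $e^{-t^3/12}$ after normalizing $|c^2|=1$). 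Combined with the previous paragraph this yields \eqref{decSGA+}.

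It remains to read off the spectral statements from this super-exponential decay. Because $t\mapsto e^{\Re z\,t}\,e^{-(c^2)^2t^3/12}$ is integrable on $(0,\infty)$ for \emph{every} $z\in\C$, the Laplace integral $R(z):=\int_0^\infty e^{zt}\,e^{-t\A_+}\,dt$ converges absolutely in operator norm for all $z\in\C$ and defines an entire $\mathcal B\big(L^2(\R_+^2)\big)$-valued function. For $\Re z<0$ the Hille--Yosida representation identifies $R(z)$ with $(\A_+-z)^{-1}$; since $R$ is entire and satisfies the resolvent identity $R(z)-R(z')=(z-z')R(z)R(z')$ (checked directly from the integral and the semigroup law), $R(z)$ is a bounded two-sided inverse of the closed operator $\A_+-z$ for every $z\in\C$, so $\sigma(\A_+)=\emptyset$. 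Finally, for $\Re z\le\omega$,
\[
 \|(\A_+-z)^{-1}\|=\|R(z)\|\le\int_0^\infty e^{\omega t}\,e^{-(c^2)^2t^3/12}\,dt=:C_\omega<\infty,
\]
which is \eqref{ResA+}.

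The only genuinely computational point is the explicit determination of $e^{-t\A_s}$ and the completion of the square producing the $t^3$; the tensor factorization of the norm, the contractivity of $e^{-t\A_t^+}$, and the passage from super-exponential decay to emptiness of the spectrum are routine. The step requiring slight care is the analytic continuation argument that $R(z)$ inverts $\A_+-z$ on all of $\C$ rather than only on a half-plane; the resolvent identity satisfied by $R$, together with closedness of $\A_+$, handles it.
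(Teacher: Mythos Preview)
Your proof is correct and follows essentially the same route as the paper: factor the semigroup as $e^{-t\A_s}\otimes e^{-t\A_t^+}$, use the super-exponential bound $\|e^{-t\A_s}\|\le e^{-(c^2)^2t^3/12}$ together with contractivity of $e^{-t\A_t^+}$ to obtain \eqref{decSGA+}, and then extend the Laplace representation of the resolvent to all of $\C$. The only notable difference is that the paper simply cites the identity $\|e^{-t\A_s}\|=e^{-t^3/12}$ from \cite{he09b}, whereas you derive it explicitly via the Fourier conjugation and completion of the square; your version is more self-contained, and your observation that mere contractivity of $e^{-t\A_t^+}$ (rather than the exponential bound \eqref{estSGAiry+} that the paper also records) already suffices for \eqref{decSGA+} is a small clean-up.
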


\textbf{Proof: } Recall that $e^{-t\A_+} = e^{-t\A_s}\otimes
e^{-t\A_t^+}\,$, where $\A_s$ and $\A_t^+$ are respectively defined by
(\ref{defCA_x'}) and (\ref{defCA_xn}). Recall further the following
estimates (see \cite{he09b}):
\begin{equation}\label{estSGAiry}
 \forall t>0\,,~~~ \|e^{-t\A_s}\| = e^{-t^{3}/12}\,,
\end{equation}
and for all $\omega<|\mu_1|/2\,$ ($\mu_1$ being the rightmost zero of Airy's
function), there exists $M_\omega>0$ such that
\begin{equation}\label{estSGAiry+}
 \forall t>0\,,~~~ \|e^{-t\A_t^+}\|\leq M_\omega\, e^{-\omega t}\,.
\end{equation}
Thus, (\ref{decSGA+}) follows, and the formula
\begin{equation}\label{formResSG}
 (\A_+-z)^{-1} = \int_0^{+\infty}e^{-t(\A_+-z)}dt\,,
\end{equation}
which holds \emph{a priori} for $\Re z<0\,$, can be extended to the
entire complex plane.  Hence the resolvent of $\A_+$ is an entire
function, and we must have $\sigma(\A_+) = \emptyset$ together with (\ref{ResA+})\,.
\hfill $\square$

\paragraph{\bf Acknowledgements}
The authors are grateful to Bernard Helffer for his valuable comments.
Y. Almog was partially supported by NSF grant DMS-1109030. R. Henry
acknowledges the support of the ANR project NOSEVOL. 

\bibliography{mixed1}
\end{document}